\documentclass[11pt]{article}

\usepackage[margin=1in]{geometry}
\usepackage{amsmath,amssymb,amsthm,mathtools}
\usepackage{array}
\usepackage{tikz}
\usetikzlibrary{decorations.pathreplacing}
\usepackage[hidelinks]{hyperref}
\usepackage{microtype}
\hypersetup{
  pdftitle={Counting and Covering in Nearest-Neighbour Representations of Boolean Functions},
  pdfauthor={Martin Anthony},
  pdfkeywords={nearest-neighbour representation, Boolean function, growth function, covering design, symmetric function, average-case complexity}
}

\newtheorem{theorem}{Theorem}[section]
\newtheorem{lemma}[theorem]{Lemma}
\newtheorem{corollary}[theorem]{Corollary}
\newtheorem{proposition}[theorem]{Proposition}
\theoremstyle{definition}

\theoremstyle{remark}
\newtheorem{remark}[theorem]{Remark}
\newtheorem{example}[theorem]{Example}

\newcommand{\cube}{\{0,1\}^n}
\newcommand{\NN}{\operatorname{NN}}
\newcommand{\BNN}{\operatorname{BNN}}
\newcommand{\NNk}[1]{#1\text{-}\operatorname{NN}}
\newcommand{\kNN}{\NNk{k}}
\newcommand{\wt}[1]{\lvert #1\rvert}
\newcommand{\supp}{\operatorname{supp}}
\newcommand{\dist}{d_H}
\newcommand{\calF}{\mathcal F}
\newcommand{\calK}{\mathcal K}
\newcommand{\Cover}{\mathsf C}
\newcommand{\e}{\mathrm e}
\newcommand{\MAJ}{\mathrm{MAJ}}

\title{Counting and Covering in Nearest-Neighbour Representations\\
of Boolean Functions}
\author{Martin Anthony\\
Department of Mathematics\\
London School of Economics and Political Science, London, UK\\
\texttt{m.anthony@lse.ac.uk}}
\date{}

\begin{document}
\maketitle

\begin{abstract}
We study the number of prototypes needed to represent Boolean functions by
nearest-neighbour classification. There are two distinct settings: the
prototypes may be arbitrary points of Euclidean space, or they may
themselves be required to lie in the Boolean cube.

For unrestricted prototypes, we strengthen a known lower bound for almost
all Boolean functions. The bound applies simultaneously to nearest-neighbour
voting rules with any number of voting neighbours, and substantially narrows
the gap with the known general upper bound. We also obtain a VC-dimension
bound for classes with a bounded number of prototypes, and show that it is
sharp in order in dimensions at least four.

We then study Boolean prototypes, beginning with symmetric threshold
functions. A connection with covering designs expresses the minimum number
of prototypes at every threshold level exactly in terms of a covering number,
and leads to further exact results for related monotone functions, including
disjunctive extensions and a characterisation of when a representation with
a single negative prototype is possible.

For a uniformly random Boolean function, the Boolean nearest-neighbour
complexity, as a proportion of the cube, is asymptotically close either to
one half or to one, with explicit limiting probabilities. In particular,
almost every Boolean function requires at least approximately half as many
prototypes as there are points in the cube, and one half is the largest
proportion for which such a lower bound holds.

Finally, we consider arbitrary symmetric Boolean functions. Their Boolean
nearest-neighbour complexity is closely approximated by a weighted
vertex-cover problem on paths. As a consequence, a uniformly random symmetric
function typically requires prototypes amounting to $11/20$ of the cube.
This is much larger than the upper bounds known when the prototypes are
allowed to lie anywhere in Euclidean space.
\end{abstract}

\medskip
\noindent\textbf{Keywords:} nearest-neighbour representation; Boolean
function; growth function; covering design; symmetric function;
average-case complexity.

\medskip
\noindent\textbf{2020 Mathematics Subject Classification:}
Primary 06E30; secondary 05B40, 60C05, 68Q32.

\clearpage
\tableofcontents
\bigskip

\section{Introduction}
\label{sec:introduction}

A finite set of labelled prototypes classifies the points of the Boolean
cube by proximity, each point taking the label carried by a closest
prototype, and a Boolean function $f$ on $\{0,1\}^n$ is represented when
the classification agrees with it everywhere. The complexity is the least number of prototypes that will
serve: $\NN(f)$ when they may lie anywhere in $\mathbb R^n$, and $\BNN(f)$
when they are confined to $\{0,1\}^n$. Boolean representations use points of
the cube itself, as in the knowledge-compilation setting of \cite{CG25},
rather than auxiliary Euclidean points. Requiring the prototypes to be
Boolean can increase the minimum size exponentially. Every non-constant
linear threshold function satisfies $\NN(f)=2$, whereas some symmetric
threshold functions need exponentially many Boolean prototypes
\cite[Theorem~4]{HLT22}. We examine this discrepancy through counting
arguments and constraints on Boolean prototype sets.

The systematic study of these quantities is due to Hajnal, Liu, and
Tur\'an \cite{HLT06,HLT22}, who established the universal upper bound
$\NN(f)\leq(1+o(1))2^{n+2}/n$ together with the almost-all lower bound
$\kNN(f)>2^{n/2}/n$, the latter simultaneously for every number $k$ of
voting neighbours \cite[Theorems~5, 7, and~8]{HLT22}. Later work has
connected nearest-neighbour representations to Boolean circuits
\cite{DPR25,KSB24C}, examined the complexity and coordinate resolution of
representations of symmetric and threshold functions \cite{KSB23,KSB24N},
and placed Boolean representations within the knowledge-compilation map
\cite{CG25}.

We begin with a growth-function bound for real-prototype rules, valid for all
choices of $k$. It gives a VC-dimension upper bound of order $dm\log m$
for rules with at most $m$ prototypes in $\mathbb R^d$.
For $d\geq4$ and $m\geq3$, this is sharp in order already for $k=1$
(Corollary~\ref{cor:vc-sharp}). The matching lower bound combines a result
of Csik\'os, Mustafa, and Kupavskii~\cite{CMK19} on intersections of
half-spaces with the reflection construction of Gunn and Kuncheva~\cite{GK19}.
For each fixed $d\geq3$, this establishes the $O(m\log m)$ upper bound
envisaged in their concluding discussion.
Applied on the Boolean cube, the growth-function bound raises the almost-all
lower bound to more than $2^{n-1}/n^2$; for
$k=1$, the gap to the universal upper bound is then a factor $O(n)$.
Counting signed subsets of the cube directly gives, in the Boolean case, an
almost-all lower bound $(c_\ast-o(1))2^n$, where $H_2$ is the binary entropy
function, $H_2(c_\ast)+c_\ast=1$, and $c_\ast=0.227092\ldots$, again
simultaneously for every $k$.

Three constraints underlie the structural results. First, if two adjacent cube
points have different function values and one is a prototype, then so is the
other. Second, every shortest path from a point to a closest prototype consists
entirely of points with the same function value. This implies that every
connected component of the 0-points and of the 1-points contains a prototype.
Third, minimal positive points impose coordinatewise restrictions on their
closest positive prototypes, together with further Hamming-weight restrictions
in certain boundary cases.

The covering arguments rely mainly on the boundary restrictions, while the
probabilistic results use the connected-component constraint. Section~\ref{sec:monotone-extensions} combines these ideas in several
monotone constructions.

For $1\leq t\leq n$, let $TH_t^n$ denote the symmetric threshold function,
with $TH_t^n(x)=1$ precisely when $x$ has Hamming weight at least $t$. For
this function, the following covering bound is attained:
\[
 \BNN(TH_t^n)=\Cover(n,2r-1,r)+1,
 \qquad r=\min\{t,n-t+1\},
\]
where $\Cover(n,b,s)$ denotes the least number of $b$-subsets of $[n]$,
where $[n]=\{1,2,\ldots,n\}$, covering every $s$-subset. For threshold
levels near the middle of the cube, it is convenient to put $q=n-2r+1$, so
that $2r-1=n-q$. Complementation turns this covering problem into a uniform
hypergraph problem. For $q\geq1$, let $T(n,n-r,q)$ denote the least number of
$q$-subsets of $[n]$ such that every $(n-r)$-subset contains at least one of
them. Then
\[
 \Cover(n,n-q,r)=T(n,n-r,q).
\]
We use the covering identity for threshold functions to obtain uniform
exponential rates, formulas near the endpoint thresholds, and exact values near
majority. We then adapt the same structural ideas to obtain exact formulas for
several related monotone functions.

For a uniformly random Boolean function, the connected-component constraint
gives a sharp almost-all lower constant of $1/2$. More precisely, the
normalised Boolean nearest-neighbour complexity is asymptotically concentrated
near $1/2$ and $1$, with explicit limiting probabilities.

For symmetric functions, the Boolean prototype problem is reduced to a weighted
vertex-cover problem on paths, up to lower-order error terms. An explicit
family whose labels alternate between runs of three Hamming layers shows that
the uniform approximation error is optimal up to a logarithmic factor. For a uniformly
random symmetric function, we show that the Boolean nearest-neighbour complexity
is asymptotic to $(11/20)2^n$. Known real-prototype constructions require only
linear size with high probability \cite{KSB23}, so Boolean prototypes have
exponential cost in this model.

After the definitions in Section~\ref{sec:definitions},
Section~\ref{sec:capacity} gives the counting bounds, and
Sections~\ref{sec:boolean-structure}--\ref{sec:monotone-extensions} develop
the structural constraints, covering formulas, and monotone extensions.
Sections~\ref{sec:uniform-random-limit} and~\ref{sec:random-symmetric}
treat the two random-function models, with the deterministic approximation
for symmetric functions established in the latter section;
Section~\ref{sec:open-problems} states the open problems.

\section{Definitions, notation, and basic symmetries}
\label{sec:definitions}

Following \cite[Definition~1]{HLT22}, a nearest-neighbour representation of
$f:\cube\to\{0,1\}$ consists of disjoint sets $P,N\subseteq\mathbb R^n$
such that every positive point is strictly closer to some point of $P$ than
to every point of $N$, and every negative point is strictly closer to some
point of $N$ than to every point of $P$.  Its size is $\lvert P\rvert+
\lvert N\rvert$.  The minimum size is $\NN(f)$.  If $P,N\subseteq\cube$,
the representation is Boolean and its minimum size is $\BNN(f)$.  Distances
between points and real prototypes are Euclidean.  Throughout, prototype
collections are sets rather than multisets, so repeated copies of a prototype
are not permitted.

In a Boolean nearest-neighbour representation, every prototype carries its
own function value, since it is uniquely closest to itself. Thus, for a fixed
$f$, the prototype set $S=P\cup N$ determines the labelling, with
$P=S\cap f^{-1}(1)$ and $N=S\cap f^{-1}(0)$. Constant functions satisfy
$\NN(f)=\BNN(f)=1$.

In this nearest-neighbour model, ties are permitted only when all nearest
prototypes have the correct label. Thus a point may have several closest
prototypes, but not closest prototypes of both labels. Section~\ref{sec:capacity}
defines the voting and separation conventions for $k$-nearest-neighbour
representations.

Write $[n]=\{1,2,\ldots,n\}$.  For $x,y\in\cube$, let $\dist(x,y)$ be
Hamming distance and let $\wt{x}$ be Hamming weight.  Write
$\supp(x)=\{i\in[n]:x_i=1\}$. We write $x\leq y$ if $x_i\leq y_i$ for
every $i$; equivalently, $\supp(x)\subseteq\supp(y)$. For cube points,
squared Euclidean distance agrees with
Hamming distance, since
$\lVert x-y\rVert_2^2=\dist(x,y)$ for $x,y\in\cube$.

We call a Boolean function \emph{symmetric} when its value depends on its
point only through the Hamming weight.  The points of weight $j$, for
$0\leq j\leq n$, constitute the $j$th \emph{Hamming layer}.  For an integer $1\leq t\leq n$, the symmetric threshold
function is
\[
  TH_t^n(x)=1 \quad\Longleftrightarrow\quad \wt{x}\geq t.
\]
The majority function is $\MAJ_n=TH^n_{\lceil n/2\rceil}$.

Throughout, $\ln$ denotes the natural logarithm.  We write
\[
 H_2(u)=-u\log_2u-(1-u)\log_2(1-u), \qquad 0\leq u\leq1,
\]
for the binary entropy function, with the usual convention
$0\log_2 0=0$.

For $1\leq s\leq b\leq n$, let the covering number $\Cover(n,b,s)$ be the
least size of a family $\mathcal B\subseteq\binom{[n]}b$ such that every
member of $\binom{[n]}s$ is contained in some member of $\mathcal B$.

The notation is the standard one for covering designs \cite{GKP95}: $n$ is
the ground-set size, $b$ the block size, and $s$ the size of the sets to
be covered. Here $k$ is reserved for the number of neighbours, and $t$ for
the threshold.

The following basic symmetry observation will be used throughout.

\begin{lemma}\label{lem:nn-symmetries}
Let $T$ be a composition of coordinate permutations and coordinate
complementations of $\{0,1\}^n$. For every Boolean function $f$, the transformed
function $f\circ T$, and its negation $1-f\circ T$, have the same
nearest-neighbour and Boolean nearest-neighbour complexities as $f$:
\[
 \NN(f\circ T)=\NN(1-f\circ T)=\NN(f),\qquad
 \BNN(f\circ T)=\BNN(1-f\circ T)=\BNN(f).
\]
In particular, the dual function $f^*(x)=1-f(\mathbf1-x)$ satisfies
$\NN(f^*)=\NN(f)$ and $\BNN(f^*)=\BNN(f)$.
Using the affine extension of $T$, a representation $(P,N)$ of $f$ is
carried to $(T^{-1}P,T^{-1}N)$ for $f\circ T$ and to $(T^{-1}N,T^{-1}P)$
for $1-f\circ T$. The two prototype counts are preserved in the first
case and interchanged in the second.
\end{lemma}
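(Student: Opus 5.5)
The plan is to extend $T$ to an affine isometry of $\mathbb R^n$ and transport representations through it. A coordinate permutation $\pi$ extends to the linear map permuting coordinates. A complementation of coordinate $i$, $x_i\mapsto 1-x_i$, extends to the reflection $x_i\mapsto 1-x_i$ of $\mathbb R^n$, fixing the other coordinates. Each is a Euclidean isometry mapping $\cube$ onto itself. Hence the composition $T$ extends to an affine isometry $\hat T$ of $\mathbb R^n$ with $\hat T(\cube)=\cube$, and $\hat T^{-1}$ has the same properties. This extension step is routine and I do not expect it to be an obstacle.

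Next, let $(P,N)$ represent $f$ and put $P'=\hat T^{-1}P$, $N'=\hat T^{-1}N$. These sets are disjoint and have the same cardinalities, since $\hat T^{-1}$ is a bijection. Let $x\in\cube$ with $(f\circ T)(x)=1$, and set $y=T(x)$, so that $f(y)=1$. Then some $p\in P$ satisfies $\lVert y-p\rVert<\lVert y-q\rVert$ for all $q\in N$. Because $\hat T^{-1}$ is an isometry, $\lVert x-\hat T^{-1}p\rVert<\lVert x-\hat T^{-1}q\rVert$ for every $q\in N$. So $x$ is strictly closer to a point of $P'$ than to every point of $N'$. The negative case is symmetric, so $(P',N')$ represents $f\circ T$.

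For $1-f\circ T$ the roles of positive and negative points swap, and the same computation shows that $(N',P')$ represents it. If $P,N\subseteq\cube$, then $P',N'\subseteq\cube$ as well, so Boolean representations go to Boolean representations. This gives $\NN(f\circ T)\le\NN(f)$ and $\NN(1-f\circ T)\le\NN(f)$, with the same inequalities for $\BNN$. It also gives the stated description of how the prototype counts are preserved or interchanged.

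For the reverse inequalities, I apply the same argument to $g=f\circ T$ with the map $T^{-1}$, which is again such a composition, since $g\circ T^{-1}=f$. For the negation I use $1-(1-f\circ T)\circ T^{-1}=f$. This yields equality in all cases.

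The dual is the special case where $T$ complements every coordinate, $T(x)=\mathbf1-x$: then $f^*=1-f\circ T$. The only point needing care is that the isometry must fix the cube setwise, which is what gives the Boolean statement. That holds by construction, so I expect no genuine obstacle.
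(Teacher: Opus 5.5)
Your proposal is correct and follows the same route as the paper. You extend $T$ to an isometry of $\mathbb R^n$ that fixes the cube setwise, pull the prototypes back by $T^{-1}$, swap the labels for $1-f\circ T$, and get equality by applying the same argument to $T^{-1}$; you just write out the distance comparison and the reverse inequalities that the paper covers under ``reversible''.
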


\begin{proof}
The map $T$ extends to an isometry of $\mathbb R^n$ that maps the Boolean
cube bijectively to itself. Applying $T^{-1}$ to every prototype preserves
its distance from the corresponding transformed point, and therefore
transforms a representation of $f$ into one of $f\circ T$ with the same
size. Swapping the two prototype labels then represents $1-f\circ T$.
These operations preserve the strict cross-label inequalities for ordinary
nearest-neighbour representations. They are reversible, so the corresponding
minimum sizes are equal.
\end{proof}

\section{Counting bounds}
\label{sec:capacity}

When $k=1$, we use the nearest-neighbour definition in Section~\ref{sec:definitions}.
Thus several prototypes may be equally near a point, provided they all carry
its label.

For $2\leq k\leq\lvert P\rvert+\lvert N\rvert$, a point evaluates to one
exactly when at least half of the $k$ closest prototypes belong to $P$. Thus,
for even $k$, a tied vote counts as one. For $k\geq2$, we follow
\cite[Definition~2]{HLT22} in requiring that, at every point, the $k$th
closest prototype is strictly nearer than every prototype outside the closest
$k$. This condition is vacuous when $k$ is the total number of prototypes. We
write $\kNN(f)$ for the minimum size of a $k$-nearest-neighbour representation
of $f$, and $+\infty$ if there is none. When $k=1$, this means $\NN(f)$. For
$k\geq2$, these voting and separation conventions are also those of
\cite{DPR25}. When all prototypes lie in $\cube$, we call the representation
a Boolean $k$-nearest-neighbour representation.

\subsection{Removing distance ties}

For the counting arguments below, we use the same definitions for binary
labellings of finite sets $A\subseteq\mathbb R^d$, with prototypes in
$\mathbb R^d$ and the classification and separation conditions imposed
at every point of $A$.

The counting argument in the next subsection is simplest when, at each point of
$A$, the prototypes are strictly ordered by distance; the following lemma shows
that this may be assumed without loss of generality.

\begin{lemma}\label{lem:perturb}
Let $A\subseteq\mathbb R^d$ be finite and non-empty, and suppose that a binary labelling of $A$ is realised by a nearest-neighbour representation with $m$ real prototypes, or by a $k$-nearest-neighbour representation with $m$ real prototypes for a fixed $k\geq2$. We may perturb the prototypes so that:
\begin{enumerate}
  \item the induced labelling of $A$ is unchanged, and
  \item for every $x\in A$, all $m$ distances from $x$ are distinct, and the
  prototype locations remain distinct.
\end{enumerate}
\end{lemma}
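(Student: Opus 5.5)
The plan is a two-stage perturbation argument: first fix a canonical tie-break that realises the same labelling with strict inequalities, then move the prototypes slightly so that this tie-break becomes the genuine distance order.

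\textbf{Strict slack.} Fix the representation with prototypes $p_1,\dots,p_m$ and write $D_x(i)=\lVert x-p_i\rVert$ for $x\in A$. The conditions are finitely many inequalities between the numbers $D_x(i)$, some strict and some non-strict. The non-strict ones are the ties among nearest prototypes sharing the correct label (for $k=1$), and ties inside or outside the top $k$ (for $k\geq2$).

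\textbf{Tie-breaking order.} At each $x\in A$ I choose a strict total order $\prec_x$ refining the weak order by distance, as follows.
\begin{itemize}
\item For $k=1$, place first some nearest prototype carrying the label of $x$. By hypothesis every nearest prototype carries that label.
\item For $k\geq2$, any refinement works. The separation condition says the $k$-th closest prototype is strictly nearer than all prototypes outside the top $k$. Hence the set of $k$ closest prototypes is already well defined, and ties only permute within it or within its complement.
\end{itemize}
In either case the vote computed from the first $k$ elements of $\prec_x$ equals the original vote.

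\textbf{Realising the order by a perturbation.} I want perturbations $p_i'$ with $\lVert x-p_i'\rVert$ strictly increasing along $\prec_x$ for every $x\in A$, the $p_i'$ distinct, and strict original inequalities preserved. Since $A$ is finite, the strict inequalities survive any sufficiently small perturbation, by continuity of distance. The real work is breaking the ties in the prescribed direction.

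The natural first attempt is to push each $p_i$ slightly toward the points $x$ at which it should win a tie. This fails because different $x$ demand incompatible directions. So I instead argue generically, and I expect this step to be the main obstacle.

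\textbf{Generic perturbation.} For a generic small perturbation, all distances $\lVert x-p_i'\rVert$ for fixed $x$ are distinct. Indeed, each equality $\lVert x-p_i'\rVert=\lVert x-p_j'\rVert$ with $i\neq j$ cuts out a proper algebraic hypersurface in the configuration space $(\mathbb R^d)^m$. Likewise $p_i'=p_j'$ is a proper subvariety. A generic configuration in a small neighbourhood avoids all of these finitely many sets.

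Such a generic perturbation may break ties in the wrong direction, which is why I will not rely on $\prec_x$ being realised exactly. Instead I check that any tie-breaking is harmless.

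\textbf{Any tie-break preserves the labelling.} Since the perturbation is small, the strict inequalities are preserved.
\begin{itemize}
\item For $k=1$, all originally nearest prototypes at $x$ carry the label of $x$. After perturbation the unique nearest prototype is one of them, or else a strict inequality would have been violated. So the label is unchanged.
\item For $k\geq2$, the top-$k$ set at $x$ was strictly separated from the rest. After a small perturbation it is still the top-$k$ set, now with strictly ordered distances, so the vote is unchanged.
\end{itemize}
Separation holds automatically once all distances are distinct.

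This reduces the lemma to two facts: continuity, and the nonemptiness of the complement of finitely many proper algebraic subsets in every open set. The ordering $\prec_x$ is then unnecessary, and the proof becomes a choice of $\varepsilon$ from the minimum strict slack over the finitely many pairs $(x,i,j)$.
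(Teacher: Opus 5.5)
Your proposal is correct and follows essentially the paper's route. A positive slack over the finite set $A$ makes every sufficiently small perturbation preserve the labelling, with the $k=m$ case vacuous, and ties are then removed by a small perturbation that avoids a nowhere-dense bad set. The only difference is cosmetic: the paper moves one prototype at a time off a finite union of spheres centred at points of $A$, whereas you choose a single generic configuration in $(\mathbb R^d)^m$ avoiding the zero sets of $\lVert x-p_i\rVert_2^2-\lVert x-p_j\rVert_2^2$. As you note yourself, the tie-breaking order $\prec_x$ is not needed.
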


\begin{proof}
Consider first a nearest-neighbour representation. If it has no prototype of one of the
two labels, any sufficiently small perturbation preserves its
constant labelling.  Suppose instead that both labels occur among the
prototypes.  By the definition of a nearest-neighbour representation, for every $x\in A$,
the minimum distance from $x$ to a prototype bearing its label is strictly
less than the minimum distance to a prototype bearing the other label.  Since
$A$ is finite, the difference between these two minimum distances has a
positive minimum over $x\in A$.
Every sufficiently small perturbation preserves the represented label at
every point of $A$.  We also take the perturbation neighbourhoods small
enough to be pairwise disjoint, so distinct prototypes remain distinct.

For a fixed $k$-nearest-neighbour representation with $2\leq k<m$, use instead the positive
minimum, over $A$, of the gap between the distance of the $k$th closest
prototype and that of the closest prototype outside the first $k$.  Every
sufficiently small perturbation preserves the set of the $k$ closest
prototypes at every point and hence preserves the represented labelling. When $k=m$, there is no outside prototype to separate from the first $k$; every
prototype participates in the vote, so perturbing the locations does not change
the induced labelling.

We now remove distance ties without changing the induced labelling. Fix one
prototype and hold the others fixed. For each point $x\in A$ and each fixed
prototype $p_j$, the moving prototype must avoid the sphere centred at $x$ with
radius $\lVert x-p_j\rVert_2$; points on this sphere are exactly those at the
same distance from $x$ as $p_j$. Thus the forbidden positions form a finite
union of spheres, whose complement is dense. We may therefore move the
prototype by an arbitrarily small amount that creates no new equality at any
point of $A$, while remaining within the margin that preserves the inequalities
already secured. Repeating this for the finitely many prototypes yields pairwise
distinct prototype distances at every $x\in A$. Since $A$ is non-empty,
the prototype locations are therefore distinct as well.
\end{proof}

\subsection{A growth-function bound}

Fix $d,m\geq2$. For a finite $A\subseteq\mathbb R^d$, let
$\mathcal L_{d,m}(A)$ be the set of binary labellings of $A$ realised by a
$k$-nearest-neighbour representation with at most $m$ real prototypes, for
at least one choice of $k$ (with $k=1$ meaning a nearest-neighbour
representation). Thus $\mathcal L_{d,m}(A)$ is the union of the labelling
classes obtained by allowing $k$ to range from $1$ to $m$. Let
$\Pi_{d,m}(\ell)$ be the maximum, over all $\ell$-point sets
$A\subseteq\mathbb R^d$, of $\lvert\mathcal L_{d,m}(A)\rvert$.

Inaba, Katoh, and Imai bound the number of Euclidean Voronoi partitions of a
finite point set by comparing, for each data point and each pair of prototype
locations, the two squared distances from the data point to those locations
\cite{IKI94}. The proof below uses the same parameter space and comparison
polynomials, with Warren's theorem supplying the sign-pattern bound. Hajnal,
Liu, and Tur\'an instead counted separately for each pair of prototypes
\cite[proof of Theorem~7]{HLT22}.

\begin{lemma}\label{lem:growth}
Let $d,m,\ell\geq2$ and suppose that
$\ell\binom m2\geq dm$.  Then
\[
 \Pi_{d,m}(\ell)
 \leq
 m2^m
 \left(\frac{4\e\ell(m-1)}{d}\right)^{dm}
 \leq
 m2^m
 \left(\frac{4\e\ell m}{d}\right)^{dm}.
\]
\end{lemma}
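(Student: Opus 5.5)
We regard a representation with exactly $m$ prototypes as a point $(p_1,\dots,p_m)\in\mathbb R^{dm}$, together with a label vector in $\{0,1\}^m$ and a choice of $k\in\{1,\dots,m\}$. For a fixed set $A$ with $\lvert A\rvert=\ell$, we consider the $M=\ell\binom m2$ comparison polynomials
\[
 g_{x,i,j}(p_1,\dots,p_m)=\lVert x-p_i\rVert_2^2-\lVert x-p_j\rVert_2^2,\qquad x\in A,\ 1\leq i<j\leq m .
\]
Each is a polynomial of degree at most $2$ in the $N=dm$ real variables. The plan has three parts: reduce to exactly $m$ prototypes with strict distance orders; show that the sign vector of these polynomials, with the labels and $k$, determines the labelling; and count sign vectors by Warren's theorem.

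\emph{Reduction to exactly $m$ prototypes.} Suppose a labelling of $A$ is realised by $m'<m$ prototypes with some $k\leq m'$. We add $m-m'$ further prototypes, with arbitrary labels, at distinct locations very far from $A$ and from the existing prototypes. At every $x\in A$ they are then strictly farther than all original prototypes. The closest-$k$ set is therefore unchanged, and the separation condition still holds. For $k=m'$ the condition now becomes non-vacuous, but it is satisfied because the new points are strictly farther. For $k=1$, the strict cross-label inequality is preserved. So every labelling in $\mathcal L_{d,m}(A)$ is realised with exactly $m$ prototypes and some $k\leq m$.

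By Lemma~\ref{lem:perturb}, we may then assume that at every $x\in A$ the $m$ distances are pairwise distinct. Hence every $g_{x,i,j}$ is nonzero at the parameter point. The sign vector of $(g_{x,i,j})$ determines, for each $x$, the complete distance ordering of the prototype indices. Given also the label vector and $k$, this ordering determines the label assigned to $x$, whether by the nearest prototype or by the vote among the $k$ closest. Consequently
\[
 \lvert\mathcal L_{d,m}(A)\rvert\leq m\cdot 2^m\cdot S,
\]
where $S$ is the number of nonzero sign vectors realised by the $M$ polynomials on $\mathbb R^{dm}$.

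\emph{Warren's theorem.} Warren's theorem states that $M\geq N$ polynomials of degree at most $D$ in $N$ variables realise at most $(4\e DM/N)^N$ nonzero sign patterns. The hypothesis $\ell\binom m2\geq dm$ is exactly $M\geq N$. With $D=2$ we obtain
\[
 S\leq\left(\frac{8\e\,\ell\,m(m-1)/2}{dm}\right)^{dm}=\left(\frac{4\e\ell(m-1)}{d}\right)^{dm}.
\]
This bound does not depend on $A$, so taking the maximum over $A$ gives the first inequality. The second inequality is immediate from $m-1<m$.

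The main point requiring care is the padding step. It is needed so that a single parameter space $\mathbb R^{dm}$ covers all $m'\leq m$ simultaneously, rather than summing bounds over $m'$. In particular, we must check that the separation convention is preserved when $k$ equals the old number of prototypes. The use of Warren's theorem itself is routine once ties are removed via Lemma~\ref{lem:perturb}.
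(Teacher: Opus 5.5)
Your proposal is correct and follows essentially the same route as the paper: you pad to exactly $m$ prototypes with far-away points (checking the case where $k$ equals the old prototype count), remove ties via Lemma~\ref{lem:perturb}, and apply Warren's theorem to the $\ell\binom m2$ quadratic comparison polynomials in $dm$ variables. You then multiply by the $2^m$ label vectors and the $m$ choices of $k$, exactly as the paper does.
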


\begin{proof}
Fix $A=\{x_1,\ldots,x_\ell\}\subseteq\mathbb R^d$. On this finite set, we may
pad a representation having $r<m$ prototypes to exactly $m$ by adding $m-r$
distinct, arbitrarily labelled prototypes, each further from every point of
$A$ than every original prototype. Keep $k$ fixed. If $k<r$, the added
prototypes lie beyond all the original prototypes, so at every point of $A$
the $k$ nearest prototypes and their separation from the rest are unchanged.
If $k=r$, then before padding every original prototype participates in the vote.
After padding, the added prototypes lie strictly further from every point of
$A$ than the original prototypes do, so the $k=r$ closest prototypes are still
precisely the original prototypes. Thus the induced labelling of $A$ is
unchanged.
We may now restrict attention to representations with exactly $m$ prototypes.
To bound how many labellings of $A$ they can realise, choose a fixed ordering
$p_1,\ldots,p_m$ of the prototypes. Choosing an ordering may count the same
representation more than once, but that is harmless because we only need an
upper bound.

For every $x\in A$ and $i<j$, define
\[
 Q_{x,i,j}(p_1,\ldots,p_m)
   =\lVert x-p_i\rVert_2^2-\lVert x-p_j\rVert_2^2.
\]
There are
\[
 M=\ell\binom m2
\]
such polynomials, each of degree $2$ in the $D=dm$ real coordinates
of $p_1,\ldots,p_m$.  The hypothesis says that $M\geq D$.

After padding, Lemma~\ref{lem:perturb} shows that it suffices to consider
prototype configurations
in which, for every $x\in A$, the distances from $x$ to the prototypes are
all distinct. Following the parameterised-class framework of Goldberg and
Jerrum \cite{GJ95}, regard the prototype coordinates as the parameters.
Warren's Theorem~3~\cite{Warren68} bounds the number of consistent assignments
of non-zero signs to the $M$ comparison polynomials of
degree at most $2$ in $D$ variables by
\[
 \left(\frac{8\e M}{D}\right)^D.
\]
In the present case this is
\[
 \left(
   \frac{4\e\ell(m-1)}{d}
 \right)^{dm}
 \leq
 \left(
   \frac{4\e\ell m}{d}
 \right)^{dm}.
\]

Each such assignment determines the complete ordering of the prototypes by
distance at every point of $A$.  Once the $m$ prototype labels and the value
of $k$ are specified, the induced labelling of $A$ is determined.  There are
at most $2^m$ prototype labellings and at most $m$ choices of $k$, proving
the claim.
\end{proof}

Let $V_{d,m}$ denote the VC dimension of the class counted by
$\Pi_{d,m}$, so it is the largest $\ell$ for which
$\Pi_{d,m}(\ell)=2^\ell$. Shattering is understood with the preceding
separation conventions satisfied at every point of the sample.

\begin{corollary}\label{cor:vc-dimension}
For $d,m\geq2$,
\[
 V_{d,m}\leq 2dm\log_2(16\e m^2)<dm(4\log_2m+11).
\]
In particular, $V_{d,m}=O(dm\log_2 m)$, with an absolute implicit constant.
\end{corollary}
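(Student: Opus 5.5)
Take $\ell=V_{d,m}$ and suppose, for contradiction, that $\ell>2dm\log_2(16\e m^2)$. Write $L=2dm\log_2(16\e m^2)$.

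The first step is to make Lemma~\ref{lem:growth} applicable. We need $\ell\binom m2\geq dm$. Since $\ell>L\geq 2dm$ and $\binom m2\geq1$ for $m\geq2$, this holds.

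Shattering means $2^\ell=\Pi_{d,m}(\ell)$. Combining this with Lemma~\ref{lem:growth}, in its weaker form with $m$ in place of $m-1$, gives
\[
 2^\ell\leq m2^m\left(\frac{4\e\ell m}{d}\right)^{dm}.
\]
The prefactor can be absorbed. We have $m2^m\leq 2^{2m}\leq 2^{dm}$, since $m\leq 2^m$ and $d\geq2$. Hence
\[
 \ell\leq dm\log_2\!\left(\frac{8\e\ell m}{d}\right).
\]
Now substitute $u=\ell/(dm)$. The inequality becomes
\[
 u\leq\log_2(8\e m^2u).
\]
The function $u-\log_2(8\e m^2u)$ is increasing for $u\geq 1/\ln 2$. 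It therefore suffices to check that the inequality fails at $u_0=2\log_2(16\e m^2)$, because then it fails for every $u>u_0$, giving the required contradiction.

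This check is the only real computational step. Put $a=16\e m^2$. At $u_0=2\log_2 a$ we need
\[
 2\log_2 a>\log_2\!\left(\tfrac a2\cdot 2\log_2 a\right)=\log_2\bigl(a\log_2 a\bigr),
\]
which is equivalent to $a>\log_2 a$. This is true for every $a>1$.

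Along the way, confirm that $u_0\geq 1/\ln 2$ so that the monotonicity applies; this is immediate since $a\geq 64\e$. These steps give $\ell\leq L$, contradicting the assumption, so $V_{d,m}\leq 2dm\log_2(16\e m^2)$.

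The second, strict inequality is arithmetic:
\[
 2\log_2(16\e m^2)=4\log_2 m+2\log_2(16\e)=4\log_2 m+8+2\log_2\e.
\]
Since $2\log_2\e\approx2.885<3$, this is less than $4\log_2 m+11$.

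The statement $V_{d,m}=O(dm\log_2 m)$ follows because $m\geq2$ gives $\log_2 m\geq1$, so $dm(4\log_2 m+11)\leq 15\,dm\log_2 m$. The implicit constant is therefore absolute.

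The one place requiring care is the fixed-point inequality $u\leq\log_2(8\e m^2u)$. Its constants must be tracked so that the factor $m2^m$ is absorbed using $d\geq2$. Without that assumption, a separate $\log_2 m$ term would appear.
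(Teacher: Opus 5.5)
Your proof is correct and follows essentially the paper's route: apply Lemma~\ref{lem:growth}, absorb $m2^m\le 2^{dm}$ to obtain $2^\ell\le(8\e\ell m/d)^{dm}$, and then invert this inequality. The only difference is cosmetic: the paper inverts via $x\le 2^x$ at $x=\ell/(2dm)$, while you use the monotonicity of $u-\log_2(8\e m^2u)$ together with a check at $u_0=2\log_2(16\e m^2)$, and both give the identical bound.
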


\begin{proof}
Suppose that $\Pi_{d,m}(\ell)=2^\ell$. If $\ell<dm$ there is nothing to
prove. Otherwise $\ell\binom m2\geq dm$, so Lemma~\ref{lem:growth}
applies, and since $m2^m\leq4^m\leq2^{dm}$ it gives
\[
 2^\ell\leq\left(\frac{8\e m}{d}\right)^{dm}\ell^{dm}.
\]
Writing $D=dm$ and using $u\leq2^u$ with $u=\ell/(2D)$,
\[
 \ell^{D}=(2D)^{D}\left(\frac{\ell}{2D}\right)^{D}\leq(2D)^{D}2^{\ell/2}.
\]
Combining the two displays gives $2^{\ell/2}\leq(16\e m^2)^{dm}$, that
is, $\ell\leq2dm\log_2(16\e m^2)$.
\end{proof}

For fixed-size nearest-neighbour rules with $k=1$, Gunn and Kuncheva
\cite{GK19} give a VC-dimension upper bound of order $dm^2\log_2(dm)$ for
$d\geq3$ (and $m\log_2m$ for $d=2$).
Corollary~\ref{cor:vc-dimension} applies to the larger class obtained by
allowing all admissible values of $k$. For each fixed $d\geq3$, it improves
the upper-bound dependence on the prototype budget from
$O(m^2\log_2m)$ to $O(m\log_2m)$. They also give the lower bound
\[
 V^{(1)}_{d,m}\geq dm+2-d
 \qquad(d\geq2,\ m\geq3),
\]
where $V^{(1)}_{d,m}$ denotes the VC dimension for $k=1$. Since the class
counted by $V_{d,m}$ includes the case $k=1$, we have
$V_{d,m}\geq V^{(1)}_{d,m}$. For each fixed $m\geq3$, these bounds give
$V_{d,m}=\Theta(d)$ as $d\to\infty$.

For $d\geq4$, a stronger lower bound follows by combining two known
results.

\begin{corollary}[Sharp VC-dimension order]\label{cor:vc-sharp}
For $d\geq4$ and $m\geq3$,
\[
 V^{(1)}_{d,m}=\Theta(dm\log_2m),
 \qquad
 V_{d,m}=\Theta(dm\log_2m),
\]
with absolute implicit constants. Thus the upper bound of
Corollary~\ref{cor:vc-dimension} is sharp in order already for $k=1$.
\end{corollary}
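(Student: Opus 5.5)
The upper bounds come directly from Corollary~\ref{cor:vc-dimension}: since $V^{(1)}_{d,m}\leq V_{d,m}\leq 2dm\log_2(16\e m^2)=O(dm\log_2 m)$, it remains to prove a matching lower bound $V^{(1)}_{d,m}=\Omega(dm\log_2 m)$ for $d\geq4$, $m\geq3$. As the introduction indicates, the plan is to combine the result of Csik\'os, Mustafa, and Kupavskii~\cite{CMK19} with the reflection construction of Gunn and Kuncheva~\cite{GK19}.

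The first ingredient is the lower bound of~\cite{CMK19}. For $d\geq4$, the class of intersections of $s$ half-spaces in $\mathbb R^d$ has VC dimension $\Omega(ds\log_2 s)$. More precisely, there is a set $A$ of that size such that every subset of $A$ is cut out as $A\cap\bigcap_{i\leq s}H_i$. For small $s$ we fall back on the trivial bound $\Omega(ds)$, which absorbs constants.

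The second ingredient is the reflection construction. Given half-spaces $H_1,\dots,H_s$ whose intersection contains the target positive points, place one positive prototype $p$ deep inside $\bigcap H_i$ and one negative prototype $p_i'$ at the reflection of $p$ in each bounding hyperplane $\partial H_i$. The bisector of $p$ and $p_i'$ is then exactly $\partial H_i$. Hence a point $x$ is strictly closer to $p$ than to every $p_i'$ precisely when $x$ lies in the interior of every $H_i$. This realises the intersection with $m=s+1$ prototypes, and strict inequalities at every point give the required strict separation, so no tie issues arise.

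For this to work, the positive points must lie strictly inside $\bigcap H_i$ and the negative points strictly outside some $H_i$. We may assume this for finite $A$ by shifting each hyperplane slightly so that no point of $A$ lies on it. The intersection must also have nonempty interior so that $p$ exists. If a labelling is realised by a degenerate intersection, we instead perturb so that the intersection is a small full-dimensional region containing the positive points; if there are no positive points, we use a single negative prototype. With $s=m-1\geq2$, these steps give $V^{(1)}_{d,m}\geq c\,d(m-1)\log_2(m-1)\geq c'\,dm\log_2 m$ for $m\geq3$, and hence the same bound for $V_{d,m}$.

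I expect the main obstacle to be handling the sign and degeneracy conventions carefully. Specifically, the shattering in~\cite{CMK19} may allow points on boundaries or empty-interior intersections, whereas our separation convention demands strictness. The fix I would try first is a general-position perturbation of $A$ together with slight outward shifts of the half-spaces, using the finiteness of $A$. The remaining steps amount to checking the exact form of the cited theorem and the constants for small $m$, where $\log_2 m$ is bounded and the Gunn--Kuncheva bound $dm+2-d$ already gives order $dm$.
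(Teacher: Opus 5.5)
Your proposal is correct and follows essentially the same route as the paper. The upper bound comes from Corollary~\ref{cor:vc-dimension}. The lower bound takes the intersections-of-half-spaces result of \cite{CMK19} with $r=m-1$ and realises it through the reflection construction of \cite{GK19}, using slight outward shifts of the hyperplanes to get strict separation and an interior point. Your extra hedges (perturbing $A$, and falling back on $dm+2-d$ for small $m$) are harmless but unnecessary. The outward shift already places every positive sample point in the interior of the intersection, and the cited bound applies for all $r\geq2$.
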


\begin{proof}
Csik\'os, Mustafa, and Kupavskii~\cite[Theorem~1(b)]{CMK19}
show that the class of intersections of at most $r$ half-spaces in
$\mathbb R^d$ has VC dimension $\Omega(dr\log_2r)$, for $r\geq2$,
with an absolute implicit constant. Such intersections can be realised
on finite samples by nearest-neighbour rules using at most $r+1$
prototypes, through the reflection construction of Gunn and
Kuncheva~\cite[Lemma~1 and Appendix~A]{GK19}.

To check the separation convention, move the boundary hyperplanes of the
closed half-spaces slightly outwards so that no sample point lies on a
boundary and the induced labelling is unchanged. Such a choice is possible
because the sample is finite: boundary points move strictly inside, while
each point outside a half-space has a positive distance from its boundary.
For a non-constant
labelling, choose an interior point $p$ of the intersection, label it
positive, and label its reflections in the distinct boundary hyperplanes
negative. A sample point is closer to $p$ than to every reflected
prototype precisely when it lies in the intersection; outside it, some
reflected prototype is strictly closer than $p$. Thus the required
cross-label inequalities hold. Constant labellings require only one
prototype. The construction does not require the intersection to be bounded.

Taking $r=m-1$, for $m\geq3$ we obtain
\[
 V^{(1)}_{d,m}=\Omega\bigl(d(m-1)\log_2(m-1)\bigr)
 =\Omega(dm\log_2m).
\]
Since $V^{(1)}_{d,m}\leq V_{d,m}$, Corollary~\ref{cor:vc-dimension}
supplies the matching upper bound for both classes.
\end{proof}

For each fixed $d\geq3$, Corollary~\ref{cor:vc-dimension} establishes the
$O(m\log m)$ upper bound envisaged in the concluding discussion of Gunn
and Kuncheva~\cite{GK19}.
Corollary~\ref{cor:vc-sharp} determines the order for every fixed $d\geq4$.
The order in dimensions two and three is not determined here. In those
dimensions, intersections of at most $r$ half-spaces have VC dimension
$O(r)$; see the bounds reviewed in~\cite[Section~1.1]{CMK19}.
Thus the reflection argument through a single intersection cannot supply
the logarithmic factor. These bounds do not give upper bounds for general
nearest-neighbour rules, whose positive decision regions need not be convex.

\subsection{Application to Boolean functions}

Let $\calK_{n,m}$ denote the set of functions on $\cube$ that have a
$k$-nearest-neighbour representation with at most $m$ real prototypes for at
least one $k$. Each member of
$\calK_{n,m}$ is a realisable binary labelling of the
fixed $2^n$-point set $\cube$. Since $\Pi_{n,m}(2^n)$ is the maximum number
of realisable labellings over all $2^n$-point subsets of $\mathbb R^n$, we
have
\[
 \lvert\calK_{n,m}\rvert\leq\Pi_{n,m}(2^n).
\]
For $n,m\geq2$, the hypothesis of Lemma~\ref{lem:growth} holds, since
$2^{n-1}\geq n$ and $m-1\geq1$, and hence
\[
 2^n\binom m2=2^{n-1}m(m-1)\geq nm.
\]
Applying the lemma with $d=n$ and $\ell=2^n$ therefore gives
\begin{equation}\label{eq:cube-count}
 \lvert\calK_{n,m}\rvert
 \leq
 m2^m
 \left(\frac{4\e\,2^n m}{n}\right)^{nm}.
\end{equation}

\begin{theorem}[Almost-all lower bound]\label{thm:almost-all}
The proportion of $n$-variable Boolean functions $f$ for which there is some
$k$ satisfying
\[
  \kNN(f)
  \leq
  \frac{2^{n-1}}{n^2}
\]
tends to zero as $n\to\infty$.  Consequently, almost every $f$ satisfies,
simultaneously for every $k$,
\[
  \kNN(f)
  >
  \frac{2^{n-1}}{n^2}.
\]
In particular, almost every $f$ satisfies $\NN(f)>2^{n-1}/n^2$.
\end{theorem}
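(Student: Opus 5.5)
The plan is a straightforward counting argument based on the bound \eqref{eq:cube-count}. Put $m=\lfloor 2^{n-1}/n^2\rfloor$. Every $f$ with $\kNN(f)\leq 2^{n-1}/n^2$ for some $k$ has a $k$-nearest-neighbour representation with at most $m$ prototypes, so it belongs to $\calK_{n,m}$. Here we use that $\calK_{n,m}$ is monotone in $m$ and already includes every choice of $k$. Constant functions need one prototype and are also in this class; if $m<2$ we may simply use $m=2$, since only large $n$ matters. The goal is therefore to show that $\lvert\calK_{n,m}\rvert/2^{2^n}\to0$. The statement ``simultaneously for every $k$'' then follows at once, because the exceptional set was defined as a union over $k$, and the special case $k=1$ gives $\NN(f)>2^{n-1}/n^2$.

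Next we take base-two logarithms in \eqref{eq:cube-count}:
\[
 \log_2\lvert\calK_{n,m}\rvert\leq \log_2 m + m + nm\log_2\!\Bigl(\frac{4\e\,2^n m}{n}\Bigr).
\]
With $m\leq 2^{n-1}/n^2$, the argument of the final logarithm is at most $4\e\,2^{2n-1}/n^3$. Its logarithm is therefore $2n-1+\log_2(4\e)-3\log_2 n = 2n - 3\log_2 n + O(1)$. Multiplying by $nm\leq 2^{n-1}/n$ gives at most
\[
 \frac{2^{n-1}}{n}\bigl(2n-3\log_2 n+O(1)\bigr)=2^n-\frac{3\cdot 2^{n-1}\log_2 n}{n}+O\!\left(\frac{2^n}{n}\right).
\]
The two remaining terms are $m+\log_2 m\leq 2^{n-1}/n^2+n$, which is $o(2^n\log_2 n/n)$. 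Hence $\log_2\lvert\calK_{n,m}\rvert-2^n\leq -(3/2-o(1))\,2^n\log_2 n/n\to-\infty$, and the proportion tends to zero.

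The main obstacle is the bookkeeping in the leading term. The factor $nm\cdot 2n\approx 2^n$ matches the total count $2^{2^n}$ exactly, so the whole argument depends on the negative $-3\log_2 n$ correction, which comes from the $1/n^2$ in the choice of $m$ and the $1/n$ inside the logarithm. This correction must beat the $O(1)$ constant $\log_2(4\e)$ and the $2^m$ factor. I would check the constants carefully: the margin is a factor $\log_2 n$ over $2^n/n$, which is enough. This also shows why $2^{n-1}/n^2$ is essentially the threshold this method can reach.
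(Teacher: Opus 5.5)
Your proposal is correct and follows the paper's proof essentially step for step. You use the same choice $m=\lfloor 2^{n-1}/n^2\rfloor$, the same application of \eqref{eq:cube-count} (valid once $m\geq2$), and the same bookkeeping: the bracket is bounded by $2n-3\log_2 n+O(1)$ and $nm$ by $2^{n-1}/n$, giving $\log_2\lvert\calK_{n,m}\rvert\leq 2^n-\tfrac32\,2^n\log_2 n/n+O(2^n/n)$, and the reduction of the ``for some $k$'' event to membership in $\calK_{n,m}$ via integrality of prototype counts is also as in the paper.
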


\begin{proof}
Put
\[
 m=\left\lfloor\frac{2^{n-1}}{n^2}\right\rfloor.
\]
Since prototype counts are integer-valued, $\calK_{n,m}$ is exactly the set
of functions occurring in the first assertion of the theorem.  For all
sufficiently large $n$, we have $m\geq2$, so \eqref{eq:cube-count} applies.
Taking base-2 logarithms gives
\[
 \log_2\lvert\calK_{n,m}\rvert
 \leq
 \log_2m+m+
 nm\left(n+\log_2m-\log_2n+\log_2(4\e)\right).
\]
Since $m\leq2^{n-1}/n^2$,
\[
 \log_2m\leq n-1-2\log_2n,
 \qquad
 nm\leq\frac{2^{n-1}}n.
\]
The bracket is at most $2n-3\log_2 n+O(1)$, and the terms $\log_2m+m$ are
absorbed into $O(2^n/n)$. We therefore obtain
\[
 \log_2\lvert\calK_{n,m}\rvert
 \leq
 2^n-\frac32\,\frac{2^n\log_2n}{n}
 +O\left(\frac{2^n}{n}\right).
\]
Since there are $2^{2^n}$ Boolean functions on $\cube$, the proportion
having a $k$-nearest-neighbour representation with at most $m$ prototypes
for some $k$ is
\[
 \frac{\lvert\calK_{n,m}\rvert}{2^{2^n}}
 \leq
 2^{-\frac32\,2^n\log_2n/n+O(2^n/n)},
\]
which tends to zero.
\end{proof}

\subsection{A direct count for Boolean prototypes}

The preceding argument accommodates real prototypes. When prototypes are
required to be Boolean, a direct count gives a substantially stronger lower
bound.

\begin{proposition}
\label{prop:boolean-count}
Put $Q=2^n$, and let $F$ be a uniformly random Boolean function on $\cube$.
For every integer $m\geq0$, let $p(m)$ be the probability that $F$ has a
Boolean $k$-nearest-neighbour representation using at most $m$ prototypes
for some $k\geq1$. Then
\[
 p(m)\leq 2^{-Q}\sum_{j=1}^m j\binom Qj2^j.
\]
\end{proposition}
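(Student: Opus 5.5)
A union bound over the data that determine a Boolean $k$-nearest-neighbour classifier will give the estimate. Such a classifier on $\cube$ is determined by three things: the prototype set $S\subseteq\cube$ with $\lvert S\rvert=j$, a labelling of $S$ (the split into $P$ and $N$), and the number $k$ of voting neighbours, where $1\leq k\leq j$. Once $(S,\text{labels},k)$ is fixed, the value at every point of the cube is determined, provided the separation convention holds. Hence at most one Boolean function is represented by that triple. The case $j=0$ contributes nothing, since a representation needs at least one prototype ($P\cup N$ is non-empty for any function on a non-empty cube). So the sum starts at $j=1$.

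The steps are as follows. First, I check that for fixed $S$, labels and $k$ the induced classification is a well-defined single function, or else no function at all. For $k=1$, each point either has all of its nearest prototypes sharing one label, which is then forced, or it does not, and then the triple represents nothing. For $k\geq2$, the separation condition makes the set of $k$ closest prototypes unique, and the voting rule, with ties counting as one, then fixes the value. Second, I count the triples. There are $\binom Qj$ choices of $S$, at most $2^j$ labellings, and at most $j$ values of $k$. This gives at most $j\binom Qj2^j$ functions for each size $j$. Third, the event that $F$ has some representation of size at most $m$ is the union, over $j\leq m$ and all triples of size $j$, of the events that $F$ equals the function represented by that triple. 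Each of these events has probability $2^{-Q}$ under the uniform distribution, so the union bound yields the stated inequality.

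I do not expect a real obstacle. The only subtle point is to confirm that allowing every $k$ costs at most a factor $j$, and not a factor depending on $m$. This holds because $k$ cannot exceed the number $j$ of prototypes actually used. For $k=1$ the labels of $S$ are forced by $F$, since each prototype carries its own value. Counting $2^j$ labellings is therefore wasteful but harmless.
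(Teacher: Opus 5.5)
Your proof is correct and follows the paper's own argument essentially step for step. Each signed prototype set of size $j$, together with a choice of $k\in\{1,\ldots,j\}$, determines at most one function: for $k=1$ because mixed-label ties are forbidden, and for $k\geq2$ because of the separation convention. Counting the $j\binom Qj2^j$ such triples, summing from $j=1$ (the empty set represents nothing), and dividing by $2^Q$ gives the bound.
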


\begin{proof}
There are $\binom Qj2^j$ signed sets of $j$ Boolean prototypes. Fix such a
set with $j\geq1$. For $k=1$, at each point either all the nearest
prototypes have the same label, in which case that label is forced, or
nearest prototypes of both labels occur, in which case the strict
cross-label inequalities fail. Thus the signed set represents at most one
Boolean function when $k=1$.

For each fixed $k=1,\ldots,j$, the signed set determines at most one Boolean
function. Indeed, either the $k$-nearest-neighbour rule is well defined at
every cube point, in which case the induced label at every point is determined,
or the required separation condition fails somewhere, in which case this signed
set and this value of $k$ represent no function. Consequently, the signed set
represents at most $j$ Boolean functions as $k$ ranges from $1$ to $j$.
Summing over the signed sets and dividing by the $2^Q$ Boolean functions
gives the result. The sum begins at $j=1$ because the empty signed set
represents no function.
\end{proof}

\begin{corollary}
\label{cor:almost-all-BNN}
For $0<c<1/2$, put $\delta(c)=1-H_2(c)-c$ and $m=\lfloor cQ\rfloor$.
Then
\[
 p(m)\leq m\,2^{-Q\delta(c)}.
\]
Let $c_\ast=0.227092\ldots$ denote the unique root in $(0,1/2)$ of
\[
 H_2(c_\ast)+c_\ast=1.
\]
For every $\varepsilon>0$, with probability tending to one as $n\to\infty$,
every Boolean $k$-nearest-neighbour representation of $F$, for every $k\geq1$,
uses more than $(c_\ast-\varepsilon)2^n$ prototypes.
\end{corollary}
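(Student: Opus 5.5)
The plan is to start from Proposition~\ref{prop:boolean-count} and bound the sum by its largest term times the number of terms. For $1\leq j\leq m=\lfloor cQ\rfloor$ with $c<1/2$, the binomial coefficients $\binom Qj$ increase in $j$, and so do $2^j$ and $j$. Every summand is therefore at most $m\binom Qm2^m$. This gives
\[
 p(m)\leq 2^{-Q}\,m\cdot m\binom Qm 2^m .
\]
This has an extra factor of $m$ compared with the claimed bound. To remove it I would not bound $j$ by $m$ termwise. Instead I would use the standard entropy estimate for partial sums of binomials,
\[
 \sum_{j\leq cQ}\binom Qj\leq 2^{QH_2(c)} \qquad (c\leq 1/2),
\]
together with $2^j\leq 2^{cQ}$ and $j\leq m$. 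Together these give
\[
 \sum_{j=1}^m j\binom Qj 2^j\leq m\,2^{cQ}\sum_{j\leq cQ}\binom Qj\leq m\,2^{Q(H_2(c)+c)} .
\]
Multiplying by $2^{-Q}$ yields $p(m)\leq m2^{-Q\delta(c)}$, as claimed. The entropy estimate follows from $1=(c+(1-c))^Q\geq\sum_{j\leq cQ}\binom Qj c^j(1-c)^{Q-j}$. For $c\leq1/2$ the weight $c^j(1-c)^{Q-j}$ is decreasing in $j$, so it is at least $c^{cQ}(1-c)^{(1-c)Q}=2^{-QH_2(c)}$ on this range. Non-integer $cQ$ causes no trouble, since $j\leq cQ$ still holds.

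For the second part, I would first check that $g(c)=H_2(c)+c$ is continuous and strictly increasing on $(0,1/2]$. This holds because $H_2$ is increasing there. Moreover $g(0^+)=0<1$ and $g(1/2)=3/2>1$. Hence $g(c)=1$ has a unique root $c_\ast$ in $(0,1/2)$, and $\delta(c)=1-g(c)>0$ for every $0<c<c_\ast$. The numerical value $0.227092\ldots$ only needs to be confirmed by evaluating $g$ at two nearby points.

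Now fix $\varepsilon>0$. We may assume $\varepsilon<c_\ast$, since otherwise the statement is trivial. Put $c=c_\ast-\varepsilon$ and $m=\lfloor cQ\rfloor$. The event that some Boolean $k$-nearest-neighbour representation, for some $k$, uses at most $(c_\ast-\varepsilon)2^n$ prototypes is exactly the event counted by $p(m)$, because prototype counts are integers. By the first part,
\[
 p(m)\leq cQ\,2^{-Q\delta(c)}=2^{\,n+\log_2 c-\delta(c)2^n},
\]
which tends to zero because $\delta(c)>0$ is fixed. So with probability tending to one every such representation uses more than $(c_\ast-\varepsilon)2^n$ prototypes, simultaneously for all $k$. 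The uniformity in $k$ costs nothing, since it was already built into Proposition~\ref{prop:boolean-count}.

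The main obstacle is getting exactly the stated bound $m2^{-Q\delta(c)}$ without picking up a polynomial loss. A crude maximum-term argument costs an extra factor of $m$, which is harmless for the asymptotic conclusion but does not match the stated inequality. I would rely on the partial-sum entropy bound above, which is where the hypothesis $c<1/2$ is used.
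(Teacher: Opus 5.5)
Your proposal is correct and follows essentially the same route as the paper: the partial-sum entropy bound $\sum_{j\leq cQ}\binom Qj\leq 2^{QH_2(c)}$ combined with $j\leq m$ and $2^j\leq 2^{cQ}$, then strict monotonicity of $H_2(c)+c$ on $(0,1/2]$ and the choice $c=c_\ast-\varepsilon$. The only addition is that you prove the entropy estimate via $(c+(1-c))^Q$, which the paper simply cites as standard.
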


\begin{proof}
The standard entropy bound for binomial sums gives
\[
 \sum_{j=0}^{\lfloor cQ\rfloor}\binom Qj\leq2^{QH_2(c)}
 \qquad(0\leq c\leq1/2).
\]
For $1\leq j\leq m$, we have $j\leq m$ and $2^j\leq2^{cQ}$, so
Proposition~\ref{prop:boolean-count} yields
\[
 p(m)
 \leq m\,2^{-Q+cQ}\sum_{j=0}^m\binom Qj
 \leq m\,2^{-Q\delta(c)}.
\]
The function $c\mapsto H_2(c)+c$ is continuous and strictly increasing on
$[0,1/2]$, with endpoint values $0$ and $3/2$, so $c_\ast$ exists and is
unique. We may assume that $0<\varepsilon<c_\ast$, since otherwise the
conclusion is immediate. Take $c=c_\ast-\varepsilon$. Then $\delta(c)>0$,
and the displayed bound tends to zero because $m\leq Q$.
\end{proof}

\section{Structural constraints}
\label{sec:boolean-structure}

From this section onwards, ``representation'' means a Boolean
nearest-neighbour representation in the sense of the first paragraph of
Section~\ref{sec:definitions}. Voting rules with $k\geq2$ reappear only where
explicitly stated.

\subsection{Bichromatic components and monochromatic shortest paths}

Let $Q_n$ denote the graph of the Boolean cube. For $f:\cube\to\{0,1\}$,
let $H_f$ be the graph with vertex set $\cube$ whose edges are precisely
the cube edges $xy$ for which $f(x)\ne f(y)$. A connected component of
$H_f$ will be called a \emph{bichromatic component}. Here ``bichromatic''
refers to the edges: every edge of $H_f$ joins vertices with different
function values.

A set of cube vertices is called \emph{monochromatic} if $f$ takes the
same value at every vertex in the set. The connected components of the
subgraphs induced by $f^{-1}(0)$ and $f^{-1}(1)$ are therefore called
monochromatic components.

\begin{proposition}[Closure across a change of label]\label{prop:edge-closure}
If $(P,N)$ is a Boolean nearest-neighbour representation of $f$, its
prototype set $S=P\cup N$ is a union of connected components of $H_f$.
\end{proposition}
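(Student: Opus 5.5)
It suffices to prove the local statement announced in the introduction. If $xy$ is an edge of $H_f$, so that $\dist(x,y)=1$ and $f(x)\neq f(y)$, and $x\in S$, then $y\in S$. Granting this, a path argument finishes the proof. Take any bichromatic component $C$ that meets $S$ in a vertex $x$. Every vertex $z$ of $C$ is joined to $x$ by a path in $H_f$, and applying the local statement edge by edge along that path puts $z$ in $S$. Hence $C\subseteq S$. Since every vertex of the cube lies in exactly one bichromatic component (possibly a singleton), $S$ is the union of the components it meets.

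To prove the local statement, suppose for contradiction that $x\in S$ and $y\notin S$. By Lemma~\ref{lem:nn-symmetries}, replacing $f$ by $1-f$ if necessary (which swaps $P$ and $N$), we may assume $f(x)=1$ and $f(y)=0$, so $x\in P$. Recall that for cube points $\lVert x-y\rVert_2^2=\dist(x,y)$. Then $y$ is at squared distance $1$ from the positive prototype $x$. Since $y\notin S$, no prototype is at distance $0$ from $y$, so the minimum squared distance from $y$ to any prototype is exactly $1$. This minimum is attained by the positive prototype $x$.

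The definition of a Boolean representation requires $y$, a negative point, to be strictly closer to some point of $N$ than to every point of $P$. That would need a negative prototype at squared distance $<1$ from $y$, i.e.\ at distance $0$, which means $y\in N\subseteq S$. This contradicts $y\notin S$. Equivalently, in the tie-permitting language, the closest prototypes of $y$ include $x$, which has the wrong label. So $y\in S$.

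The main point needing care is integrality. Hamming distances are integers, so "strictly less than $1$" forces distance $0$, that is, $y$ itself is a prototype. Everything else is routine bookkeeping with the components of $H_f$.
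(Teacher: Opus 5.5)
Your proposal is correct and follows essentially the same route as the paper. Both arguments use integrality of Hamming distance to show that a prototype at an endpoint of an $H_f$-edge forces the other endpoint to be a prototype, and then propagate this along a path within the bichromatic component. The appeal to Lemma~\ref{lem:nn-symmetries} to fix the labels is harmless but unnecessary, since the argument is symmetric in the two labels.
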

\begin{proof}
Let $C$ be a connected component of
$H_f$ that meets $S$, and choose $x_0\in C\cap S$. For an arbitrary
$y\in C$, take a path $x_0,x_1,\ldots,x_t=y$ in $C$. Suppose that
$x_i\in S$. Since $x_ix_{i+1}$ is an edge of $H_f$, the prototype $x_i$
has the opposite label from $x_{i+1}$ and is at distance 1 from it.
Correct classification of $x_{i+1}$ therefore requires a prototype of its
own label at distance strictly less than 1. Hamming distances are integral,
so this prototype is $x_{i+1}$ itself. Thus $x_{i+1}\in S$, and induction
gives $y\in S$. Since $y$ was arbitrary, $C\subseteq S$. Hence every
component meeting $S$ is contained in $S$, so $S$ is a union of connected
components of $H_f$.
\end{proof}

For parity and its complement, $H_f=Q_n$ is connected, so the existence
of any prototype and Proposition~\ref{prop:edge-closure} force the entire
cube to be selected. Thus their Boolean complexity is $2^n$, as proved
in \cite[Proposition~3(b)]{HLT22}.

The following lemma records the shortest-path form of the argument used by
DiCicco, Podolskii, and Reichman \cite{DPR25}. They state the resulting
lower bound for the connected components of $f^{-1}(1)$; the same proof,
with the labels interchanged, gives the corresponding statement for
$f^{-1}(0)$. The isolated-vertex consequence, again for both labels, is
also noted by \v{C}epek and Gli\v{s}i\'c \cite{CG25}.

\begin{lemma}\label{lem:monochromatic-shortest-paths}
Let $(P,N)$ be a representation of $f:\cube\to\{0,1\}$, and let $x\in\cube$.
If $p$ is a closest prototype to $x$, every shortest cube path from $x$
to $p$ stays in the function label class of $x$. Consequently every
monochromatic connected component contains a prototype, and every vertex
whose neighbours all have the opposite function value is a prototype.
\end{lemma}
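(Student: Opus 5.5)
The plan is to argue by induction along a shortest path, using only that $p$ is a closest prototype and that prototypes carry their own labels. Fix $x$, let $p$ be a closest prototype at Hamming distance $D=\dist(x,p)$, and let $x=y_0,y_1,\ldots,y_D=p$ be any shortest cube path from $x$ to $p$. Since cube Hamming distance is the graph distance in $Q_n$ and squared Euclidean distance equals Hamming distance on the cube, closeness in the representation can be measured in $\dist$ throughout.

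The first step is to show that $p$ is also a closest prototype to every $y_i$, and that $\dist(y_i,p)=D-i$. The distance identity holds because the path is shortest. For any prototype $s$, the triangle inequality gives
\[
 \dist(y_i,s)\geq\dist(x,s)-\dist(x,y_i)\geq D-i=\dist(y_i,p).
\]
Hence $p$ attains the minimum distance from $y_i$ to the prototype set.

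The second step fixes the label of $y_i$. By the definition of a representation, every closest prototype to $y_i$ carries the label $f(y_i)$, since ties across labels are forbidden. As $p$ is such a prototype, $f(y_i)$ equals the label of $p$. Every prototype carries its own function value, so $f(y_i)=f(p)$. Taking $i=0$ gives $f(x)=f(p)$, and therefore the whole path lies in the label class of $x$. This is the main point, and I expect no real obstacle. The only care needed is to use the tie convention correctly: it forbids closest prototypes of both labels, rather than requiring a unique closest prototype.

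The consequences then follow directly. For a monochromatic component $C$ and any $x\in C$, pick a closest prototype $p$, which exists because a representation is nonempty. Some shortest path from $x$ to $p$ exists, and by the above it is monochromatic with value $f(x)$, so it stays inside $C$. Hence $p\in C$. If every neighbour of $x$ has the opposite value, then $\{x\}$ is a monochromatic component, so $x$ must be the prototype it contains.
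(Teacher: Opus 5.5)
Your proof is correct and is essentially the paper's argument run directly rather than by contradiction: the paper supposes a vertex $y$ of the opposite label on the shortest path and uses the triangle inequality to produce a prototype strictly closer to $x$ than $p$, whereas you use the same inequality to show that $p$ remains a closest prototype at every vertex of the path, so the ban on closest prototypes of both labels forces every vertex to have the label of $p$. The consequences for monochromatic components and isolated vertices are then derived exactly as in the paper.
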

\begin{proof}
Because $x$ is correctly classified, every closest prototype to $x$ has
label $f(x)$. Let $p$ be one such prototype. If a vertex $y$ on a shortest
path from $x$ to $p$ had the opposite label, then
\[
 d_H(x,p)=d_H(x,y)+d_H(y,p).
\]
Since $y$ has the opposite function value from $x$, the prototype $p$, which has label $f(x)$, has the wrong label for $y$. Correct classification of $y$ therefore gives a prototype $q$ with label $f(y)$, opposite to $f(x)$, such that
\[
 d_H(y,q)<d_H(y,p).
\]
The triangle inequality would then give
\[
 d_H(x,q)\leq d_H(x,y)+d_H(y,q)
 <d_H(x,y)+d_H(y,p)=d_H(x,p),
\]
contradicting the choice of $p$. Hence every shortest path from $x$ to
$p$ is monochromatic.

Now let $C$ be a monochromatic component and choose $x\in C$. If $p$ is
a closest prototype to $x$, then a shortest path from $x$ to $p$ is
monochromatic, so $p\in C$. Thus every monochromatic component contains
a prototype. In particular, a vertex whose neighbours all have the opposite
function value forms a monochromatic component on its own, so is a prototype.
\end{proof}

Once Lemma~\ref{lem:monochromatic-shortest-paths} identifies a prototype in a
monochromatic component, Proposition~\ref{prop:edge-closure} forces every
vertex in its connected component of $H_f$ to be a prototype as well.

\subsection{Boundary containment}
\label{subsec:boundary-constraints}

Recall that $x\leq y$ means $x_i\leq y_i$ for every coordinate $i$.
Minimal positive points and maximal negative points are, respectively,
minimal elements of $f^{-1}(1)$ and maximal elements of $f^{-1}(0)$ under
this order. We say that a minimal positive point $a$ is \emph{exposed} if
it has a maximal negative lower neighbour; that is, if there is a maximal
negative point $b<a$ with $d_H(a,b)=1$. Write $A(f)$ for the set of
exposed minimal positive points of $f$.

The next lemma is the bridge to the covering arguments used later: it shows
that suitable positive prototypes must cover the exposed positive boundary.

\begin{lemma}\label{lem:boundary}
Let $f$ be a Boolean function on $\cube$ (not necessarily monotone), and let $(P,N)$ be a
Boolean nearest-neighbour representation of $f$.
\begin{enumerate}
  \item If $a$ is a minimal positive point and $p\in P$ is closest to $a$
  among the positive prototypes, then $a\leq p$.
  \item If $b$ is a maximal negative point and $q\in N$ is closest to $b$
  among the negative prototypes, then $q\leq b$.
  \item If $a$ is exposed, $b$ is a lower neighbour of $a$ that is a
  maximal negative point of $f$, and $q\in N$ is closest to $b$
  among the negative prototypes, then every $p\in P$ closest to $a$
  among the positive prototypes satisfies
  \[
     a\leq p
     \quad\text{and}\quad
     \wt{p}\leq2\wt{a}-\wt{q}-1\leq2\wt{a}-1.
  \]
\end{enumerate}
\end{lemma}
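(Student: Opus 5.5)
The plan is to use only the correct classification of single points, together with the order structure near minimal and maximal points. The two containment claims come from a one-coordinate flipping argument, and the weight bound in part~3 then follows from an exact computation of two Hamming distances along chains.

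\emph{Part 1.} Let $a$ be minimal positive and $p\in P$ closest to $a$ among positive prototypes. Correct classification of $a$ gives $\dist(a,p)<\dist(a,q')$ for every $q'\in N$. Suppose some coordinate $i$ has $a_i=1$ and $p_i=0$. Let $a'$ be $a$ with coordinate $i$ set to $0$. Then $a'<a$, so $f(a')=0$ by minimality, and $\dist(a',p)=\dist(a,p)-1$. Correct classification of $a'$ yields $q\in N$ with $\dist(a',q)<\dist(a',p)$. Since distances are integers, $\dist(a',q)\le\dist(a,p)-2$. Hence
\[
\dist(a,q)\le\dist(a',q)+1\le\dist(a,p)-1,
\]
which contradicts the displayed strict inequality at $a$. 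So $a\le p$. (The case $p=a$ is trivial.)

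\emph{Part 2.} I would obtain this from part~1 by symmetry rather than repeat the argument. Apply Lemma~\ref{lem:nn-symmetries} with $T$ complementing every coordinate, passing to $1-f\circ T$. This reverses the order, turns maximal negative points into minimal positive points, and swaps $P$ and $N$. Part~1 for the transformed representation then translates back to $q\le b$.

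\emph{Part 3.} Let $b$ be a lower neighbour of $a$ that is maximal negative, so $b\le a$ and $\wt b=\wt a-1$. Let $q\in N$ be closest to $b$ among negative prototypes. By part~2, $q\le b\le a$, so
\[
\dist(a,q)=\wt a-\wt q .
\]
By part~1, every $p\in P$ closest to $a$ among the positive prototypes satisfies $a\le p$, so
\[
\dist(a,p)=\wt p-\wt a .
\]
Correct classification of $a$ requires $\dist(a,p)<\dist(a,q)$, that is, $\wt p-\wt a<\wt a-\wt q$. Integrality then gives
\[
\wt p\le 2\wt a-\wt q-1,
\]
and this is at most $2\wt a-1$ because $\wt q\ge0$.

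The only delicate point is choosing which point's classification to use. The classification of $b$ gives a bound in the wrong direction, so the weight bound must come from the classification of $a$ against the specific negative prototype $q$, once parts~1 and~2 have made both distances exactly computable as weight differences.
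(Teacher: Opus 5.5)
Your proof is correct and follows the paper's argument essentially step for step. Part~1 uses the one-coordinate flip at a minimal positive point, Part~2 uses the complementation duality $f\mapsto f^*$ of Lemma~\ref{lem:nn-symmetries}, and Part~3 computes both distances exactly as weight differences from $q\le b\le a$ and $a\le p$, then applies integrality. (The integrality step in your Part~1 is harmless but unnecessary, since $\dist(a,q)\le\dist(a',q)+1<\dist(a',p)+1=\dist(a,p)$ already gives the contradiction.)
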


Parts 1 and 2 appear for symmetric threshold functions in
\cite[proof of Theorem~4(c)]{HLT22}; their proofs require only minimal
positivity or maximal negativity, so they apply to arbitrary Boolean
functions. In the proof of their even-majority lower bound, DiCicco,
Podolskii, and Reichman show that every point $a$ of weight $n/2$ has a
positive prototype $p\geq a$ with $\wt p\leq n-1$
\cite{DPR25}. This is the corresponding case of Part
3 with the term $-\wt q$ discarded.

\begin{proof}
For the first claim, suppose that $a_i=1$ and $p_i=0$.  Let $b$ be
obtained from $a$ by changing coordinate $i$ to zero.  Since $a$ is a minimal
positive point, $b$ is negative.  Let $q\in N$ be closest
to $b$ among the negative prototypes.  Since $b$ is negative,
$\dist(b,q)<\dist(b,p)$.  Hence
\[
 \dist(a,p)=\dist(b,p)+1>\dist(b,q)+1.
\]
But $a$ is positive, so
\[
 \dist(a,p)<\dist(a,q)\leq\dist(b,q)+1,
\]
a contradiction.  Thus $a\leq p$.

Apply the first claim to $f^*(x)=1-f(\mathbf1-x)$, using
Lemma~\ref{lem:nn-symmetries}. A maximal negative point $b$ of $f$
becomes the minimal positive point $\mathbf1-b$ of $f^*$, and a closest
negative prototype $q$ becomes the closest positive prototype
$\mathbf1-q$. Thus $\mathbf1-b\leq\mathbf1-q$, or $q\leq b$, proving
the second claim.

For the third claim, the second claim applies to $b$ and gives
$q\leq b\leq a$, and therefore
$\dist(a,q)=\wt{a}-\wt{q}$.  If $p$ is closest to $a$ among the positive
prototypes, then
\[
 \dist(a,p)<\dist(a,q)=\wt{a}-\wt{q}.
\]
Distances are integral, so $\dist(a,p)\leq\wt{a}-\wt{q}-1$.  By the first
claim, $a\leq p$. Since $a$ and $p$ are Boolean points,
\[
  \wt{p}=\wt{a}+\dist(a,p)\leq2\wt{a}-\wt{q}-1\leq2\wt{a}-1.
\]
\end{proof}

The term $-\wt q$ can strengthen the bound for a particular
representation, but the optimal threshold construction in
Theorem~\ref{thm:covering} uses $q=0^n$. The exposure hypothesis cannot
be omitted, as Remark~\ref{rem:sharpness} shows.

\begin{remark}\label{rem:sharpness}
For $f(x)=x_1$ on $\{0,1\}^3$, the sets $P=\{110,101,111\}$ and
$N=\{010,001,011\}$ form a Boolean nearest-neighbour representation.
The unique minimal positive point $a=100$ is not exposed: its only lower
neighbour is $000$, which is not maximal negative. Every positive
prototype $p\geq a$ has weight at least two, whereas
$2\wt a-1=1$. Thus the conclusion of Part 3 of Lemma~\ref{lem:boundary} can fail when the minimal
positive point is not exposed.
\end{remark}

\section{Symmetric threshold functions and covering designs}
\label{sec:thresholds}

For $TH_t^n$, the boundary-containment lemma turns the Boolean
nearest-neighbour problem into a covering-design problem. The minimal positive
points have weight $t$, the maximal negative points have weight $t-1$, and
every minimal positive point is exposed.

Recall that $\Cover(n,b,s)$ is the least number of $b$-subsets of $[n]$
required to cover every $s$-subset. The exact connection is as follows.

\begin{theorem}[Exact covering formula]\label{thm:covering}
For every $1\leq t\leq n$, let
\[
  r=\min\{t,n-t+1\}.
\]
Then
\[
   \BNN(TH_t^n)=\Cover(n,2r-1,r)+1.
\]
\end{theorem}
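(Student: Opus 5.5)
First I reduce to the case $t\leq n-t+1$, that is $r=t$. The dual of $TH_t^n$ is $TH_{n-t+1}^n$: indeed $1-TH_t^n(\mathbf1-x)=1$ exactly when $n-\wt x<t$, i.e.\ when $\wt x\geq n-t+1$. By Lemma~\ref{lem:nn-symmetries} the dual has the same $\BNN$, and $r$ is invariant under $t\mapsto n-t+1$. So I may assume $r=t$ and $2t-1\leq n$. The case $t=1$ is checked directly: $\Cover(n,1,1)=n$, and $TH_1^n$ (the OR function) needs $0^n$ together with all $n$ weight-one points, since $0^n$ is an isolated negative point and Proposition~\ref{prop:edge-closure} then forces its neighbours in $H_f$. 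The general argument below also covers this case.

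For the \textbf{upper bound}, take a covering design $\mathcal B\subseteq\binom{[n]}{2t-1}$ of size $\Cover(n,2t-1,t)$. Put $N=\{0^n\}$ and let $P$ consist of the characteristic vectors of the blocks. Since $2t-1\leq n$ these are valid cube points, and they are positive because $2t-1\geq t$. I then check the two classes.

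A negative point $x$ with $\wt x\leq t-1$ satisfies $\dist(x,0^n)=\wt x$. For any $p\in P$, $\dist(x,p)\geq \wt p-\wt x=2t-1-\wt x>\wt x$, so $x$ is strictly closer to $0^n$.

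A positive point $x$ with $\wt x=w\geq t$ satisfies $\dist(x,0^n)=w$. Choose a $t$-subset of $\supp(x)$ and a block $B\supseteq$ that subset. Then $\lvert\supp(x)\cap B\rvert\geq t$, so $\dist(x,B)=w+(2t-1)-2\lvert\supp(x)\cap B\rvert\leq w-1<w$.

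This gives $\BNN\leq\Cover(n,2t-1,t)+1$.

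For the \textbf{lower bound}, let $(P,N)$ be any representation. $N$ is nonempty because $0^n$ is negative. Every point $a$ of weight $t$ is minimal positive and exposed, since each lower neighbour has weight $t-1$ and is maximal negative. Apply Lemma~\ref{lem:boundary}, Part 3: every closest positive prototype $p$ to $a$ satisfies $a\leq p$ and $\wt p\leq 2t-1$. Thus the supports of the positive prototypes of weight at most $2t-1$ cover every $t$-subset of $[n]$.

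Each such support has size between $t$ and $2t-1$. Enlarge every support to a $(2t-1)$-superset, which is possible since $2t-1\leq n$. Enlarging preserves containment, so this yields a covering design with at most $\lvert P\rvert$ blocks. Hence $\lvert P\rvert\geq\Cover(n,2t-1,t)$, and $\lvert P\rvert+\lvert N\rvert\geq\Cover(n,2t-1,t)+1$.

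The main subtlety I expect is in the lower bound, specifically in applying Lemma~\ref{lem:boundary}, Part 3 and then padding the blocks to size exactly $2t-1$. Two points need care. First, one must confirm that the dual reduction really gives $2t-1\leq n$, so that the padding step is legitimate. Second, one must confirm that exposure holds for every weight-$t$ point, which requires $t\geq1$ so that each such point has a lower neighbour at weight $t-1$. Both appear routine, so I do not expect a genuine obstacle.
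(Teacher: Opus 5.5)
Your proposal is correct and follows essentially the same route as the paper: the covering-design construction with $0^n$ as the sole negative prototype for the upper bound, Part~3 of Lemma~\ref{lem:boundary} plus padding supports to $(2t-1)$-sets for the lower bound, and the duality $TH_t^n\leftrightarrow TH_{n-t+1}^n$ to reach the case $2t-1\leq n$. The only differences are cosmetic: you perform the duality reduction first rather than last, and your separate $t=1$ check is redundant, as you note yourself.
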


\begin{proof}
First suppose $t\leq(n+1)/2$, so that $2t-1\leq n$.

For the lower bound, let $(P,N)$ be any Boolean nearest-neighbour
representation of $TH_t^n$. Part 3 of
Lemma~\ref{lem:boundary} shows that the supports of
the positive prototypes of weight at most $2t-1$ cover every $t$-subset
of $[n]$.  Enlarge each such support, if necessary, to a
$(2t-1)$-subset (possible since $2t-1\leq n$), and discard any duplicate
enlarged blocks. These enlarged sets are used only for the covering argument,
not as replacements for the original prototypes. Removing repetitions does
not affect which $t$-subsets are covered. The resulting family is therefore
an $(n,2t-1,t)$ covering of size at most $\lvert P\rvert$, so
\[
  \lvert P\rvert\geq\Cover(n,2t-1,t).
\]
The function is non-constant, hence there is at least one negative
prototype.  Thus
\[
  \BNN(TH_t^n)\geq\Cover(n,2t-1,t)+1.
\]

For the upper bound, take an optimal covering
$\mathcal B\subseteq\binom{[n]}{2t-1}$.  For each $B\in\mathcal B$, use
its incidence vector $\mathbf 1_B$ as a positive prototype, and use
$0^n$ as the sole negative prototype. Let $x\in\cube$ be arbitrary and
write $S=\supp(x)$ and $w=\wt{x}=\lvert S\rvert$, so that
$TH_t^n(x)=0$ exactly when $w\leq t-1$ and $TH_t^n(x)=1$ exactly when
$w\geq t$. Since $\lvert B\rvert=2t-1$,
\[
 \dist(x,\mathbf1_B)
 =\lvert S\mathbin{\triangle}B\rvert
 =w+2t-1-2\lvert S\cap B\rvert,
\]
whereas $\dist(x,0^n)=w$. Hence
\[
 \dist(x,\mathbf1_B)-\dist(x,0^n)
 =2t-1-2\lvert S\cap B\rvert.
\]

If $w\leq t-1$, then $\lvert S\cap B\rvert\leq t-1$, so this difference is
at least $1$. Thus $0^n$ is closer to $x$ than every positive prototype.

Now suppose that $w\geq t$. Choose a $t$-subset $T\subseteq S$. Since
$\mathcal B$ covers every $t$-subset, some $B\in\mathcal B$ contains $T$.
It follows that $\lvert S\cap B\rvert\geq t$, so
\[
 \dist(x,\mathbf1_B)-\dist(x,0^n)
 \leq-1.
\]
Thus the positive prototype $\mathbf1_B$ is closer to $x$ than $0^n$.

This is a representation with
$\Cover(n,2t-1,t)+1$ prototypes.

For general $t$, the dual of $TH_t^n$ is $TH_{n-t+1}^n$.
Lemma~\ref{lem:nn-symmetries} therefore allows us to replace $t$ by
$r=\min\{t,n-t+1\}$, completing the proof.
\end{proof}

For example, $\{1,2,3\}$, $\{1,2,4\}$, and $\{1,3,4\}$ cover all
pairs of $\{1,2,3,4\}$. Any two triples omit distinct points $i$ and $j$, so the pair $\{i,j\}$ is
not covered. Hence at least three triples are needed. Thus $\Cover(4,3,2)=3$ and $\BNN(TH_2^4)=4$.

\begin{corollary}\label{cor:one-side}
For every $1\leq t\leq n$, there is an optimal Boolean nearest-neighbour
representation of $TH_t^n$ with a single negative prototype when
$t\leq(n+1)/2$, and one with a single positive prototype when
$t\geq(n+1)/2$. If $n$ is odd and $t=(n+1)/2$, then
$\BNN(TH_t^n)=2$.
\end{corollary}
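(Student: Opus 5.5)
The plan is to read both claims off the proof of Theorem~\ref{thm:covering}, with Lemma~\ref{lem:nn-symmetries} used to move between the two ranges of $t$.

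\emph{The case $t\leq(n+1)/2$.} Here $r=t$, so Theorem~\ref{thm:covering} gives $\BNN(TH_t^n)=\Cover(n,2t-1,t)+1$. The upper-bound construction in its proof takes the incidence vectors of an optimal $(n,2t-1,t)$ covering as positive prototypes and $0^n$ as the only negative prototype. This has exactly $\Cover(n,2t-1,t)+1$ prototypes, so it is optimal, and it has a single negative prototype.

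\emph{The case $t\geq(n+1)/2$.} Dualise. The dual of $TH_t^n$ is $TH_{n-t+1}^n$, and $n-t+1\leq(n+1)/2$, so the previous case gives an optimal representation $(P',N')$ of $TH_{n-t+1}^n$ with $\lvert N'\rvert=1$. By Lemma~\ref{lem:nn-symmetries}, take $T$ to be the complementation of all coordinates. Then $TH_t^n=1-TH_{n-t+1}^n\circ T$, and the lemma carries $(P',N')$ to $(T^{-1}N',T^{-1}P')$. This representation of $TH_t^n$ has the same size, so it is optimal, and it has a single positive prototype, namely $\mathbf1-0^n=\mathbf1$. I would check the direction of the dual identity explicitly, so that the interchange of prototype counts is applied the right way round.

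\emph{The case $n$ odd, $t=(n+1)/2$.} Here $r=t$ and $2r-1=n$. The single block $[n]$ covers every $r$-subset, so $\Cover(n,n,r)=1$. Theorem~\ref{thm:covering} then gives $\BNN(TH_t^n)=2$. Concretely, the representation is $P=\{\mathbf1\}$, $N=\{0^n\}$. This is also the only point where both one-sided statements apply at once, and they agree, since this representation has one prototype of each label.

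I expect no real obstacle. The only care needed is in applying the dualisation of Lemma~\ref{lem:nn-symmetries} correctly, so that the negative prototype count of the dual becomes the positive prototype count of $TH_t^n$.
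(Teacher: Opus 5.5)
Your proposal is correct and matches the paper's proof: the covering construction with $0^n$ as the only negative prototype for $t\leq(n+1)/2$, then the duality $TH_t^n(x)=1-TH_{n-t+1}^n(\mathbf 1-x)$ from Lemma~\ref{lem:nn-symmetries}, which interchanges the prototype counts, for $t\geq(n+1)/2$, and finally $\Cover(n,n,t)=1$ with prototypes $0^n$ and $\mathbf 1$ in the central case. The direction of the dual identity that you flagged for checking does come out as you state it, so the single negative prototype $0^n$ of $TH_{n-t+1}^n$ becomes the single positive prototype $\mathbf 1$ of $TH_t^n$.
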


\begin{proof}
For $t\leq(n+1)/2$, use the construction in the proof of
Theorem~\ref{thm:covering}. For $t\geq(n+1)/2$, use the duality between
$TH_t^n$ and $TH_{n-t+1}^n$ from Lemma~\ref{lem:nn-symmetries}, which
interchanges the positive and negative prototype counts. In the central case, $2t-1=n$ and
$\Cover(n,n,t)=1$; the prototypes $0^n$ and $1^n$ therefore give an
optimal representation.
\end{proof}

Section~\ref{subsec:one-negative} later characterises the monotone Boolean
functions that admit a Boolean nearest-neighbour representation with a single
negative prototype.

The exact formula immediately recovers and sharpens the known exponential
lower bounds for symmetric threshold functions. Hajnal, Liu, and Tur\'an
proved the qualitative exponential lower bound for $TH_{\lfloor n/3\rfloor}^n$
by a boundary-counting argument \cite[Theorem~4(c)]{HLT22}, and noted that
the argument extends to further threshold levels. DiCicco, Podolskii, and
Reichman state the corresponding estimate for every $t$ \cite{DPR25}.
Together with the duality between $TH_t^n$ and $TH_{n-t+1}^n$ from
Lemma~\ref{lem:nn-symmetries}, this gives
\[
 \BNN(TH_t^n)\geq\frac{\binom nr}{\binom{2r}r},
 \qquad r=\min\{t,n-t+1\}.
\]
By Lemma~\ref{lem:standard-cover-estimates},
\[
 \Cover(n,2r-1,r)\geq
 \frac{\binom nr}{\binom{2r-1}r}
 =2\,\frac{\binom nr}{\binom{2r}r},
\]
so the exact formula improves this lower bound by a factor of two, apart from
the additional $+1$. At $t=\lfloor n/3\rfloor$, standard binomial estimates
applied to the lower-bound expression show that it has exponential rate
$\log_2 3-4/3$. Theorem~\ref{thm:covering}, together with the covering upper
bound below, gives a corresponding upper bound with the same exponential rate.
In this sense, the exact covering formula determines the exponential rate at
this threshold.

\subsection{Exponential rates}

\begin{lemma}[Standard covering estimates]\label{lem:standard-cover-estimates}
For all $1\leq s\leq b\leq n$,
\[
 \frac{\binom ns}{\binom bs}
 \leq \Cover(n,b,s)
 \leq
 \frac{\binom ns}{\binom bs}\left(1+\ln\binom bs\right).
\]
\end{lemma}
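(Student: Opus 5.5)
The lower bound will come from double counting incidences between $s$-subsets and blocks. The upper bound will come from the standard greedy (set-cover) argument of Lov\'asz, Johnson and Stein, or equivalently from a random choice of blocks followed by a deletion step. I will treat the two inequalities separately.

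For the lower bound, let $\mathcal B\subseteq\binom{[n]}b$ be any covering. Each block $B\in\mathcal B$ contains exactly $\binom bs$ members of $\binom{[n]}s$. Every $s$-subset lies in at least one block, so counting pairs $(T,B)$ with $T\subseteq B$ gives $\lvert\mathcal B\rvert\binom bs\geq\binom ns$. Minimising over coverings yields $\Cover(n,b,s)\geq\binom ns/\binom bs$.

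For the upper bound, I will view the problem as a set-cover instance. The ground set is $U=\binom{[n]}s$, and each $B\in\binom{[n]}b$ gives the set family member $\binom Bs\subseteq U$. This instance is regular in two ways:
\begin{itemize}
\item every $s$-subset lies in exactly $\binom{n-s}{b-s}$ blocks;
\item every block contains exactly $\binom bs$ members of $U$.
\end{itemize}
The fractional cover that assigns weight $w=1/\binom{n-s}{b-s}$ to every block is therefore feasible. Its total weight is
\[
\binom nb\Big/\binom{n-s}{b-s}=\binom ns\Big/\binom bs,
\]
by the identity $\binom nb\binom bs=\binom ns\binom{n-s}{b-s}$.

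I would then invoke the classical greedy bound. Greedy set cover produces a cover of size at most $(1+\ln\Delta)$ times the optimal fractional cover, where $\Delta$ is the maximum set size. This is Lov\'asz/Johnson/Stein, and with $\Delta=\binom bs$ it gives the claim. To keep the paper self-contained, I would instead give the short probabilistic version. Choose each block independently with probability $p=\min\{1,\ln\binom bs/\binom{n-s}{b-s}\}$, then add one block for each uncovered $s$-subset. The expected size is at most
\[
p\binom nb+\binom ns(1-p)^{\binom{n-s}{b-s}}\leq\frac{\binom ns}{\binom bs}\ln\binom bs+\binom ns e^{-\ln\binom bs}=\frac{\binom ns}{\binom bs}\left(1+\ln\binom bs\right).
\]
In the degenerate case $p=1$, all blocks are chosen. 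Then $\Cover\leq\binom nb$, and $\binom nb\leq\binom ns/\binom bs\cdot\ln\binom bs$ holds exactly when $\ln\binom bs\geq\binom{n-s}{b-s}$, which is precisely the condition for $p=1$; so the bound holds in this case as well.

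The only step needing care is this probabilistic calculation: the edge cases $\binom bs=1$ (that is, $b=s$) and $p=1$. When $b=s$ the bound reads $\Cover\leq\binom ns$, which is trivially true, so I would dispatch it separately before running the random argument.
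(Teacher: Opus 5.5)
Your proof is correct. The lower bound is the same double count the paper uses.

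The upper bound differs. The paper does not argue it directly. It cites Erd\H os and Spencer \cite[Theorem~13.4]{ES74} for the range $n>b>s\geq2$. It then handles the boundary cases $b=s$, $b=n$, $s=1$ by the exact values $\Cover(n,b,b)=\binom nb$, $\Cover(n,n,s)=1$ and $\Cover(n,b,1)=\lceil n/b\rceil$, leaving the comparison with the stated bound implicit.

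You prove the estimate yourself, by either of two routes:
\begin{itemize}
\item the regular fractional cover of total weight $\binom ns/\binom bs$ combined with the Lov\'asz greedy bound, which is the same integral-versus-fractional tool the paper later uses in Claim~4 of Theorem~\ref{thm:symmetric-path-approximation}; or
\item random selection with alteration.
\end{itemize}

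The alteration argument is uniform in the parameters. It needs no restriction to $n>b>s\geq2$ and no exact covering numbers. Only the case $p=1$ needs separate treatment, and you handle it correctly: there $(1-p)^{\binom{n-s}{b-s}}=0$, whereas the bound $e^{-p\binom{n-s}{b-s}}$ would be too weak. Your separate treatment of $b=s$ is unnecessary, since then $p=0$ and the computation returns exactly $\binom ns$.

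The paper's route is shorter on the page. Yours is self-contained and removes the boundary-case analysis.
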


\begin{proof}
The lower bound is the elementary counting bound: each $b$-subset contains
$\binom bs$ $s$-subsets. For $n>b>s\geq2$, the upper bound is the
classical estimate of Erd\H os and Spencer \cite[Theorem~13.4]{ES74}, in
whose notation $M(n,b,s)=\Cover(n,b,s)$. The remaining cases follow
directly from
\[
 \Cover(n,b,b)=\binom nb,\qquad
 \Cover(n,n,s)=1,\qquad
 \Cover(n,b,1)=\left\lceil\frac nb\right\rceil.
\]
\end{proof}

Combining Theorem~\ref{thm:covering} with these standard covering
estimates gives the exponential rate of $\BNN(TH_t^n)$ throughout the
threshold range.

\begin{corollary}\label{cor:entropy}
There is an absolute constant $c$ such that for every $n\geq2$ and every
$1\leq t\leq n$, with $r=\min\{t,n-t+1\}$,
\[
 \Bigl\lvert
 \log_2\BNN(TH_t^n)-\bigl(nH_2(r/n)-2r\bigr)
 \Bigr\rvert
 \leq c\log_2n.
\]
For $\alpha\in[0,1]$, define
\[
 \beta=\min\{\alpha,1-\alpha\},\qquad
 \rho(\alpha)=H_2(\beta)-2\beta.
\]
If $t=t(n)$ satisfies $t/n\to\alpha$, in particular if
$t=\max\{1,\lfloor\alpha n\rfloor\}$, then
\[
 \log_2\BNN(TH_t^n)=\bigl(\rho(\alpha)+o(1)\bigr)n.
\]
\end{corollary}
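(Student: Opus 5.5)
The plan is to combine Theorem~\ref{thm:covering} with Lemma~\ref{lem:standard-cover-estimates}, taking $s=r$ and $b=2r-1$, and then approximate the binomial coefficients by entropy. Theorem~\ref{thm:covering} gives $\BNN(TH_t^n)=\Cover(n,2r-1,r)+1$, where $1\leq r\leq (n+1)/2$. The lemma then yields
\[
 \frac{\binom nr}{\binom{2r-1}r}\leq \Cover(n,2r-1,r)\leq \frac{\binom nr}{\binom{2r-1}r}\Bigl(1+\ln\binom{2r-1}r\Bigr).
\]
The factor $1+\ln\binom{2r-1}{r}$ is at most $1+(2r-1)\ln 2=O(n)$, so on a $\log_2$ scale it costs only $O(\log_2 n)$. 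The additive $+1$ costs at most $1$ in $\log_2$, because the covering number is at least $1$. So $\log_2\BNN(TH_t^n)$ equals $\log_2\binom nr-\log_2\binom{2r-1}r$ up to an additive $O(\log_2 n)$.

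Next I estimate the two binomials. The standard bounds $\frac{1}{n+1}2^{nH_2(r/n)}\leq\binom nr\leq 2^{nH_2(r/n)}$ give $\log_2\binom nr=nH_2(r/n)+O(\log_2 n)$. For the other term, $\binom{2r-1}{r}=\tfrac12\binom{2r}{r}$, and $\frac{4^r}{2r+1}\leq\binom{2r}r\leq 4^r$, so $\log_2\binom{2r-1}r=2r+O(\log_2 r)=2r+O(\log_2 n)$. Combining these gives the first display with an absolute constant $c$. Since $n\geq2$ we have $\log_2 n\geq1$, which absorbs the constant terms. I will check the small cases $r=1$ and $2r-1=n$ directly so that every $O(\cdot)$ is uniform; this is routine.

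For the asymptotic statement, divide the first display by $n$. The remaining task is to show $H_2(r/n)-2r/n\to\rho(\alpha)$. Since $r=\min\{t,n-t+1\}$, we get $r/n\to\min\{\alpha,1-\alpha\}=\beta$, and $H_2$ is continuous on $[0,1]$, so the limit follows. The choice $t=\max\{1,\lfloor\alpha n\rfloor\}$ satisfies $t/n\to\alpha$ and lies in $[1,n]$, including the boundary cases $\alpha=0$ and $\alpha=1$.

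The only real care point is uniformity: the constant $c$ must not depend on $t$. Near the endpoints $r$ is small, and near the middle $2r-1$ is close to $n$, so I will make sure the entropy bounds are applied in forms valid for all $0\leq r\leq n$ rather than only asymptotically. Otherwise the argument is bookkeeping.
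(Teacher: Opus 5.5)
Your proposal is correct and follows essentially the same route as the paper. Both combine Theorem~\ref{thm:covering} with Lemma~\ref{lem:standard-cover-estimates} at $s=r$, $b=2r-1$, absorb the factor $1+\ln\binom{2r-1}{r}\le 1+n\ln2$ and the additive $+1$ into $O(\log_2 n)$, and then apply entropy estimates to $\binom nr$ and to $\binom{2r-1}r=\tfrac12\binom{2r}r$, finishing with continuity of $H_2$ for the limit. The only difference is that you use the explicit bounds $\frac1{n+1}2^{nH_2(r/n)}\le\binom nr\le2^{nH_2(r/n)}$ and $\frac{4^r}{2r+1}\le\binom{2r}r\le4^r$; these already hold for every $r$, so the separate check of small cases you mention is unnecessary.
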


\begin{proof}
Put
\[
 L=\frac{\binom nr}{\binom{2r-1}r}.
\]
Theorem~\ref{thm:covering} and Lemma~\ref{lem:standard-cover-estimates} give
\[
 L+1\leq\BNN(TH_t^n)\leq1+L(1+n\ln2),
\]
since $\binom{2r-1}r\leq2^{2r-1}\leq2^n$. As $L\geq1$, it follows that
$\log_2\BNN(TH_t^n)=\log_2L+O(\log_2 n)$.
The method-of-types estimates for binomial coefficients
\cite[Proposition~1.5]{PW25}, whose upper-bound form includes the entropy
estimate used earlier, give
\[
 \log_2\binom nr=nH_2(r/n)+O(\log_2 n),
 \qquad
 \log_2\binom{2r}r=2r+O(\log_2 n).
\]
Together with $\binom{2r-1}r=\tfrac12\binom{2r}r$, this gives
\[
 \log_2\binom{2r-1}r=2r+O(\log_2 n),
\]
uniformly for $1\leq r\leq(n+1)/2$, with constants independent of $r$.
It follows that
\[
 \log_2L=nH_2(r/n)-2r+O(\log_2n).
\]
Combining this with the preceding bound on
$\log_2\BNN(TH_t^n)$ proves the first assertion. If $t/n\to\alpha$,
then $r/n\to\beta$, and hence
\[
 \frac1n\log_2\BNN(TH_t^n)
 =H_2(r/n)-\frac{2r}{n}+O\left(\frac{\log_2n}{n}\right)
 \longrightarrow H_2(\beta)-2\beta.
\]
This is the second assertion.
\end{proof}

\begin{remark}
For $t=\lfloor n/3\rfloor$, Corollary~\ref{cor:entropy} gives
\[
 \log_2\BNN(TH_{\lfloor n/3\rfloor}^n)
 =\left(\log_2 3-\frac43\right)n+O(\log_2 n)
 \approx0.251629\,n+O(\log_2 n).
\]
Equivalently,
\[
 \BNN(TH_{\lfloor n/3\rfloor}^n)
 =\left(\frac{3}{2^{4/3}}\right)^n n^{O(1)}
 =(1.19055\ldots)^n n^{O(1)}.
\]
\end{remark}

The rate $\rho$ is symmetric about $1/2$. On $[0,1/2]$,
$\rho(\alpha)=H_2(\alpha)-2\alpha$ is strictly concave, vanishes at the two
endpoints, and is positive in between. Its zeros on $[0,1]$ are therefore
precisely $0$, $1/2$, and $1$. On $(0,1/2)$ its derivative is
$\rho'(\alpha)=\log_2((1-\alpha)/\alpha)-2$, so its maximum on
$(0,1/2)$ occurs at $1/5$, and by symmetry the other maximum occurs at
$4/5$, with common value $\log_2(5/4)$. In particular,
Corollary~\ref{cor:entropy} gives
$\BNN(TH_t^n)=(5/4)^n n^{O(1)}$ when
$t=n/5+O(1)$ or $t=4n/5+O(1)$.

\subsection{Threshold levels near the endpoints}

When $r$ is fixed, Corollary~\ref{cor:entropy} gives only
$\log_2\BNN(TH_t^n)=O(\log_2n)$. We now use results on covering designs
to obtain exact values for $r=1,2$, an asymptotic formula for every
fixed $r$, and further exact values when suitable designs exist.

\begin{corollary}
\label{cor:fixed-r}
Let $r\geq1$ be a fixed integer, let $n\geq2r-1$, and let $t\in\{r,n-r+1\}$.
\begin{enumerate}
  \item For $r=1$ and every $n\geq1$,
  $\BNN(TH_t^n)=n+1$.
  \item For $r=2$ and every $n\geq3$,
  \[
   \BNN(TH_t^n)
   =\left\lceil\frac n3\left\lceil\frac{n-1}2\right\rceil\right\rceil+1.
  \]
  \item For every fixed $r$, as $n\to\infty$,
  \[
   \BNN(TH_t^n)
   =(1+o(1))\,\frac{\binom nr}{\binom{2r-1}r}.
  \]
\end{enumerate}
\end{corollary}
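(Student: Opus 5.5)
The plan is to reduce everything to Theorem~\ref{thm:covering}. Since $t\in\{r,n-r+1\}$ and $n\geq2r-1$, we have $\min\{t,n-t+1\}=r$, so $\BNN(TH_t^n)=\Cover(n,2r-1,r)+1$. Each part then becomes a statement about the covering number $\Cover(n,2r-1,r)$ for fixed $r$, and we prove those using known covering-design results.

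For Part 1, take $r=1$. Then $\Cover(n,1,1)=n$, because singletons can only cover themselves; this is also the case $\Cover(n,b,1)=\lceil n/b\rceil$ recorded in Lemma~\ref{lem:standard-cover-estimates}. This gives $n+1$. For Part 2, take $r=2$. We need $\Cover(n,3,2)$, the classical covering number of pairs by triples, which Fort and Hedlund determined as the Sch\"onheim bound $\lceil\frac n3\lceil\frac{n-1}2\rceil\rceil$ for every $n\geq3$. The lower bound is the elementary Sch\"onheim argument: each point lies in at least $\lceil(n-1)/2\rceil$ blocks, since each block through it covers two of its pairs; double counting incidences then gives the bound. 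Matching constructions exist for all residues of $n$ modulo $6$. I would cite Fort--Hedlund rather than reprove it, and say so in the text. As a sanity check, $n=4$ gives $\lceil\frac43\cdot2\rceil=3$, which agrees with the worked example.

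For Part 3, fix $r$ and apply R\"odl's theorem, the nibble solution of the Erd\H{o}s--Hanani conjecture. For fixed $b\geq s$ it gives $\Cover(n,b,s)=(1+o(1))\binom ns/\binom bs$ as $n\to\infty$. Taking $b=2r-1$ and $s=r$ gives the main term, and the additive $+1$ is absorbed because $\binom nr/\binom{2r-1}r\to\infty$ when $r\geq1$ is fixed. Note that the $(1+\ln\binom bs)$ factor in Lemma~\ref{lem:standard-cover-estimates} is too weak here, so R\"odl's theorem is genuinely needed. For $r=1$ the statement already follows from Part 1, since $n+1=(1+o(1))n$, so R\"odl is only required for $r\geq2$, where $n>b>s$ holds once $n$ is large.

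All the real content lies in two external theorems: Fort--Hedlund for Part 2 and R\"odl for Part 3. The only step needing care is checking that those theorems apply to the stated range of $n$. In Part 2 this means confirming that Fort--Hedlund covers every $n\geq3$, including small cases such as $n=3$, where the formula gives $\lceil1\cdot1\rceil=1$, correct since the single block $[3]$ covers all pairs. In Part 3 it means checking that R\"odl's theorem is stated for fixed $b\geq s$ with $n\to\infty$, so that it covers $b=2r-1$ and $s=r$.
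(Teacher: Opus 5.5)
Your proposal is correct and follows the paper's proof essentially step for step. You reduce to $\Cover(n,2r-1,r)+1$ via Theorem~\ref{thm:covering}, then use $\Cover(n,1,1)=n$ for Part~1, Fort--Hedlund for Part~2, and R\"odl's theorem for Part~3, absorbing the additive $1$ and handling $r=1$ through Part~1.
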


\begin{proof}
For $t\in\{r,n-r+1\}$, Theorem~\ref{thm:covering} gives
$\BNN(TH_t^n)=\Cover(n,2r-1,r)+1$. Part 1 follows from
$\Cover(n,1,1)=n$. For Part 2, Fort and Hedlund \cite[Section~2]{FH58}
prove $\Cover(n,3,2)=\lceil(n/3)\lceil(n-1)/2\rceil\rceil$ for every
$n\geq3$. For fixed $r\geq2$, R\"odl's theorem \cite{Rodl85} gives
$\Cover(n,2r-1,r)=(1+o(1))\binom nr/\binom{2r-1}r$; the additive
$1$ is absorbed as $n\to\infty$. The case $r=1$ of Part 3 follows
from Part 1.
\end{proof}

A Steiner system $S(r,2r-1,n)$ covers every $r$-set exactly once, so it
has exactly $\binom nr/\binom{2r-1}r$ blocks and meets the lower bound in
Lemma~\ref{lem:standard-cover-estimates} with equality. For fixed
$r\geq2$, Keevash's existence theorem for designs \cite{Keevash14} gives
such systems for every sufficiently large admissible $n$, where
admissibility means that
\[
 \binom{2r-1-i}{r-i}\ \text{divides}\ \binom{n-i}{r-i}
 \qquad(0\leq i<r).
\]
For these values of $n$,
\[
 \BNN(TH_t^n)=\frac{\binom nr}{\binom{2r-1}r}+1,
 \qquad t\in\{r,n-r+1\}.
\]

\subsection{The central window}

Recall that $r=\min\{t,n-t+1\}$. Put $q=n-2r+1$, so that
\[
 r=\frac{n-q+1}{2},\qquad 2r-1=n-q.
\]
The two threshold levels corresponding to $r$, namely $t=r$ and
$t=n-r+1$, are therefore
\[
 t\in\left\{\frac{n-q+1}{2},\frac{n+q+1}{2}\right\}.
\]
The integers $n$ and $q$ have opposite parity. Fixing $q$ fixes the
displacement from the central threshold levels; for odd $n$, this
displacement from majority is $q/2$, with $q=0$ giving majority.

The covering formula recovers the known majority values. For odd $n$,
$\Cover(n,n,(n+1)/2)=1$, so $\BNN(\MAJ_n)=2$. For even $n$, an
$(n-1)$-block is determined by the single point it omits, and a family of
such blocks covers every $n/2$-set precisely when at least $n/2+1$
distinct points are omitted by its members. Hence
\[
 \BNN(\MAJ_n)=\Cover(n,n-1,n/2)+1=n/2+2.
\]
The odd value and the construction giving the even upper bound are due to
Hajnal, Liu, and Tur\'an \cite{HLT22}. DiCicco, Podolskii, and Reichman
proved the lower bound in the even case \cite{DPR25}; the omitted-point calculation is
their argument in covering notation.

For the remaining central cases, it is useful to express the covering
problem in the language of hypergraphs. A hypergraph $\mathcal H$ on $[n]$
is a family of subsets of $[n]$, called edges, and it is $q$-uniform if
every edge has cardinality $q$. A vertex cover, also called a transversal, is a set of vertices meeting
every edge, and $\tau(\mathcal H)$ denotes the minimum cardinality of such
a set.

Since $2r-1=n-q$, the covering number in the exact formula is
$\Cover(n,n-q,r)$. Each of its blocks is the complement of a $q$-set, so a
family of such blocks corresponds to a $q$-uniform hypergraph formed by
their complements. An $r$-set is covered by a block precisely when it is
disjoint from the corresponding edge. The following standard complementation
observation, often used to pass between covering numbers and Tur\'an numbers
\cite{GKP95}, gives the resulting hypergraph formulation.

\begin{lemma}\label{lem:complement}
Let $r\geq1$ and $1\leq q\leq n-r$. Then $\Cover(n,n-q,r)$ is the minimum
number of edges in a $q$-uniform hypergraph $\mathcal H$ on $[n]$ satisfying
$\tau(\mathcal H)\geq r+1$.
\end{lemma}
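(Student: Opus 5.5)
The plan is to establish a bijection between feasible covering families and feasible hypergraphs that preserves size, and then take minima on both sides. To a family $\mathcal B\subseteq\binom{[n]}{n-q}$ we associate the $q$-uniform hypergraph $\mathcal H=\{[n]\setminus B: B\in\mathcal B\}$ on $[n]$. Complementation is injective, and its inverse sends any $q$-uniform hypergraph back to a family of $(n-q)$-subsets. This is a bijection between families of $(n-q)$-sets and $q$-uniform hypergraphs, and it satisfies $\lvert\mathcal H\rvert=\lvert\mathcal B\rvert$. The hypothesis $q\geq1$ ensures the edges are non-empty. The hypothesis $q\leq n-r$ gives $n-q\geq r$, so $\Cover(n,n-q,r)$ is defined in the sense of Section~\ref{sec:definitions}.

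The key step is a pointwise equivalence. For an $r$-set $R$ and a block $B$ with $E=[n]\setminus B$, we have $R\subseteq B$ if and only if $R\cap E=\emptyset$. It follows that $\mathcal B$ covers every $r$-subset exactly when no $r$-subset meets every edge of $\mathcal H$, that is, when no $r$-set is a transversal of $\mathcal H$.

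We then upgrade ``no transversal of size exactly $r$'' to ``no transversal of size at most $r$'', which is $\tau(\mathcal H)\geq r+1$. A transversal of size $s<r$ can be enlarged to size $r$, since $r\leq n-q\leq n$, and any superset of a transversal is still a transversal. Conversely, a transversal of size $r$ already gives $\tau\leq r$. So $\mathcal B$ is an $(n,n-q,r)$ covering if and only if $\tau(\mathcal H)\geq r+1$.

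Because the bijection preserves cardinality, the two minima coincide, which proves the lemma. The only point needing care is the monotonicity step that passes between ``exactly $r$'' and ``at most $r$''. Everything else is direct complementation, so no real obstacle is expected.
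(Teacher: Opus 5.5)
Your proposal is correct and follows essentially the same route as the paper: complement each $(n-q)$-block to get a $q$-uniform hypergraph of the same size, use $R\subseteq B \iff R\cap([n]\setminus B)=\varnothing$, and enlarge transversals of size less than $r$ to pass from ``no $r$-set is a transversal'' to $\tau(\mathcal H)\geq r+1$. Your remark that $q\leq n-r$ is what makes $\Cover(n,n-q,r)$ well defined, and permits the enlargement, is a small explicit addition to the paper's argument.
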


\begin{proof}
Given a family $\mathcal B$ of $(n-q)$-blocks, define
\[
 \mathcal H=\{[n]\setminus B:B\in\mathcal B\}.
\]
Complementation is a bijection, so $\mathcal H$ is $q$-uniform and
$\lvert\mathcal H\rvert=\lvert\mathcal B\rvert$. An $r$-set $T$ satisfies
$T\subseteq B$ if and only if $T$ is disjoint from the edge
$[n]\setminus B$. Hence the blocks cover every $r$-set if and only if
every $r$-set is disjoint from some edge, that is, if and only if no
$r$-set meets every edge. Any vertex cover with fewer than $r$ vertices can
be enlarged to an $r$-set, so this is equivalent to
$\tau(\mathcal H)\geq r+1$.
\end{proof}

We shall use the lemma only when $q=n-2r+1\geq1$, so that
$n-q=2r-1$. In this case $n\geq2r$, and hence $q\leq n-r$.

\begin{remark}
An $r$-set is a vertex cover precisely when its complement contains no
edge. Hence $\tau(\mathcal H)\geq r+1$ is equivalent to requiring every
$(n-r)$-set to contain an edge. In the Tur\'an-system notation introduced in Section~1,
Lemma~\ref{lem:complement} therefore gives
\[
 \Cover(n,n-q,r)=T(n,n-r,q).
\]
\end{remark}

\begin{corollary}[The case $q=2$]\label{cor:near-majority}
Let $n\geq3$ be odd. Then
\[
 \BNN\left(TH_{(n-1)/2}^{\,n}\right)
 =
 \BNN\left(TH_{(n+3)/2}^{\,n}\right)
 =\frac{n+5}{2}.
\]
\end{corollary}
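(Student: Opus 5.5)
By Theorem~\ref{thm:covering}, with $n$ odd and $t=(n-1)/2$ we have $r=\min\{(n-1)/2,(n+3)/2\}=(n-1)/2$. The dual level $t=(n+3)/2=n-r+1$ gives the same $r$, so both functions have complexity $\Cover(n,2r-1,r)+1$. Here $q=n-2r+1=2$, and $q\leq n-r$ holds for $n\geq3$. So Lemma~\ref{lem:complement} applies: $\Cover(n,n-2,r)$ is the minimum number of edges of a graph $G$ on $[n]$ with vertex cover number $\tau(G)\geq r+1=(n+1)/2$. The claim is therefore that this minimum equals $(n+3)/2$, and I would prove the two inequalities separately.

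For the upper bound, take a triangle on three vertices together with a perfect matching on the remaining $n-3$ vertices; this is possible since $n-3$ is even. It has $3+(n-3)/2=(n+3)/2$ edges. Its vertex cover number is $2+(n-3)/2=(n+1)/2$, because a triangle needs two cover vertices and each matching edge needs its own. For $n=3$ the graph is just the triangle, with $\tau=2=r+1$.

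For the lower bound I would prove the general inequality
\[
 \tau(G)\leq\left\lfloor\frac{n+2e}{4}\right\rfloor
\]
for every graph with $n$ vertices and $e$ edges, by induction on $n$.

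1. If the maximum degree is at most $1$, then $G$ is a matching. In that case $\tau(G)=e$ and $2e\leq n$, so $4e\leq n+2e$.
2. Otherwise, pick a vertex $v$ of degree $d\geq2$. Then $\tau(G)\leq1+\tau(G-v)$, and $G-v$ has $n-1$ vertices and $e-d\leq e-2$ edges.
3. By induction, $\tau(G-v)\leq\lfloor(n-1+2e-4)/4\rfloor$. Hence $\tau(G)\leq\lfloor(n+2e-1)/4\rfloor\leq\lfloor(n+2e)/4\rfloor$.

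Applying this with $e\leq(n+1)/2$ gives $\tau\leq\lfloor(2n+1)/4\rfloor=(n-1)/2$ for odd $n$. So any graph with $\tau\geq(n+1)/2$ has at least $(n+3)/2$ edges.

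The main obstacle is finding this edge–cover inequality in a form that survives the induction. Naive component-wise bounds, such as tree bounds plus a cyclomatic correction, do not sum cleanly, whereas the degree-$\geq2$ deletion step combined with the matching base case closes exactly. With both inequalities in hand, $\Cover(n,n-2,r)=(n+3)/2$, and adding $1$ gives $(n+5)/2$ for both threshold levels.
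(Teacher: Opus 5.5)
Your proof is correct, and it differs from the paper's only in the lower bound. You apply Theorem~\ref{thm:covering} directly to both thresholds (they share $r=(n-1)/2$), check $q\le n-r$ for Lemma~\ref{lem:complement}, and use the triangle-plus-matching construction, exactly as the paper does. Your inequality $\tau(G)\le\lfloor(n+2e)/4\rfloor$ is also valid: the matching base case and the degree-$\geq2$ deletion step both check out, and for $e\le(n+1)/2$ it yields $\tau\le(n-1)/2$.

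The paper gets the lower bound without any general inequality. Suppose $G$ has $m\le r+1$ edges. If the edges are pairwise disjoint, then $m\le r$, and one endpoint per edge gives a cover. Otherwise two edges share a vertex, and that vertex together with one endpoint from each of the other $m-2$ edges is a cover of size at most $m-1\le r$.

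This is a single application of your deletion step $\tau(G)\le1+\tau(G-v)$, followed by the trivial bound $\tau\le e$. So the induction you describe as the main obstacle is not needed for this corollary.

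What your route buys is a statement valid for every graph. It is tight both for perfect matchings and for the triangle-plus-matching extremal graph. When $2e\le n$ it is at least as strong as the $q=2$ case $\tau\le(n+e)/3$ of the Chv\'atal--McDiarmid bound used in Corollary~\ref{cor:fixed-q}. At $e=(n+1)/2$ the two bounds give the same integer conclusion.
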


\begin{proof}
Put $r=(n-1)/2$, so that $n=2r+1$ and $q=2$. By
Lemma~\ref{lem:complement}, it suffices to show that the
minimum number of edges of a graph $G$ on $2r+1$ vertices with
$\tau(G)\geq r+1$ is $r+2$.

Suppose that $G$ has $m\leq r+1$ edges. If they are pairwise disjoint,
then $m\leq r$, and choosing one endpoint of each edge gives a vertex
cover of size at most $r$. Otherwise, two edges meet. Their common
endpoint, together with one endpoint from each of the remaining edges,
is a vertex cover of size at most $1+(m-2)\leq r$. Thus any graph with
$\tau(G)\geq r+1$ has at least $r+2$ edges.

Conversely, take a triangle and a matching of size $r-1$ on the remaining
$2r-2$ vertices. A triangle requires two vertices in any vertex cover,
while the $r-1$ disjoint matching edges require one vertex each. Since
these components are vertex-disjoint, $\tau(G)=2+(r-1)=r+1$.

Thus the covering number is $r+2$. Theorem~\ref{thm:covering} therefore gives
\[
 \BNN\left(TH_{(n-1)/2}^{\,n}\right)=r+3=\frac{n+5}{2},
\]
and the value for the dual threshold follows from
Lemma~\ref{lem:nn-symmetries}.
\end{proof}

The majority calculations above give the cases $q=0,1$, and
Corollary~\ref{cor:near-majority} gives $q=2$. The case $q=3$ can also be
evaluated exactly.

\begin{corollary}[The case $q=3$]\label{cor:q3}
Let $n\geq4$ be even and put $r=(n-2)/2$, so that
$q=n-2r+1=3$. Let $t\in\{r,\,n-r+1\}$. Then
\[
 \Cover(n,n-3,r)=n
\]
and hence
\[
 \BNN(TH_t^n)=n+1.
\]
\end{corollary}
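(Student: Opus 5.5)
By Theorem~\ref{thm:covering} it suffices to prove $\Cover(n,n-3,r)=n$ for $n=2r+2$. The plan is to pass through Lemma~\ref{lem:complement}, which applies since $1\leq q=3\leq n-r=r+2$. Because $r+1=n/2$, the claim becomes the following hypergraph statement: the least number of edges in a $3$-uniform hypergraph $\mathcal H$ on $n$ vertices with $\tau(\mathcal H)\geq n/2$ is exactly $n$. I would prove the upper and lower bounds separately. The value $\BNN(TH_t^n)=n+1$ then follows from Theorem~\ref{thm:covering}.

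\emph{Construction.} The basic gadget is the complete $3$-uniform hypergraph $K_4^{(3)}$ on four vertices. It has $4$ edges and $\tau=2$, because any single vertex misses the triple formed by the other three. If $4\mid n$, I would take $n/4$ vertex-disjoint copies; this gives $n$ edges and $\tau=n/2$. If $n\equiv2\pmod 4$, I would replace one copy by a gadget on six vertices with $6$ edges and $\tau=3$. Such a gadget exists by the following argument. The condition $\tau\geq3$ on six vertices says that every pair is disjoint from some edge. Equivalently, the complementary triples cover every pair, so the gadget needs $\Cover(6,3,2)$ edges. By the Fort--Hedlund formula quoted in Corollary~\ref{cor:fixed-r}, $\Cover(6,3,2)=\lceil 2\cdot3\rceil=6$. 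Combining this gadget with $(n-6)/4$ copies of $K_4^{(3)}$ gives $n$ edges and $\tau=3+(n-6)/2=n/2$. The case $n=4$ is a single $K_4^{(3)}$, and $n=6$ is the six-vertex gadget alone.

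\emph{Lower bound.} I would prove the general inequality
\[
 \tau(\mathcal H)\leq\frac{v+m}{4}
\]
for every $3$-uniform hypergraph with $v$ vertices and $m$ edges. Equality holds for $K_4^{(3)}$ and for a single edge. Granted this inequality, $m\leq n-1$ gives $\tau\leq(2n-1)/4<n/2$, so $\tau\leq n/2-1=r$. Hence at least $n$ edges are needed.

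The inequality would be proved by induction on $v+m$. An isolated vertex can be deleted, since deletion only lowers the right-hand side. Suppose some vertex $x$ has degree $d\geq3$. Put $x$ in the cover and delete it together with its edges. Induction then gives $\tau\leq1+(v-1+m-d)/4\leq(v+m)/4$.

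The main obstacle is the remaining case, in which every degree is $1$ or $2$. Here the same step overshoots by $1/4$ for a degree-$2$ vertex. I would first try the standard dual viewpoint. Form the multigraph $G$ whose nodes are the $m$ edges of $\mathcal H$, where each degree-$2$ vertex of $\mathcal H$ is an edge of $G$ joining the two hyperedges containing it, and each degree-$1$ vertex is a pendant half-edge. Then $\tau(\mathcal H)$ is the minimum edge cover of $G$, where pendant half-edges count as covers. By Gallai's theorem this equals $m-\nu$, where $\nu$ is the maximum matching size and pendant half-edges are handled by attaching a private leaf. Degree counting gives $3m=v_1+2v_2$ with $v=v_1+v_2$. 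I would then bound $\nu$ below component by component; greedy or Vizing-type matching bounds in a multigraph of maximum degree $3$ should suffice to reach $m-\nu\leq(v+m)/4$. Components isomorphic to $K_4^{(3)}$ are the tight case and need checking separately. If this case analysis becomes messy, the fallback is to use integrality: apply the bound with a floor, and choose the degree-$2$ vertex so that its removal creates an isolated vertex, which recovers the lost $1/4$.
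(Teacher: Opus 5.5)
Your reduction via Lemma~\ref{lem:complement} and your construction are sound, and they follow the paper: disjoint copies of $K_4^{(3)}$, plus a six-vertex gadget when $n\equiv2\pmod4$. Obtaining that gadget from $\Cover(6,3,2)=6$ (Lemma~\ref{lem:complement} with $n=6$, $r=2$, and Fort--Hedlund) is a valid substitute for the paper's explicit $H_6$, the translates of $\{0,1,3\}$ in $\mathbb Z_6$. The difference, and the problem, is the lower bound. The paper takes $\tau(\mathcal H)\leq(v+m)/4$ directly from the $q=3$ case of Chv\'atal--McDiarmid. You set out to prove it yourself, and your argument stops at exactly the hard case.

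Suppose every degree is $1$ or $2$, with $v_1,v_2$ vertices of each degree, and let $G$ be your dual multigraph.

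\emph{The reduction to a matching bound.} Here $\tau(\mathcal H)=m-\nu(G)$, with $\nu$ taken over the genuine edges of $G$ only. Half-edges matter only because every node has degree $3$ and so can be covered singly. Attaching private leaves is wrong: for a single triple it gives $K_{1,3}$ and the wrong value. Since $3m=v_1+2v_2$, the target inequality is \emph{equivalent} to $\nu(G)\geq v_2/4=|E(G)|/4$, with no slack. It is sharp. The triples $\{x_1,x_2,y\},\{x_1,x_2,z\},\{y,z,w\}$ have $\tau=2=(5+3)/4$, and their dual has four edges and $\nu=1$. So the critical configurations are these, not $K_4^{(3)}$, which your degree-$3$ step already handles.

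\emph{Why your tools fall short.} A maximal matching gives only $\nu\geq|E(G)|/5$. Vizing's multigraph bound gives $\chi'\leq\Delta+\mu=5$, since two triples can share two vertices and so $\mu=2$; this again yields only $|E(G)|/5$. Your fallback also fails. In the hypergraph dual to $K_{3,3}$ ($v=9$, $m=6$, all degrees $2$, linear), no deletion isolates a vertex, and the step gives $1+\lfloor12/4\rfloor=4>3=\lfloor15/4\rfloor$.

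\emph{How to repair it.} The missing ingredient is Shannon's theorem $\chi'\leq\lfloor3\Delta/2\rfloor=4$, which gives $\nu\geq|E(G)|/4$ at once. Alternatively, combine your two tools. If deleting some degree-$2$ vertex isolates another vertex, the induction loses nothing. Otherwise every component containing a degree-$2$ vertex is $2$-regular and linear, so its dual is a \emph{simple} cubic graph and Vizing's $\chi'\leq4$ applies. The remaining components are single triples, with $\tau=1=(3+1)/4$. The simplest repair is to cite Chv\'atal--McDiarmid, as the paper does.
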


\begin{proof}
By Lemma~\ref{lem:complement}, $\Cover(n,n-3,r)$ is the least number of
edges of a $3$-uniform hypergraph $\mathcal H$ on $[n]$ with
$\tau(\mathcal H)\geq r+1=n/2$.
The $q=3$ case of the theorem of Chv\'atal and McDiarmid
\cite{CM92} states that every $3$-uniform hypergraph on $n$
vertices satisfies
\[
 \tau(\mathcal H)\leq\frac{n+\lvert\mathcal H\rvert}{4}.
\]
It follows that $\lvert\mathcal H\rvert\geq n$.

For the matching upper bound, note first that $\tau(K_4^{(3)})=2$, where
$K_4^{(3)}$ is the complete $3$-uniform hypergraph on four vertices: two
vertices meet every $3$-subset of a $4$-set, while a single vertex
misses the $3$-subset on the remaining three vertices.

Next, let $H_6$
be the $3$-uniform hypergraph on $\mathbb Z_6$ with the six edges
$\{x,x+1,x+3\}$ for $x\in\mathbb Z_6$. The three pairs in the base edge
$\{0,1,3\}$ realise the differences $1$, $2$, and $3$ up to sign modulo $6$.
Since every pair of distinct vertices of $\mathbb Z_6$ has one of these
differences, and the edge set consists of all translates of the base edge,
every pair of vertices lies in some edge. Since each vertex has degree
three, a pair of vertices meets at most $3+3-1=5$ of the six edges, and so
misses one.  Hence $\tau(H_6)\geq3$.  Conversely, $\{0,1,2\}$ meets
every edge, so $\tau(H_6)=3$.

Vertex-cover numbers add over disjoint
components.  For $n\equiv0\pmod4$, take $n/4$
pairwise disjoint copies of $K_4^{(3)}$: together they have $n$ edges
and $\tau=n/2$. For $n\equiv2\pmod4$, take $(n-6)/4$ pairwise
disjoint copies of $K_4^{(3)}$ and one further disjoint copy of $H_6$:
again there are $n$ edges, and
$\tau=2\cdot\tfrac{n-6}4+3=\tfrac n2$.

This gives a $3$-uniform hypergraph with $n$ edges and
$\tau(\mathcal H)=n/2$, so the lower bound is attained. The formula for
$\BNN$ follows from Theorem~\ref{thm:covering}.
\end{proof}

We next give bounds when $q$, and hence the displacement from the central
threshold levels, is fixed.

\begin{corollary}\label{cor:fixed-q}
Let $1\leq q\leq n-1$, and suppose that $n$ and $q$ have opposite parity.
Put
\[
 r=\frac{n-q+1}{2},
\]
and let
\[
 t\in\left\{\frac{n-q+1}{2},\frac{n+q+1}{2}\right\}.
\]
Then
\[
 r+2
 \leq
 \BNN(TH_t^n)
 \leq
 1+2^q(1+n\ln2).
\]
If $q\geq2$, then the lower bound can be strengthened to
\[
 \BNN(TH_t^n)
 \geq
 1+\left\lceil
 \frac{\lfloor3q/2\rfloor(r+1)-n}{\lfloor q/2\rfloor}
 \right\rceil.
\]
For every fixed $q\geq1$, as $n\to\infty$ through integers of parity
opposite to $q$,
\[
 \BNN(TH_t^n)=\Theta(n).
\]
\end{corollary}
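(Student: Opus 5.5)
By Theorem~\ref{thm:covering} we have $\BNN(TH_t^n)=\Cover(n,n-q,r)+1$, since $2r-1=n-q$. By Lemma~\ref{lem:complement} (valid because $q\leq n-r$ when $n\geq 2r$), the problem therefore reduces to estimating the minimum number $M$ of edges in a $q$-uniform hypergraph $\mathcal H$ on $[n]$ with $\tau(\mathcal H)\geq r+1$. All three bounds will be proved for $M$, and then $1$ is added.

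For the upper bound we use Lemma~\ref{lem:standard-cover-estimates} with $b=n-q=2r-1$ and $s=r$. This gives
\[
 \Cover(n,n-q,r)\leq \frac{\binom nr}{\binom{n-q}{r}}\Bigl(1+\ln\binom{n-q}{r}\Bigr).
\]
Write $\binom nr/\binom{n-q}r=\prod_{i=0}^{q-1}\frac{n-i}{n-i-r}$. Each factor is at most $2$, because $n-i\leq 2(n-i-r)$ is equivalent to $2r\leq n-i$, which holds since $i\leq q-1=n-2r$. Also $\binom{n-q}r\leq 2^n$, so the logarithmic factor is at most $1+n\ln 2$. This gives the stated upper bound.

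For the simple lower bound $M\geq r+1$, a greedy argument suffices: choosing one vertex from each edge gives a cover of size at most $|\mathcal H|$, so $|\mathcal H|\geq\tau\geq r+1$. The case of equality would need pairwise disjoint edges, since otherwise a shared vertex saves one. Disjoint edges would require $(r+1)q\leq n=2r+q-1$, i.e.\ $rq\leq 2r-1$, which forces $q=1$. When $q=1$, however, $\tau=|\mathcal H|$, and $r+1$ singleton edges are impossible only if $r+1>n$, which does not happen. So the case $q=1$ needs a separate look: then $n=2r$, the bound reads $r+2=n/2+2$, and it matches the even-majority value already computed. For $q\geq2$, disjoint edges are impossible, so some two edges meet, and covering them by their common vertex plus one vertex per remaining edge gives $\tau\leq |\mathcal H|-1$. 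Hence $M\geq r+2$ and $\BNN\geq r+3\geq r+2$.

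For the strengthened bound with $q\geq2$, I would build a small cover greedily in two stages. First, repeatedly pick a vertex of degree at least $2$ while possible; each such vertex kills at least two edges. After that, the remaining edges are pairwise disjoint, so there are at most $\lfloor n'/q\rfloor$ of them on the remaining vertex set, and one vertex each covers them. A cleaner route is to generalize the Chv\'atal--McDiarmid type inequality: for a $q$-uniform hypergraph one expects
\[
 \tau\leq \frac{\lfloor q/2\rfloor\,|\mathcal H|+n}{\lfloor 3q/2\rfloor},
\]
which is exactly what the displayed bound rearranges to, with $\tau\geq r+1$ and ceiling by integrality. This is indeed the general form of the Chv\'atal--McDiarmid theorem \cite{CM92}, which the paper has already invoked for $q=3$ (where it gives $(n+|\mathcal H|)/4$), so I would cite it for general $q$. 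Checking that the specialization matches the $q=3$ case confirms the form; verifying the exact constants of that citation is the main obstacle.

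Finally, for fixed $q$, the lower bound $r+2=(n-q+5)/2$ is linear in $n$. The upper bound $1+2^q(1+n\ln 2)$ is $O(n)$ with $q$ fixed. Together these give $\Theta(n)$.
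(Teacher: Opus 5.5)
Your proposal is correct and follows the paper's proof. The steps match: the reduction through Theorem~\ref{thm:covering} and Lemma~\ref{lem:complement}; the one-vertex-per-edge transversal giving $r+1\leq\tau(\mathcal H)\leq\lvert\mathcal H\rvert$; the general Chv\'atal--McDiarmid inequality, whose form you state correctly, for the strengthened bound; and Lemma~\ref{lem:standard-cover-estimates} with the factor-by-factor estimate $\binom nr/\binom{n-q}r\leq2^q$ for the upper bound. Your extra disjoint-edges argument giving $\BNN(TH_t^n)\geq r+3$ for $q\geq2$ is also valid, though not needed for the statement and already implied by the Chv\'atal--McDiarmid bound.
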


\begin{proof}
Let $\mathcal H$ be any hypergraph in the characterisation of
Lemma~\ref{lem:complement}, and put $m=\lvert\mathcal H\rvert$. Choosing
one vertex from each edge gives a vertex cover of size at most $m$. Hence
\[
 r+1\leq\tau(\mathcal H)\leq m.
\]
Every such hypergraph therefore has at least $r+1$ edges, and hence
Lemma~\ref{lem:complement} gives $\Cover(n,n-q,r)\geq r+1$. Together with
Theorem~\ref{thm:covering}, this gives the first lower bound.

For $q\geq2$, the theorem of Chv\'atal and McDiarmid
\cite{CM92} states that every $q$-uniform hypergraph
$\mathcal H$ on a vertex set of size $n$ with $m$ edges satisfies
\[
 \tau(\mathcal H)
 \leq
 \frac{n+\lfloor q/2\rfloor m}{\lfloor3q/2\rfloor}.
\]
Since the hypergraph here has $\tau(\mathcal H)\geq r+1$, rearranging and
using the integrality of $m$ gives the strengthened lower bound.
For $q=2$ the bound reads
$\BNN(TH_t^n)\geq(n+5)/2$, and for $q=3$ it reads
$\BNN(TH_t^n)\geq n+1$. By Corollary~\ref{cor:near-majority} and
Corollary~\ref{cor:q3}, respectively, both bounds are exact.

For the upper bound, Lemma~\ref{lem:standard-cover-estimates} gives
\[
 \Cover(n,b,s)
 \leq
 \frac{\binom ns}{\binom bs}
 \left(1+\ln\binom bs\right).
\]
Apply this with $b=n-q=2r-1$ and $s=r$. The ratio in this bound satisfies
\[
 \frac{\binom nr}{\binom{2r-1}r}
 =
 \prod_{i=0}^{q-1}\frac{2r+i}{r+i}
 \leq 2^q,
\]
since each of the $q$ factors is at most $2$. Also,
$\ln\binom{n-q}r\leq n\ln2$. It follows that
\[
 \Cover(n,n-q,r)\leq2^q(1+n\ln2).
\]
By Theorem~\ref{thm:covering},
$\BNN(TH_t^n)=\Cover(n,n-q,r)+1$, so
\[
 \BNN(TH_t^n)\leq1+2^q(1+n\ln2).
\]

If $q\geq1$ is fixed, then $r=(n-q+1)/2=\Theta(n)$. The lower and upper
bounds above therefore give $\BNN(TH_t^n)=\Theta(n)$.
\end{proof}

The following table summarises the main consequences of this section. Here
$r=\min\{t,n-t+1\}$ and $q=n-2r+1$; in the final row,
$r=(n-q+1)/2$ and $t\in\{(n-q+1)/2,(n+q+1)/2\}$.

\begin{center}
\renewcommand{\arraystretch}{1.25}
\small
\begin{tabular}{>{\raggedright\arraybackslash}p{0.16\textwidth}>{\raggedright\arraybackslash}p{0.48\textwidth}>{\raggedright\arraybackslash}p{0.27\textwidth}}
\textbf{Range} & \textbf{Result obtained} & \textbf{Main input} \\
\hline
$1\leq t\leq n$
& $\BNN(TH_t^n)=\Cover(n,2r-1,r)+1$
& boundary containment \\
$t/n\to\alpha$
& $\log_2\BNN(TH_t^n)=(\rho(\alpha)+o(1))n$
& standard covering estimates \\
$r$ fixed
& $\BNN(TH_t^n)\sim\binom nr/\binom{2r-1}r$
& R\"odl's theorem \cite{Rodl85} \\
$q=0$
& $\BNN(TH_{(n+1)/2}^n)=2$ for odd $n$
& two antipodal prototypes \\
$q=1$
& $\BNN(TH_{n/2}^n)=\BNN(TH_{n/2+1}^n)=n/2+2$ for even~$n$
& majority calculation \\
$q=2$
& $\BNN(TH_{(n-1)/2}^n)=\BNN(TH_{(n+3)/2}^n)=(n+5)/2$ for odd $n$
& graph vertex covers \\
$q=3$
& $\BNN(TH_{(n-2)/2}^n)=\BNN(TH_{(n+4)/2}^n)=n+1$ for even~$n$
& Chv\'atal--McDiarmid \cite{CM92} \\
$q\geq1$ fixed
& $\BNN(TH_t^n)=\Theta(n)$, with explicit bounds in Corollary~\ref{cor:fixed-q}
& covering estimates and Chv\'atal--McDiarmid \cite{CM92}
\end{tabular}
\end{center}

\section{Monotone extensions and one-negative representations}
\label{sec:monotone-extensions}

This section studies monotone functions that have especially simple
nearest-neighbour representations. We first give a general covering lower bound,
then apply it to functions built from smaller Boolean functions. The section
ends with a characterisation of the monotone functions that can be represented
using just one negative prototype.

Recall from Section~\ref{subsec:boundary-constraints} that $A(f)$ is the set
of exposed minimal positive points of $f$. Thus $a\in A(f)$ is a minimal point
of $f^{-1}(1)$ that has a maximal negative lower neighbour.

For $A,C\subseteq[n]$, we say that $C$ \emph{covers} $A$ if $A\subseteq C$,
and that $C$ covers $a\in\cube$ if it contains $\supp(a)$. For a
non-constant monotone Boolean function $f$, let $c(f)$ be the least size
of a family $\mathcal B$ of subsets of $[n]$ such that, for every exposed
minimal positive point $a\in A(f)$, some $B\in\mathcal B$ satisfies
$\supp(a)\subseteq B$ and $|B|\leq2\wt a-1$. In particular,
$c(f)=0$ when $A(f)=\varnothing$. Recall that
$f^*(x)=1-f(\mathbf1-x)$ is the dual monotone function.

\begin{proposition}[Covering lower bound for monotone functions]\label{prop:monotone-cover}
Every Boolean nearest-neighbour representation $(P,N)$ of a non-constant
monotone Boolean function $f$ satisfies
\[
 |P|\geq\max\{1,c(f)\},\qquad
 |N|\geq\max\{1,c(f^*)\}.
\]
Consequently,
\[
 \BNN(f)\geq\max\{1,c(f)\}+\max\{1,c(f^*)\}
 \geq\max\{1,c(f)\}+1.
\]
\end{proposition}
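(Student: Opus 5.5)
The plan is to bound $|P|$ via Part~3 of Lemma~\ref{lem:boundary}, and then to obtain the bound on $|N|$ by duality through Lemma~\ref{lem:nn-symmetries}.

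\emph{Step 1: $|P|\geq1$ and $|N|\geq1$.} Since $f$ is non-constant, both labels occur. Every positive point needs a closest prototype carrying label $1$, so $P\neq\varnothing$. The same argument applied to a negative point gives $N\neq\varnothing$.

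\emph{Step 2: $|P|\geq c(f)$.} If $A(f)=\varnothing$ there is nothing to prove, so suppose otherwise. For each $a\in A(f)$, choose a maximal negative lower neighbour $b$ of $a$. By Step~1, $N\neq\varnothing$, so there is a $q\in N$ closest to $b$ among the negative prototypes. Likewise there is a $p_a\in P$ closest to $a$ among the positive prototypes. Part~3 of Lemma~\ref{lem:boundary} then gives $a\leq p_a$ and $\wt{p_a}\leq2\wt a-1$.

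Now put $\mathcal B=\{\supp(p_a):a\in A(f)\}$, taken as a set so that repeated supports are counted once. Each $a\in A(f)$ satisfies $\supp(a)\subseteq\supp(p_a)$ and $|\supp(p_a)|\leq2\wt a-1$, so $\mathcal B$ is an admissible family in the definition of $c(f)$. The map $p\mapsto\supp(p)$ is injective on $P$, hence $|\mathcal B|\leq|P|$, and therefore $c(f)\leq|P|$.

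Monotonicity is not actually needed in this step. The one point that needs care is that the size constraint $|B|\leq2\wt a-1$ depends on $a$, so a single block may serve several points of $A(f)$ with different weight bounds. This causes no difficulty: for each $a$ we use the block $\supp(p_a)$, which satisfies the constraint attached to that particular $a$.

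\emph{Step 3: the bound on $|N|$.} The dual $f^*$ is non-constant and monotone. By Lemma~\ref{lem:nn-symmetries}, applied with $T(x)=\mathbf1-x$ followed by negation, the pair $(\mathbf1-N,\mathbf1-P)$ is a Boolean representation of $f^*$; these sets stay inside the cube because $T$ maps the cube to itself. Its positive prototype set $\mathbf1-N$ has size $|N|$. Applying Steps~1 and~2 to this representation of $f^*$ gives $|N|\geq\max\{1,c(f^*)\}$.

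Finally, $\BNN(f)=|P|+|N|$ for an optimal representation, so summing the two bounds gives the stated inequality. The last inequality in the statement then follows from $\max\{1,c(f^*)\}\geq1$. I do not expect a serious obstacle here. The only delicate point is checking that the hypotheses of Part~3 hold, namely that the chosen $b$ is maximal negative and that $N$ is non-empty, and Step~1 together with the definition of exposure supplies both.
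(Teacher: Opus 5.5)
Your proposal is correct and follows the paper's proof essentially step for step. You apply Part~3 of Lemma~\ref{lem:boundary} at each exposed minimal positive point to obtain a family of positive supports that is admissible for $c(f)$, use non-constancy for the bound $1$, and transfer the argument to $N$ through the complement-and-swap representation of $f^*$ from Lemma~\ref{lem:nn-symmetries}.
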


\begin{proof}
For every $a\in A(f)$, Part 3 of Lemma~\ref{lem:boundary} supplies a
positive prototype $p$ such that $\supp(a)\subseteq\supp(p)$ and
$\wt p\leq2\wt a-1$. Thus the positive supports form a family satisfying
the conditions in the definition of $c(f)$, so $|P|\geq c(f)$. Non-constancy also gives
$|P|\geq1$. By Lemma~\ref{lem:nn-symmetries}, complementing every
coordinate and swapping the labels gives a representation of $f^*$ whose
positive prototypes correspond to $N$. Applying the same bound to this
representation gives $|N|\geq\max\{1,c(f^*)\}$. Adding the inequalities
and minimising over representations proves the bound for $\BNN(f)$.
\end{proof}

If $A(f)$ and $A(f^*)$ are non-empty, then $c(f)\geq1$ and
$c(f^*)\geq1$, so Proposition~\ref{prop:monotone-cover} gives
\[
 \BNN(f)\geq c(f)+c(f^*).
\]
For a non-constant function, every representation has at least one positive
and at least one negative prototype, which accounts for the $1$ in the weaker
bound
\[
 \BNN(f)\geq\max\{1,c(f)\}+1.
\]
This $+1$ is best possible in general: for $t\leq(n+1)/2$, the optimal
covering representation of $TH_t^n$ uses $0^n$ as its only negative prototype,
by Theorem~\ref{thm:covering} and Corollary~\ref{cor:one-side}.

\subsection{Disjunctive lifts}
\label{subsec:disjunctive-lifts}

We next consider Boolean functions on $\{0,1\}^h\times\{0,1\}^m$, writing
a point as $(z,x)$ with $z\in\{0,1\}^h$ and $x\in\{0,1\}^m$. We call $x$
the core part of the point and $z$ the auxiliary part; correspondingly, the
$x$-coordinates are the core coordinates and the $z$-coordinates are the
auxiliary coordinates. Given a Boolean function $f$ on $\{0,1\}^m$, its
$h$-fold disjunctive lift is $G_h(z,x)=\operatorname{OR}_h(z)\vee f(x)$.
Here $f$ is the core function. Write $e_j$ for the $j$th unit vector of
$\{0,1\}^h$.

The following theorem treats the disjunctive lift of an arbitrary non-constant
monotone core function.

\begin{theorem}[Disjunctive lifts of monotone functions]\label{thm:general-disjunctive-lift}
Let $f$ be a non-constant monotone Boolean function on $\{0,1\}^m$, let
$h\geq1$, and put
\[
 G_h(z,x)=\operatorname{OR}_h(z)\vee f(x).
\]
Every Boolean nearest-neighbour representation $(P,N)$ of $G_h$ satisfies
\[
 |P|+|N|\geq\max\{1,c(f)\}+(h+1)|N|.
\]
If $f$ has an optimal representation with exactly one negative prototype,
then
\[
 \BNN(G_h)\leq\BNN(f)+h.
\]
In particular, if
\[
 \BNN(f)=\max\{1,c(f)\}+1,
\]
so that the lower bound from Proposition~\ref{prop:monotone-cover} is attained
using only the positive-side covering parameter $c(f)$ together with one
negative prototype, then
\[
 \BNN(G_h)=\BNN(f)+h,
\]
and every optimal representation of $G_h$ has exactly one negative
prototype.
\end{theorem}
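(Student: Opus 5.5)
The plan is to prove the lower bound in the sharper form $|P|\geq\max\{1,c(f)\}+h|N|$, from which the displayed inequality follows on adding $|N|$ to both sides. Write points as $(z,x)$. The negative points of $G_h$ are exactly the points $(0,x)$ with $f(x)=0$, so every negative prototype has the form $(0,y)$ with $f(y)=0$.

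\emph{Forced positive prototypes.} For each negative prototype $(0,y)$ and each $j\leq h$, the point $(e_j,y)$ is positive and is a cube neighbour of $(0,y)$. By Proposition~\ref{prop:edge-closure}, $(e_j,y)$ is then a prototype. This gives a set
\[
 F=\{(e_j,y):(0,y)\in N,\ 1\leq j\leq h\}\subseteq P
\]
of exactly $h|N|$ distinct positive prototypes. Every member of $F$ has a core part $y$ with $f(y)=0$.

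\emph{Prototypes forced by the core boundary.} Next I would find $\max\{1,c(f)\}$ further positive prototypes outside $F$; the key is that each of them has a core part $y'$ with $f(y')=1$.
\begin{itemize}
\item If $a$ is a minimal positive point of $f$, then $(0,a)$ is a minimal positive point of $G_h$. Its lower neighbours have the form $(0,a-e_i)$ and are negative.
\item If $b$ is a maximal negative point of $f$, then $(0,b)$ is maximal negative for $G_h$, since raising any auxiliary coordinate makes the point positive.
\item Hence $a\in A(f)$ implies that $(0,a)$ is exposed for $G_h$.
\end{itemize}
Part 3 of Lemma~\ref{lem:boundary} then supplies a positive prototype $p=(z,y')\geq(0,a)$ with $\wt p\leq2\wt a-1$. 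Its core support therefore contains $\supp(a)$ and has size at most $2\wt a-1$. Since $y'\geq a$, monotonicity gives $f(y')=1$, so $p\notin F$.

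The core supports of these prototypes form a family of the kind in the definition of $c(f)$, so there are at least $c(f)$ positive prototypes outside $F$. For the alternative term $1$, take any minimal positive point $x$ of $f$. Part 1 of Lemma~\ref{lem:boundary} gives a positive prototype above $(0,x)$, whose core part again has $f$-value $1$ and so lies outside $F$. Together these give $|P|\geq h|N|+\max\{1,c(f)\}$.

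\emph{Upper bound.} Let $(P_0,\{q\})$ be an optimal representation of $f$ with one negative prototype. I would use
\[
 P'=\{(0,p):p\in P_0\}\cup\{(e_j,q):1\leq j\leq h\},\qquad N'=\{(0,q)\}.
\]
This has $\BNN(f)+h$ prototypes. The verification splits on the auxiliary part $z$:
\begin{itemize}
\item If $z_j=1$ for some $j$, then $(e_j,q)$ is at distance $\wt z-1+\dist(x,q)$, which is strictly less than the distance $\wt z+\dist(x,q)$ to the only negative prototype. So the point is correctly classified as positive.
\item If $z=0$, each $(e_j,q)$ is at distance $1+\dist(x,q)$, strictly farther than $(0,q)$, so these extra prototypes never decide anything. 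The distances to the remaining prototypes are those of the representation $(P_0,\{q\})$ of $f$, so the classification agrees with $f(x)=G_h(0,x)$.
\end{itemize}

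\emph{Equality case.} Assume $\BNN(f)=\max\{1,c(f)\}+1$. Since $|N|\geq1$, the lower bound gives $|P|+|N|\geq\max\{1,c(f)\}+h+1=\BNN(f)+h$, and the construction attains this. If $|N|\geq2$, the lower bound would exceed this value by at least $h+1$, so every optimal representation has $|N|=1$.

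I expect the main obstacle to be the disjointness step: showing that the boundary-covering prototypes cannot coincide with the forced prototypes in $F$. The monotonicity argument on core parts ($f(y')=1$ versus $f(y)=0$) resolves this, and it is also where the exposure of $(0,a)$ in $G_h$ has to be checked carefully.
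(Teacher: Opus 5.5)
Your proposal is correct and follows the paper's proof essentially step for step. It uses the $h|N|$ prototypes $(e_j,y)$ forced by Proposition~\ref{prop:edge-closure}, the exposure of $(0^h,a)$ feeding Part~3 of Lemma~\ref{lem:boundary}, monotonicity of core parts to keep these prototypes apart from the forced ones, and the same one-negative construction. The one step you leave implicit is, in the equality case, why $f$ has an optimal representation with a single negative prototype, which you need before invoking the construction. This follows at once from Proposition~\ref{prop:monotone-cover}: every representation $(P',N')$ of $f$ has $|P'|\geq\max\{1,c(f)\}$, so $|P'|+|N'|=\max\{1,c(f)\}+1$ forces $|N'|=1$.
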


\begin{proof}
Every negative prototype of $G_h$ has the form $(0^h,y)$, where
$f(y)=0$. For each $(0^h,y)\in N$,
Proposition~\ref{prop:edge-closure} forces the $h$ positive prototypes
$(e_j,y)$, $j\in[h]$. These give $h|N|$ distinct positive prototypes.

Let
\[
 P_+=\{(z,y)\in P:f(y)=1\}
\]
be the set of positive prototypes whose core part $y$ is itself positive for
$f$.
We claim that $|P_+|\geq c(f)$. Let $a\in A(f)$, and choose a maximal
negative lower neighbour $b$ of $a$. Then $(0^h,a)$ is a minimal positive
point of $G_h$, and $(0^h,b)$ is a lower neighbour. Moreover, $(0^h,b)$ is
maximal negative: changing any auxiliary coordinate from $0$ to $1$ makes
$\operatorname{OR}_h(z)=1$, while changing any zero core coordinate of $b$
to $1$ makes $f$ positive because $b$ is maximal negative. Hence $(0^h,a)$
is exposed. Part 3 of Lemma~\ref{lem:boundary} therefore supplies a positive
prototype $(z,y)$ such that
\[
 a\leq y
 \qquad\text{and}\qquad
 \wt z+\wt y\leq2\wt a-1.
\]
By monotonicity, $f(y)=1$, and hence $(z,y)\in P_+$. For each $a\in A(f)$,
we have therefore found a prototype $(z,y)\in P_+$ such that
\[
 \supp(a)\subseteq\supp(y)
\]
and
\[
 |\supp(y)|=\wt y\leq \wt z+\wt y\leq 2\wt a-1.
\]
Hence the family of core supports
\[
 \{\supp(y):(z,y)\in P_+\}
\]
satisfies the covering conditions in the definition of $c(f)$. Its size is
at most $|P_+|$, so
\[
 c(f)\leq |P_+|,
\]
proving the claim.

We also have $|P_+|\geq1$. Indeed, choose any minimal positive point $a$
of $f$ and a closest positive prototype $(z,y)$ to $(0^h,a)$. Part 1 of
Lemma~\ref{lem:boundary} gives $a\leq y$, so monotonicity gives $f(y)=1$.

Consequently,
\[
 |P_+|\geq\max\{1,c(f)\}.
\]
The $h|N|$ forced positive prototypes described above all have core part
$y$ with $f(y)=0$, so none belongs to $P_+$. Therefore
\[
 |P|\geq\max\{1,c(f)\}+h|N|,
\]
which proves the first assertion.

For the upper bound in the case where $f$ has an optimal representation with
exactly one negative prototype, let $(P_0,\{q\})$ be such an optimal
representation. Use
\[
 \{(0^h,p):p\in P_0\}\cup\{(e_j,q):j\in[h]\}
\]
as the positive prototypes for $G_h$, and use $(0^h,q)$ as the sole
negative prototype. On the face $z=0^h$, the embedded prototypes preserve
all distances from the original representation of $f$. If $f(x)=0$, then
$(0^h,q)$ is closer to $(0^h,x)$ than every embedded positive prototype,
and it is exactly one unit closer than each additional prototype
$(e_j,q)$. If $f(x)=1$, some embedded positive prototype is closer than
$(0^h,q)$, and hence also closer than every $(e_j,q)$. Thus every point on
this face is classified correctly.

If $z\neq0^h$, choose $j$ such that $z_j=1$. Then
\[
 \dist((z,x),(e_j,q))=\dist((z,x),(0^h,q))-1.
\]
Thus a positive prototype is strictly closer than the sole negative
prototype, so every point with non-zero auxiliary part is classified
positively. This gives a representation with $\BNN(f)+h$ prototypes, and
therefore
\[
 \BNN(G_h)\leq\BNN(f)+h.
\]

Finally, suppose that
\[
 \BNN(f)=\max\{1,c(f)\}+1.
\]
Proposition~\ref{prop:monotone-cover} states that every
representation $(P',N')$ of $f$ satisfies
\[
 |P'|\geq\max\{1,c(f)\}
 \qquad\text{and}\qquad
 |N'|\geq1.
\]
Hence equality in the covering lower bound forces an optimal
representation of $f$ to have exactly one negative prototype. The
preceding construction therefore gives
\[
 \BNN(G_h)\leq\BNN(f)+h
 =\max\{1,c(f)\}+h+1.
\]

Conversely, $G_h$ is non-constant, so every representation of $G_h$ has
$|N|\geq1$. The lower bound proved above then gives
\[
 |P|+|N|
 \geq\max\{1,c(f)\}+(h+1)|N|
 \geq\max\{1,c(f)\}+h+1
 =\BNN(f)+h.
\]
It follows that $\BNN(G_h)=\BNN(f)+h$. Moreover, any representation with
at least two negative prototypes has size at least
\[
 \max\{1,c(f)\}+2(h+1)>\BNN(G_h).
\]
Hence every optimal representation of $G_h$ has exactly one negative
prototype.
\end{proof}

We now specialise to symmetric threshold cores. For $h,m\geq1$ and
$1\leq r\leq m$, put
\[
 F_{h,m,r}(z,x)=\operatorname{OR}_h(z)\vee TH_r^m(x).
\]
Equivalently, $F_{h,m,r}(z,x)=1$ precisely when
$r\sum_{j=1}^h z_j+\sum_{i=1}^m x_i\geq r$.

\begin{corollary}[Disjunctive threshold lifts]\label{cor:threshold-lift}
Let $h,m\geq1$ and $r\geq1$ with $2r-1\leq m$.
Every Boolean nearest-neighbour representation $(P,N)$ of $F_{h,m,r}$ satisfies
\[
 |P|+|N|\geq\Cover(m,2r-1,r)+(h+1)|N|.
\]
Moreover,
\[
 \BNN(F_{h,m,r})=\Cover(m,2r-1,r)+h+1
 =\BNN(TH_r^m)+h.
\]
Every optimal representation in this range has the sole negative
prototype $0^{h+m}$ and contains the $h$ positive prototypes
$(e_j,0^m)$, $j\in[h]$, supported on the individual auxiliary coordinates.
Moreover, if $\mathcal B$ is an optimal $(m,2r-1,r)$ covering, then an
optimal representation is obtained by taking $0^{h+m}$ as the negative
prototype and
\[
 \{(0^h,\mathbf1_B):B\in\mathcal B\}
 \cup
 \{(e_j,0^m):j\in[h]\}
\]
as the positive prototypes.
\end{corollary}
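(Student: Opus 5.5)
This is Theorem~\ref{thm:general-disjunctive-lift} specialised to the core $f=TH_r^m$. The work lies in computing $c(f)$ and checking the hypothesis $\BNN(f)=\max\{1,c(f)\}+1$. After that, we pin down the structure of optimal representations.

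\emph{Computing $c(TH_r^m)$.} The minimal positive points of $TH_r^m$ are the weight-$r$ points, and each is exposed: every lower neighbour has weight $r-1$ and is maximal negative. So $c(TH_r^m)$ is the least size of a family of sets of size at most $2r-1$ covering every $r$-subset of $[m]$. Since $2r-1\leq m$, each set can be enlarged to exactly $2r-1$ elements, as in the proof of Theorem~\ref{thm:covering}. This gives $c(TH_r^m)=\Cover(m,2r-1,r)\geq1$. Because $r\leq(m+1)/2$, Theorem~\ref{thm:covering} gives $\BNN(TH_r^m)=\Cover(m,2r-1,r)+1=\max\{1,c(f)\}+1$.

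\emph{Applying the lift theorem.} Theorem~\ref{thm:general-disjunctive-lift} now yields the displayed lower bound on $|P|+|N|$. It also gives $\BNN(F_{h,m,r})=\BNN(TH_r^m)+h=\Cover(m,2r-1,r)+h+1$, and it shows that every optimal representation has exactly one negative prototype.

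\emph{Identifying the prototypes.} For the explicit optimal representation, apply the construction in the proof of Theorem~\ref{thm:general-disjunctive-lift} to the optimal representation of $TH_r^m$ from Theorem~\ref{thm:covering}. That representation has positive prototypes $\mathbf1_B$ for $B\in\mathcal B$ and sole negative prototype $q=0^m$. The lifted prototypes are then $(0^h,\mathbf1_B)$ and $(e_j,0^m)$, with negative prototype $0^{h+m}$, as stated.

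\emph{Uniqueness of the negative prototype.} This is the main obstacle. Let the sole negative prototype be $(0^h,y)$ with $\wt y\leq r-1$. By the lift lower bound, an optimal representation has $|P_+|=\Cover(m,2r-1,r)$ exactly. I would show $y=0^m$ by applying Lemma~\ref{lem:boundary} Part 3 to each weight-$r$ point $a$, using its lower neighbour $b\geq y$ when possible. Part 2 already gives $q\leq b$, and since $q$ is the unique negative prototype, $q=(0^h,y)$ with $y\leq b$. Part 3 then gives, for each such $a$, a covering prototype $(z,p)$ with $a\leq p$ and $\wt z+\wt p\leq2r-1-\wt y$.

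Suppose $\wt y\geq1$. Then take an $r$-set $T$ disjoint from $\supp(y)$; this is possible when $m-\wt y\geq r$, which holds since $\wt y\leq r-1$ and $m\geq 2r-1$. Its maximal negative lower neighbour $b$ need not lie above $y$. In that case Part 2 forces the closest negative prototype $(0^h,y)$ to lie below $(0^h,b)$, which contradicts $y\not\leq b$. So $\wt y\geq1$ is impossible, and $y=0^m$. This is the key step to check carefully: Part 2 applies to every maximal negative point, and with a single negative prototype every weight-$(r-1)$ point $b$ must satisfy $y\leq b$. Taking $b$ disjoint from a nonempty $\supp(y)$ then forces $y=0^m$. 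Finally, Proposition~\ref{prop:edge-closure} applied to the negative prototype $0^{h+m}$ forces its positive neighbours $(e_j,0^m)$ to be prototypes.
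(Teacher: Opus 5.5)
Your proof is correct and follows the paper's route. You compute $c(TH_r^m)=\Cover(m,2r-1,r)$, invoke Theorem~\ref{thm:general-disjunctive-lift}, and pin down the sole negative prototype by applying Part~2 of Lemma~\ref{lem:boundary} to the maximal negative points $(0^h,\mathbf1_B)$ with $|B|=r-1$. The paper observes that their supports have empty intersection, and your choice of one such $B$ disjoint from $\supp(y)$ is the same observation. The Part~3 detour and the claim $|P_+|=\Cover(m,2r-1,r)$ are unnecessary, and ``need not lie above $y$'' should read ``cannot lie above $y$''.
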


\begin{proof}
For $f=TH_r^m$ with $2r-1\leq m$, the exposed minimal positive supports
are precisely the $r$-subsets of $[m]$. Enlarging any smaller covering block
to size $2r-1$ shows that
\[
 c(f)=\Cover(m,2r-1,r).
\]
By Theorem~\ref{thm:covering},
\[
 \BNN(f)=c(f)+1.
\]
Hence Theorem~\ref{thm:general-disjunctive-lift} gives
\[
 \BNN(F_{h,m,r})
 =\BNN(f)+h
 =\Cover(m,2r-1,r)+h+1.
\]
It also follows from Theorem~\ref{thm:general-disjunctive-lift} that every
optimal representation has exactly one negative prototype. The explicit
optimal representation is obtained by applying the construction in that
theorem to the covering representation of $TH_r^m$ from
Theorem~\ref{thm:covering}.

For any optimal representation, write $N=\{q\}$. By Part 2 of
Lemma~\ref{lem:boundary}, the unique negative prototype lies below every
maximal negative point $(0^h,\mathbf1_B)$ with
$B\in\binom{[m]}{r-1}$. Their supports have empty intersection, because
every core coordinate is omitted by at least one such $B$. Hence
$q=0^{h+m}$. Proposition~\ref{prop:edge-closure} then forces all the
positive prototypes $(e_j,0^m)$, $j\in[h]$.
\end{proof}

The dual of $\operatorname{OR}_h\vee TH_r^m$ is
$\operatorname{AND}_h\wedge TH_{m-r+1}^m$. Applying
Lemma~\ref{lem:nn-symmetries} to Corollary~\ref{cor:threshold-lift}
therefore gives
\[
 \BNN(\operatorname{AND}_h\wedge TH_t^m)=\BNN(TH_t^m)+h
 \qquad\text{when }t\geq(m+1)/2.
\]

The conjunction core is itself a symmetric threshold function, since
\[
 \operatorname{AND}_m=TH_m^m.
\]
However, the corresponding parameter choice $r=m$ lies outside the range
$2r-1\leq m$ of Corollary~\ref{cor:threshold-lift} except when $m=1$.
We can nevertheless determine the complexity of its disjunctive lift exactly.

\begin{theorem}[Disjunctive lifts of conjunctions]\label{thm:orand}
For every $h,m\geq1$,
\[
 \BNN\bigl(\operatorname{OR}_h(z)\vee\operatorname{AND}_m(x)\bigr)
 =(h+1)\left\lceil\frac m{h+1}\right\rceil+1.
\]
\end{theorem}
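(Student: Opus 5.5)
The plan is to prove the lower and upper bounds separately. Write $G=\operatorname{OR}_h(z)\vee\operatorname{AND}_m(x)$, $k=\lceil m/(h+1)\rceil$ and $s=\min\{h+1,m\}$. The negative points of $G$ are the points $(0^h,y)$ with $y\neq1^m$. The maximal negative points are $b_i=(0^h,1^m-e_i)$ for $i\in[m]$, where here $e_i$ is a unit vector of $\{0,1\}^m$. The point $a=(0^h,1^m)$ is a minimal positive point, and it is exposed, with every $b_i$ as a maximal negative lower neighbour.

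\textbf{Lower bound.} Let $(P,N)$ be a representation. By Part 1 of Lemma~\ref{lem:boundary}, a positive prototype closest to $a$ has the form $p=(z,1^m)$, and $\wt z\leq h$. Fix $i\in[m]$ and let $q=(0^h,y_q)\in N$ be a closest prototype to $b_i$; it exists and is negative because $b_i$ is negative and correctly classified. Part 2 of Lemma~\ref{lem:boundary} gives $q\leq b_i$, so $(y_q)_i=0$ and $\dist(b_i,q)=m-1-\wt{y_q}$. Correct classification of $b_i$ against $p$ gives
\[
 m-1-\wt{y_q}<\dist(b_i,p)=\wt z+1,
\]
and by integrality $\wt{y_q}\geq m-1-\wt z\geq m-1-h$. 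So each such $q$ has at most $h+1$ zero core coordinates. Every $i\in[m]$ is a zero coordinate of some negative prototype, so the zero sets of the members of $N$ cover $[m]$, and hence $|N|\geq k$.

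Next, Proposition~\ref{prop:edge-closure} forces the $h$ positive prototypes $(e_j,y_q)$, $j\in[h]$, for each $q\in N$. This is exactly as in the proof of Theorem~\ref{thm:general-disjunctive-lift}, with $e_j$ the unit vectors of $\{0,1\}^h$. These $h|N|$ prototypes are distinct, and none has core $1^m$, so the prototype $p$ is an additional one. Hence
\[
 |P|+|N|\geq(h+1)|N|+1\geq(h+1)k+1.
\]

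\textbf{Upper bound.} Choose $k$ sets $B_1,\dots,B_k\subseteq[m]$, each of size exactly $s$, covering $[m]$. They may overlap; this is possible since $ks\geq m$. Take the negative prototypes $q_l=(0^h,1^m-\mathbf1_{B_l})$. Take the positive prototypes $(e_j,1^m-\mathbf1_{B_l})$ for $j\in[h]$ and $l\in[k]$, together with $p=(z_0,1^m)$, where $\wt{z_0}=s-1\leq h$. This uses $(h+1)k+1$ prototypes. Three checks remain.
\begin{enumerate}
\item For $z\neq0^h$, choose $j$ with $z_j=1$. Each partner $(e_j,\cdot)$ is exactly one unit closer than the corresponding $q_l$, exactly as in Theorem~\ref{thm:general-disjunctive-lift}, so these points are classified positively.
\item For $a$, the prototype $p$ is at distance $s-1$, while every $q_l$ is at distance $s$.
\item For a negative point $(0^h,y)$, choose $l$ such that $B_l$ contains $u\geq1$ zeros of $y$. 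Then
\[
 \dist\bigl((0^h,y),q_l\bigr)=m-\wt y+s-2u\leq m-\wt y+s-2,
\]
which is strictly less than $\dist\bigl((0^h,y),p\bigr)=s-1+m-\wt y$. It is also strictly less than the distance to every forced partner, which is one more than the distance to the corresponding $q_{l'}$.
\end{enumerate}

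The main obstacle is the upper bound's balancing requirement. The weight of $z_0$ must be strictly below every block size, so that $a$ is classified correctly, and at least the block size minus one, so that the points $b_i$ are classified correctly. This is why I insist on blocks of equal size $s$ and allow overlaps rather than using a partition. The lower bound's only delicate point is to use the inequality at $b_i$ against the particular positive prototype $(z,1^m)$ supplied by Part 1 of Lemma~\ref{lem:boundary}.
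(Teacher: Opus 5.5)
Your proposal is correct and follows the paper's proof. The lower bound uses Parts 1 and 2 of Lemma~\ref{lem:boundary} at $a=(0^h,1^m)$ and the points $b_i$ to show that the zero sets of the negative prototypes have size at most $h+1$ and cover $[m]$, and then adds the $h|N|$ prototypes forced by Proposition~\ref{prop:edge-closure} plus one with full core part (the paper also shows these zero sets have size exactly $s+1$, which is not needed); your single construction with blocks of size $\min\{h+1,m\}$ and $\wt{z_0}=\min\{h+1,m\}-1$ coincides with the paper's two separate constructions for $m\geq h+1$ and $m<h+1$, and your blocks are automatically distinct because fewer than $\lceil m/(h+1)\rceil$ sets of size at most $h+1$ cannot cover $[m]$.
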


\begin{proof}
We first bound the size of a representation $(P,N)$ using the missing core coordinates of its negative prototypes, and then give matching constructions in the two ranges $m\geq h+1$ and $m<h+1$. Write $a=(0^h,1^m)$. The point
$a$ is minimal positive, so by Part 1 of Lemma~\ref{lem:boundary},
every closest positive prototype at $a$ contains all core
coordinates. Among positive prototypes whose core coordinates are all $1$,
let $s$ be the minimum number of auxiliary coordinates equal to $1$.
The minimum Hamming distance from $a$ to a positive prototype is therefore
exactly $s$.

Every negative prototype has auxiliary part $0^h$. Writing $E$ for the
non-empty set of core coordinates at which it is zero, it therefore has the
form $(0^h,\mathbf1_{[m]\setminus E})$. Its Hamming distance from $a$ is
$|E|$. Since $a$ is positive and the minimum distance from $a$ to a positive
prototype is $s$, every negative prototype must be at distance greater than
$s$. Hence $|E|\geq s+1$.

For each $i\in[m]$, let
\[
 b_i=(0^h,\mathbf1_{[m]\setminus\{i\}}).
\]
This is a maximal negative point. By the definition of $s$, there is a
closest positive prototype $p$ to $a$ at distance $s$. Part 1 of
Lemma~\ref{lem:boundary} gives $a\leq p$; since $a=(0^h,1^m)$, the core part
of $p$ is $1^m$. Thus $p$ differs from $b_i$ in the same $s$ auxiliary
coordinates in which it differs from $a$, and also in core coordinate $i$.
Hence
\[
 \dist(p,b_i)=s+1.
\]
Since $b_i$ is negative, some negative prototype must be at distance at most
$s$ from $b_i$. For each $i\in[m]$, choose a closest negative prototype $q_i$
to $b_i$. Then $\dist(b_i,q_i)\leq s$, and Part 2 of
Lemma~\ref{lem:boundary} gives $q_i\leq b_i$. Let $E_i$ be the set of zero
core coordinates of $q_i$. Since $q_i\leq b_i$ and $(b_i)_i=0$, we have
$i\in E_i$. The distance from $q_i$ to $b_i$ is $|E_i|-1$, so
$|E_i|-1\leq s$.
On the other hand, the preceding argument at $a$ gives $|E_i|\geq s+1$.
Therefore $|E_i|=s+1$. Since $i\in E_i$ for every $i\in[m]$, the missing
sets of the negative prototypes cover $[m]$. Consequently
\[
 |N|\geq\left\lceil\frac m{s+1}\right\rceil
 \geq\left\lceil\frac m{h+1}\right\rceil.
\]
Each negative prototype $(0^h,y)$ forces the $h$ positive prototypes
\[
 (e_j,y),\qquad j\in[h],
\]
by Proposition~\ref{prop:edge-closure}. These $h|N|$ prototypes are distinct,
and each has the same non-full core part $y$ as the negative prototype that
forces it.

However, $a=(0^h,1^m)$ is a minimal positive point. By Part 1 of
Lemma~\ref{lem:boundary}, any closest positive prototype to $a$ must contain
all core coordinates, and hence must have core part $1^m$. None of the
$h|N|$ forced positive prototypes has this property. Therefore at least one
further positive prototype is required.

Thus
\[
 |P|+|N|\geq(h+1)|N|+1,
\]
which gives the required lower bound.

We now prove the matching upper bound.

Suppose first that $m\geq h+1$. Choose
$\ell=\lceil m/(h+1)\rceil$ distinct $(h+1)$-subsets of $[m]$
whose union is $[m]$. Such a family can be obtained by partitioning
$[m]$ into groups of size $h+1$, except possibly the final group,
and enlarging that group if necessary. For each chosen set $E$, use
\[
 q_E=(0^h,\mathbf1_{[m]\setminus E})\in N,
 \qquad (e_j,\mathbf1_{[m]\setminus E})\in P
 \quad(j\in[h]),
\]
and also take
\[
 p_*=(1^h,1^m)
\]
as a positive prototype.

Every positive point with non-zero auxiliary part is classified correctly:
if $z_j=1$, choose a closest negative prototype $q_E$; then its positive
counterpart $(e_j,\mathbf1_{[m]\setminus E})$ is one unit closer. The only
positive point with auxiliary part $0^h$ is
\[
 a=(0^h,1^m).
\]
At $a$, the positive prototype $p_*$ has distance $h$, whereas every negative
prototype $q_E$ has distance $|E|=h+1$. Hence $a$ is also classified
correctly.

It remains to classify a negative point
$u=(0^h,\mathbf1_{[m]\setminus T})$ with $T\ne\varnothing$.
Since the chosen $(h+1)$-sets $E$ cover $[m]$, and $T\ne\varnothing$,
some chosen set $E$ meets $T$. Fix such an $E$. Then
\[
 \dist(u,q_E)
 =|T\triangle E|
 =|T|+|E|-2|T\cap E|
 =|T|+h+1-2|T\cap E|
 \leq|T|+h-1,
\]
whereas the positive prototype $(1^h,1^m)$ has distance $|T|+h$.
For each negative prototype $q_E$ and each $j\in[h]$, its positive
counterpart $(e_j,\mathbf1_{[m]\setminus E})$ is one unit further from
$u$ than $q_E$. A closest negative prototype is therefore strictly closer than every
positive prototype. This gives a representation of size $(h+1)\ell+1$.

If $m<h+1$, choose $J\subseteq[h]$ with $|J|=m-1$. Take
$N=\{(0^h,0^m)\}$. Put the $h$ auxiliary singletons $(e_j,0^m)$,
$j\in[h]$, in $P$, and add $(\mathbf1_J,1^m)$ to $P$. For a point
$(0^h,x)$ on the face $z=0^h$,
\[
 \dist\bigl((0^h,x),(\mathbf1_J,1^m)\bigr)
   =|J|+m-\wt{x}=2m-1-\wt{x},
\]
whereas
\[
 \dist\bigl((0^h,x),0^{h+m}\bigr)=\wt{x}.
\]
Thus the first distance minus the second is
$2m-1-2\wt{x}$, which is negative exactly when $x=1^m$.
Each auxiliary singleton prototype $(e_j,0^m)$ is at distance $1+\wt{x}$
from $(0^h,x)$, and hence is further away than $0^{h+m}$.
Every remaining positive point has $z_j=1$ for some $j\in[h]$. The positive
prototype $(e_j,0^m)$ is then one unit closer to the point than the sole
negative prototype $0^{h+m}$, so the point is classified positively. This
representation has $h+2$ prototypes, as required.
\end{proof}

\begin{corollary}
\label{cor:orand-parity}
For every $m\geq1$,
\[
 \BNN(z\vee\operatorname{AND}_m)
 =2\left\lceil\frac m2\right\rceil+1
 =\begin{cases}
 \BNN(\operatorname{AND}_m)+1=m+2,&m\text{ odd},\\
 \BNN(\operatorname{AND}_m)=m+1,&m\text{ even}.
 \end{cases}
\]
\end{corollary}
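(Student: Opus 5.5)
The plan is to specialise Theorem~\ref{thm:orand} to $h=1$ and then identify $\BNN(\operatorname{AND}_m)$ separately. With $h=1$, Theorem~\ref{thm:orand} gives
\[
 \BNN(z\vee\operatorname{AND}_m)=2\left\lceil\frac m2\right\rceil+1
\]
directly. This equals $m+2$ for odd $m$ and $m+1$ for even $m$. So everything reduces to showing $\BNN(\operatorname{AND}_m)=m+1$ for every $m\geq1$.

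For that identity, note that $\operatorname{AND}_m=TH_m^m$, so $r=\min\{m,1\}=1$. Part~1 of Corollary~\ref{cor:fixed-r} (the case $r=1$, which holds for every $n\geq1$) then gives $\BNN(TH_m^m)=m+1$.

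The bound also has a direct check. Upper bound: take $1^m$ as the positive prototype and the $m$ points $\mathbf1_{[m]\setminus\{i\}}$ as negative prototypes. Lower bound: by Lemma~\ref{lem:monochromatic-shortest-paths}, $1^m$ is isolated from the other positive points, so it is a prototype. Proposition~\ref{prop:edge-closure} then forces all $m$ neighbours of $1^m$ to be prototypes. For $m=1$ this reads $\BNN(\operatorname{AND}_1)=2$, which is consistent.

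Substituting $\BNN(\operatorname{AND}_m)=m+1$ into the two parity cases completes the proof. There is no real obstacle beyond checking the edge case $m=1$: there $h+1=2>m$, so the second construction in Theorem~\ref{thm:orand} applies, and the formula gives $3=m+2$, as required.
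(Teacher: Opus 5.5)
Your proposal is correct and follows the paper's own route: specialise Theorem~\ref{thm:orand} to $h=1$ and use $\BNN(\operatorname{AND}_m)=m+1$. The paper takes that identity from Theorem~\ref{thm:covering}; your citation of Part~1 of Corollary~\ref{cor:fixed-r} is the same fact, since that part is the theorem with $\Cover(m,1,1)=m$. Your direct check via Lemma~\ref{lem:monochromatic-shortest-paths} and Proposition~\ref{prop:edge-closure} is a valid self-contained confirmation but is not needed, and neither is the separate $m=1$ check, because Theorem~\ref{thm:orand} already holds for all $h,m\geq1$.
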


\begin{proof}
Apply Theorem~\ref{thm:orand} with $h=1$ and use
$\BNN(\operatorname{AND}_m)=m+1$, which follows from
Theorem~\ref{thm:covering}.
\end{proof}

The case $m=3$ also shows that the sufficient condition in
Theorem~\ref{thm:general-disjunctive-lift} is not necessary. For
$f=\operatorname{AND}_3=TH_3^3$, the only exposed minimal positive support
is $\{1,2,3\}$. The single block $\{1,2,3\}$ satisfies the covering
requirement, so $c(f)=1$, whereas Theorem~\ref{thm:covering} gives
$\BNN(f)=4$. Thus $\BNN(f)\ne\max\{1,c(f)\}+1$. Even so,
Corollary~\ref{cor:orand-parity} gives
$\BNN(z\vee f)=5=\BNN(f)+1$.

\subsection{Representations with one negative prototype}
\label{subsec:one-negative}

For thresholds $TH_t^n$ with $2t-1\leq n$, the covering construction of
Theorem~\ref{thm:covering} uses exactly one negative prototype. We now
characterise the monotone functions for which such a Boolean representation
exists, and determine its minimum size.

Let $f$ be a non-constant monotone Boolean function on $\cube$. For
$t\in[n]$, let $\mathcal A_t(f)$ be the family of supports of the minimal
positive points of $f$ having weight $t$. Equivalently, $\mathcal A_t(f)$
consists of those $t$-subsets $A\subseteq[n]$ for which $\mathbf1_A$ is a
minimal positive point of $f$.

The number $2t-1$ arises naturally from comparison with the zero
prototype. If $T\subseteq C$ and $|T|=t$, then
\[
 \dist(\mathbf1_T,\mathbf1_C)=|C|-t
 \quad\text{and}\quad
 \dist(\mathbf1_T,0^n)=t.
\]
Hence $\mathbf1_T$ is strictly closer to the positive prototype
$\mathbf1_C$ than to the negative prototype $0^n$ if and only if
$|C|\leq2t-1$.

The covering blocks used for threshold functions require only that a block
contain the relevant minimal positive support. For a general monotone
function, we need an additional condition ensuring that the corresponding
prototype does not make any negative $t$-point positive. This leads to the
following refinement.

We call $C\subseteq[n]$ an \emph{admissible rank-$t$ block} if
\[
 |C|=2t-1
 \quad\text{and}\quad
 f(\mathbf1_T)=1\quad\text{for every }T\in\binom Ct.
\]
If $\mathcal A_t(f)=\varnothing$, set $\kappa_t(f)=0$. Suppose now that
$\mathcal A_t(f)\neq\varnothing$. If there is a family of admissible rank-$t$
blocks such that every member of $\mathcal A_t(f)$ is contained in some block,
let $\kappa_t(f)$ be the minimum number of blocks in such a family. Thus
$\kappa_t(f)$ measures the smallest number of admissible rank-$t$ blocks
needed to cover all minimal positive supports of weight $t$.

\begin{theorem}[Characterisation of one-negative representations]
\label{thm:one-negative}
A non-constant monotone Boolean function $f$ admits a Boolean
nearest-neighbour representation with exactly one negative prototype if
and only if, for every $t\in[n]$, the family $\mathcal A_t(f)$ can be
covered by admissible rank-$t$ blocks. When this holds, the minimum size
of such a representation is
\[
 1+\sum_{t=1}^n\kappa_t(f).
\]
\end{theorem}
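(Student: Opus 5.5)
The plan is to reduce to the case where the single negative prototype is $0^n$. In that case, classification becomes a purely combinatorial condition on the supports of the positive prototypes. Sufficiency and the upper bound then come from an explicit construction, and necessity and the lower bound from Lemma~\ref{lem:boundary}.

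\emph{Step 1 (normalising the negative prototype).} Let $(P,\{q\})$ be a representation. Part 2 of Lemma~\ref{lem:boundary} gives $q\leq b$ for every maximal negative point $b$. Suppose $q_i=1$. Then coordinate $i$ lies in every maximal negative point. For a negative $x$, choose a maximal negative $b\geq x$; then $x\vee e_i\leq b$, so $x\vee e_i$ is negative. By monotonicity, $f$ does not depend on $x_i$. Hence the map $T$ complementing all coordinates $i$ with $q_i=1$ satisfies $f\circ T=f$. By Lemma~\ref{lem:nn-symmetries} it carries $(P,\{q\})$ to a representation of the same size with negative prototype $0^n$. From now on we may assume $N=\{0^n\}$.

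With this normalisation, $\dist(x,\mathbf1_C)-\dist(x,0^n)=|C|-2|\supp(x)\cap C|$. So $(P,\{0^n\})$ represents $f$ if and only if the following holds. Every positive $x$ has some $C=\supp(p)$, $p\in P$, with $2|\supp(x)\cap C|>|C|$. Every negative $x$ has $2|\supp(x)\cap C|<|C|$ for all such $C$.

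\emph{Step 2 (sufficiency and upper bound).} For each $t$, take an optimal family of admissible rank-$t$ blocks covering $\mathcal A_t(f)$. Use the union of all these blocks as positive prototypes and $0^n$ as the negative one. Blocks for different $t$ have different sizes, so the total is at most $1+\sum_t\kappa_t(f)$.

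For a positive $x$, choose a minimal positive $a\leq x$ of weight $t$ and a block $C\supseteq\supp(a)$. Then $|\supp(x)\cap C|\geq t>(2t-1)/2$.

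For a negative $x$ and a rank-$t$ block $C$, suppose $|\supp(x)\cap C|\geq t$. Then $\supp(x)$ contains a $t$-subset of $C$, which is positive by admissibility, so $x$ is positive by monotonicity, a contradiction. Hence $|\supp(x)\cap C|\leq t-1$, which gives $|C|=2t-1>2|\supp(x)\cap C|$.

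\emph{Step 3 (necessity and lower bound).} Let $a$ be a minimal positive point of weight $t$, and let $p$ be a closest positive prototype. Part 1 of Lemma~\ref{lem:boundary} gives $a\leq p$, and comparison with $0^n$ gives $\wt p\leq2t-1$.

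Every $t$-subset $T$ of $C=\supp(p)$ meets $C$ in $t>|C|/2$ elements. By the negative-point condition of Step 1, $T$ is therefore positive. Consequently $p$ cannot serve two different weights. If $p$ also served a minimal positive $a'$ of weight $t'>t$, then $\supp(a')\subseteq C$ would contain a positive $t$-subset, contradicting minimality of $a'$. So the prototypes serving weight $t$ are disjoint from those serving other weights, and $|P|\geq\sum_t(\text{number serving weight }t)$.

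\emph{Step 4 (main obstacle).} It remains to turn each serving support $C$, with $t\leq|C|\leq2t-1$ and all $t$-subsets positive, into an admissible block of size exactly $2t-1$. A naive enlargement may create negative $t$-subsets. My first attempt is to replace each serving $p$ by one of maximal weight among prototypes satisfying the same conditions, and to argue that adding a coordinate $j$ preserves the Step 1 conditions. The key fact to prove is this: for any negative $x$, if $2|\supp(x)\cap C|<|C|$, then integrality gives $2|\supp(x)\cap (C\cup\{j\})|<|C|+2$. The delicate case is $|C|$ even, where the parity slack is what must be exploited. If this exchange argument fails in general, the fallback is to check whether the covering condition with blocks of size at most $2t-1$ is equivalent. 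Every such block $C$ has all $t$-subsets positive, and the construction of Step 2 works verbatim for blocks of size at most $2t-1$. This would give the equivalence and the formula $1+\sum_t\kappa_t(f)$.
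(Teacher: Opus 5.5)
\textbf{Verdict: the proof has a genuine gap in Step 4.}

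Your Steps 1--3 are sound. In particular, your Step 1 argument, that $q_i=1$ directly forces coordinate $i$ to be inessential, is a tidy alternative to the paper's essential/inessential split. The gap is that Step 4 never shows a serving support $C=\supp(p)$ is an admissible rank-$t$ block. Without that, neither the ``only if'' direction nor the bound $|P|\geq\sum_t\kappa_t(f)$ follows.

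\textbf{The missing idea.} No enlargement is needed, because $|C|\geq 2t-1$ is already forced. Take any $i\in A=\supp(a)$. By minimality of $a$, the point $\mathbf1_{A\setminus\{i\}}$ is negative. Since $A\setminus\{i\}\subseteq C$, your own negative-point condition from Step 1 gives
$2(t-1)=2\lvert (A\setminus\{i\})\cap C\rvert<|C|$, that is, $|C|\geq 2t-1$.
Together with $|C|\leq 2t-1$ from Step 3, this gives $|C|=2t-1$ exactly. Your Step 3 observation that every $t$-subset of $C$ is positive then says precisely that $C$ is an admissible rank-$t$ block containing $A$. This is the paper's argument, and it also gives disjointness across different $t$ immediately, since the weights differ. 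In particular, the ``delicate case $|C|$ even'' in your exchange argument never arises.

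\textbf{The fallback is false and should be discarded.} The negative-point verification in Step 2 uses $|C|=2t-1$, because it needs $|C|>2(t-1)$. Suppose $C$ contains some $A\in\mathcal A_t(f)$ and $|C|<2t-1$. Then the negative point $\mathbf1_{A\setminus\{i\}}$ is at distance $|C|-(t-1)\leq t-1$ from $\mathbf1_C$. That is at most its distance $t-1$ from $0^n$, so the point is misclassified.

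A concrete counterexample is $f=x_1\wedge x_2$. The block $\{1,2\}$ has size $2\leq3$ and its only $2$-subset is positive, so your relaxed covering condition holds. Yet $P=\{11\}$, $N=\{00\}$ ties at the point $10$, and $f$ has no one-negative Boolean representation at all. So the relaxed condition is not equivalent, and it would give the wrong characterisation.
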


\begin{proof}
Suppose first that $(P,\{q\})$ is a representation. Part 2 of
Lemma~\ref{lem:boundary} implies that $q$ lies below every maximal
negative point. We claim that $q_i=0$ on every essential coordinate,
where coordinate $i$ is essential if changing it can change the value of
$f$. Choose a positive point $x$ such that flipping coordinate $i$ from
$1$ to $0$ makes it negative, and choose $A\subseteq\supp(x)$ minimal
under inclusion subject to $f(\mathbf1_A)=1$. Then $i\in A$: otherwise
$A$ would remain contained in the support of the point obtained by flipping
coordinate $i$, which would be positive by monotonicity, a contradiction.
Extend $A\setminus\{i\}$ to a maximal negative support $B$. Since $B$ is
negative it cannot contain $A$, and therefore $i\notin B$. By
Lemma~\ref{lem:boundary}, $q\leq\mathbf1_B$, so
$q_i\leq(\mathbf1_B)_i=0$, and hence $q_i=0$.

If coordinate $i$ is inessential and $q_i=1$, its complementation leaves
$f$ unchanged. By Lemma~\ref{lem:nn-symmetries}, we may apply that
complementation to every prototype without changing the represented
function or the number of prototypes of either label. Doing so on all
such coordinates allows us to assume that $q=0^n$.

Let $A$ be a minimal positive support of size $t$, and choose a
positive prototype $p$ closest to $\mathbf1_A$. By Part 1 of
Lemma~\ref{lem:boundary}, $A\subseteq\supp(p)$. Since $0^n$ is the sole
negative prototype, correct classification of the positive point $\mathbf1_A$
gives
\[
 \dist(\mathbf1_A,p)=\wt p-t<t=\dist(\mathbf1_A,0^n).
\]
Hence $\wt p<2t$, and therefore $\wt p\leq2t-1$. The
one-negative-prototype hypothesis forces equality. Indeed, for any
$i\in A$, the point $\mathbf1_{A\setminus\{i\}}$ is negative. Since $0^n$
is the sole negative prototype and $A\subseteq\supp(p)$,
\[
 \dist(\mathbf1_{A\setminus\{i\}},p)=\wt p-(t-1),
 \qquad
 \dist(\mathbf1_{A\setminus\{i\}},0^n)=t-1.
\]
Correct classification therefore requires
\[
 t-1<\wt p-(t-1),
\]
so $\wt p>2t-2$, and hence $\wt p\geq2t-1$. Combining this with the
preceding upper bound gives
\[
 \wt p=2t-1.
\]

For every $T\in\binom{\supp(p)}t$, we have $T\subseteq\supp(p)$ and
$\wt p=2t-1$. Hence
\[
 \dist(\mathbf1_T,p)=\wt p-t=t-1<t=\dist(\mathbf1_T,0^n),
\]
so $\mathbf1_T$ is positive. We have shown that
\[
 |\supp(p)|=2t-1,
\]
and that every $t$-subset $T\subseteq\supp(p)$ satisfies
\[
 f(\mathbf1_T)=1.
\]
Hence $\supp(p)$ satisfies both conditions in the definition of an
admissible rank-$t$ block. It also covers $A$, since
$A\subseteq\supp(p)$.

Since $A\in\mathcal A_t(f)$ was arbitrary, every member of
$\mathcal A_t(f)$ lies in an admissible rank-$t$ block arising from some
positive prototype. Hence $\mathcal A_t(f)$ can be covered by such blocks,
and, for fixed $t$, at least $\kappa_t(f)$ positive prototypes are needed.
Each admissible rank-$t$ block has size $2t-1$, so its corresponding
prototype has weight $2t-1$. Different values of $t$ require different
weights, and hence the required prototype sets are disjoint. Therefore
\[
 |P|\geq\sum_{t=1}^n\kappa_t(f).
\]
Thus every representation with one negative prototype has size at least
$1+\sum_{t=1}^n\kappa_t(f)$.

Conversely, suppose that, for every $t\in[n]$, the family
$\mathcal A_t(f)$ can be covered by admissible rank-$t$ blocks. For each
$t$, choose a minimum such cover, and use the incidence vectors of all
chosen blocks as positive prototypes, with zero as the sole negative
prototype. If $x$ is positive, it contains a minimal positive support $A$
of some size $t$. A chosen block $C$ contains $A$, and hence
\[
 \dist(x,\mathbf1_C)
 =\wt x+|C|-2|\supp(x)\cap C|.
\]
Since $A\subseteq\supp(x)\cap C$, we have
$|\supp(x)\cap C|\geq t$. As $|C|=2t-1$ and
$\dist(x,0^n)=\wt x$, it follows that
\[
 \dist(x,\mathbf1_C)\leq\dist(x,0^n)-1.
\]
Thus $x$ is classified positively.

If $x$ is negative, then for every admissible rank-$t$ block $C$
we have $|\supp(x)\cap C|\leq t-1$. Indeed, if
$|\supp(x)\cap C|\geq t$, choose
$T\subseteq\supp(x)\cap C$ of size $t$. Admissibility gives
$f(\mathbf1_T)=1$, and monotonicity would then force $x$ to be positive,
a contradiction. Consequently, for the positive prototype $\mathbf1_C$,
\[
 \dist(x,\mathbf1_C)
 =\wt x+2t-1-2|\supp(x)\cap C|
 \geq\wt x+1
 =\dist(x,0^n)+1.
\]
Thus $0^n$ is strictly closer than every positive prototype, so $x$ is
classified negatively. Hence the chosen prototypes form a representation
of size
\[
 1+\sum_{t=1}^n\kappa_t(f).
\]
Together with the lower bound, this proves the result.
\end{proof}

When the minimal positive supports of $f$ all have the same size, the
admissibility condition has a simple hypergraph interpretation.

\begin{corollary}\label{cor:one-negative-uniform}
Suppose that the minimal positive supports of $f$ form a non-empty
$t$-uniform family $\calF$. Then $f$ has a representation with one
negative prototype if and only if, for every $A\in\calF$, there is a set
$C\subseteq[n]$ such that
\[
 A\subseteq C,
 \qquad |C|=2t-1,
 \qquad\text{and}\qquad
 \binom Ct\subseteq\calF.
\]
The minimum total size of such a representation is one plus the minimum
number of these sets $C$ needed to cover $\calF$.

The condition above may equivalently be stated in hypergraph language:
every edge of $\calF$ belongs to a complete $t$-uniform hypergraph on
$2t-1$ vertices whose edges are all contained in $\calF$.
\end{corollary}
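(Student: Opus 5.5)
The plan is to derive Corollary~\ref{cor:one-negative-uniform} directly from Theorem~\ref{thm:one-negative} by specialising its two ingredients, the families $\mathcal A_s(f)$ and the admissible rank-$s$ blocks, to the case where the minimal positive supports form the $t$-uniform family $\calF$. First, note that $\mathcal A_s(f)=\varnothing$ for every $s\neq t$, so $\kappa_s(f)=0$ for these $s$. The covering hypothesis of Theorem~\ref{thm:one-negative} is then vacuous except at $s=t$, where $\mathcal A_t(f)=\calF$. The existence criterion and the size formula $1+\sum_s\kappa_s(f)=1+\kappa_t(f)$ therefore both reduce to statements about covering $\calF$ by admissible rank-$t$ blocks.

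The key step is to identify the admissible rank-$t$ blocks as exactly the sets $C$ with $|C|=2t-1$ and $\binom Ct\subseteq\calF$. By definition, $C$ is admissible when $|C|=2t-1$ and $f(\mathbf1_T)=1$ for every $t$-subset $T\subseteq C$. If $T$ is a $t$-set with $f(\mathbf1_T)=1$, then $T$ contains some minimal positive support $A$. Since $\calF$ is $t$-uniform, $|A|=t=|T|$, so $A=T$ and $T\in\calF$. Conversely, every member of $\calF$ is positive. Hence for $t$-sets, $f(\mathbf1_T)=1$ if and only if $T\in\calF$, and admissibility becomes $\binom Ct\subseteq\calF$.

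With this identification, "$\mathcal A_t(f)$ can be covered by admissible blocks" is precisely the stated condition that each $A\in\calF$ lies in some such $C$. Moreover $\kappa_t(f)$ is the minimum number of such sets $C$ covering $\calF$, which yields the stated minimum size. The hypergraph restatement is only a translation: $\binom Ct\subseteq\calF$ with $|C|=2t-1$ says that the complete $t$-uniform hypergraph on the vertex set $C$ is a subfamily of $\calF$, and $A\subseteq C$ says that $A$ is one of its edges.

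There is no real obstacle here. The only point needing care is the uniformity argument that a positive $t$-set must itself be a minimal support, and this is a single line.
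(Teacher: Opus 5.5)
Your proposal is correct and follows essentially the same route as the paper: you use uniformity to show that a positive $t$-set is itself a minimal support, which identifies the admissible rank-$t$ blocks as exactly the $(2t-1)$-sets $C$ with $\binom Ct\subseteq\calF$, and you then invoke Theorem~\ref{thm:one-negative}. Your explicit remark that $\mathcal A_s(f)=\varnothing$ and $\kappa_s(f)=0$ for $s\neq t$ spells out a reduction that the paper leaves implicit.
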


\begin{proof}
For any $t$-set $T$, we have $f(\mathbf1_T)=1$ if and only if
$T\in\calF$. Indeed, if $\mathbf1_T$ is positive, then $T$ contains a
minimal positive support; that support has size $t$, and hence equals $T$.
Therefore, for a set $C$ with $|C|=2t-1$, the condition that $C$ be an
admissible rank-$t$ block is equivalent to requiring every $t$-subset
$T\subseteq C$ to be positive, which in turn is equivalent to
\[
 \binom Ct\subseteq\calF.
\] Thus the admissible blocks are
exactly the complete $t$-uniform hypergraphs described in the statement.
The result now follows from Theorem~\ref{thm:one-negative}.
\end{proof}

For graphs, admissibility has a particularly concrete interpretation.

\begin{corollary}\label{cor:graph-one-negative}
Let $G$ be a graph on $[n]$ with at least one edge, and define
\[
 f_G(x)=1
 \quad\Longleftrightarrow\quad
 \supp(x)\text{ contains an edge of }G.
\]
Then $f_G$ has a Boolean nearest-neighbour representation with exactly one
negative prototype if and only if every edge of $G$ lies in a triangle.
When this holds, the minimum size of such a representation is one plus the
minimum number of triangles of $G$ whose edge sets cover $E(G)$.
\end{corollary}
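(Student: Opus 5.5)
This is a direct specialisation of Corollary~\ref{cor:one-negative-uniform} with $t=2$. The plan is to identify the family of minimal positive supports of $f_G$ and then read off what the admissible blocks are.

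First we check that $f_G$ is a non-constant monotone function whose minimal positive supports are exactly the edge sets $\{u,v\}$ of $G$.
\begin{itemize}
\item Monotonicity holds because enlarging a support cannot destroy a contained edge.
\item The function is non-constant: $0^n$ is negative, and $G$ has an edge, so some point is positive.
\item A positive support contains an edge $\{u,v\}$, and $\mathbf1_{\{u,v\}}$ is itself positive.
\item No proper subset of an edge is positive, since singletons and the empty set contain no edge.
\end{itemize}
Hence the minimal positive supports are precisely the $2$-sets $E(G)$. This family is non-empty and $2$-uniform, so Corollary~\ref{cor:one-negative-uniform} applies with $t=2$ and $\calF=E(G)$.

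With $t=2$, an admissible block is a set $C$ with $|C|=3$ and $\binom C2\subseteq E(G)$. All three pairs of $C$ must be edges, so $C$ is exactly the vertex set of a triangle of $G$. The condition ``$A\subseteq C$'' says that the edge $A$ is one of the three edges of that triangle.

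The existence criterion in Corollary~\ref{cor:one-negative-uniform} then becomes: every edge of $G$ lies in a triangle of $G$. The minimum size becomes one plus the least number of such triples $C$ covering $\calF$, that is, one plus the least number of triangles whose edge sets together cover $E(G)$.

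There is no real obstacle in this argument. The only point needing care is the verification that the minimal positive supports are exactly the edges, so that the uniformity hypothesis of Corollary~\ref{cor:one-negative-uniform} holds. This is routine.
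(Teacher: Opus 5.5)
Your proposal is correct and follows the paper's own proof. You identify the minimal positive supports of $f_G$ as exactly the edges of $G$, apply Corollary~\ref{cor:one-negative-uniform} with $t=2$, and observe that admissible rank-$2$ blocks are precisely the vertex sets of triangles. You additionally spell out the monotonicity and non-constancy checks, which the paper leaves implicit.
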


\begin{proof}
The minimal positive supports of $f_G$ are precisely the edges of $G$.
For $t=2$, an admissible block has three vertices and all three of its
two-subsets are edges of $G$; it is therefore exactly the vertex set of a
triangle. The result follows from
Corollary~\ref{cor:one-negative-uniform}.
\end{proof}

For example, $f(x_1,x_2)=x_1\land x_2$ has no Boolean representation
with exactly one negative prototype: a rank-$2$ block would need three
elements in $[2]$. The real prototypes $(0,0)$, labelled negative, and
$(3/2,3/2)$, labelled positive, do represent it. The function
$(x_1\land x_2)\lor(x_3\land x_4)$ gives an obstruction due to
admissibility rather than ambient size. Every three-element subset of
$[4]$ contains a negative pair with one element in each of $\{1,2\}$ and
$\{3,4\}$, so no rank-$2$ block is admissible. Equivalently, its two minimal
positive supports are edges of a graph with no triangle, and
Corollary~\ref{cor:graph-one-negative} excludes a one-negative Boolean
representation.

Taking $\calF=\binom{[n]}t$ in
Corollary~\ref{cor:one-negative-uniform} recovers $TH_t^n$. If
$2t-1\leq n$, this gives the one-negative covering construction of
Theorem~\ref{thm:covering}. If $2t-1>n$, no set
$C\subseteq[n]$ of size $2t-1$ exists, and hence there is no admissible
rank-$t$ block and no representation with exactly one negative prototype;
in this case, applying the dual version of Theorem~\ref{thm:covering} gives
a representation with exactly one positive prototype.

More generally, covering the minimal positive supports by sets of size
$2t-1$ is not enough: admissibility requires every $t$-subset of $C$ to
belong to $\calF$. If this fails, Proposition~\ref{prop:shadow-representation}
still gives a Boolean nearest-neighbour representation, but it may require
more than one negative prototype.

\begin{proposition}\label{prop:shadow-representation}
Let $1\leq t\leq n$, and let
$\calF\subseteq\binom{[n]}t$ be a non-empty $t$-uniform family. Write
\[
 \partial^-\calF
 =\{A\setminus\{a\}:A\in\calF,\ a\in A\}
\]
for its lower shadow.  Define the monotone Boolean function $f$ by
\[
 f(x)=1\quad\Longleftrightarrow\quad
 A\subseteq\supp(x)\text{ for some }A\in\calF.
\]
Thus the minimal positive supports of $f$ are precisely the members of
$\calF$.  Then
\[
 P=\calF,
 \qquad
 N=\partial^-\calF
\]
is a Boolean nearest-neighbour representation of $f$, and
\[
 \BNN(f)\leq\lvert\calF\rvert+\lvert\partial^-\calF\rvert.
\]
\end{proposition}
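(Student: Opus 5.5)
The plan is to verify the two classification conditions directly by Hamming-distance computations, in the same style as the upper-bound half of Theorem~\ref{thm:covering}. I identify each set with its incidence vector, so that $P=\{\mathbf1_A:A\in\calF\}$ and $N=\{\mathbf1_B:B\in\partial^-\calF\}$. These two sets are disjoint because their members have weights $t$ and $t-1$ respectively, and both are non-empty because $\calF\neq\varnothing$. For $x\in\cube$ with $S=\supp(x)$ and any set $C$, I use the identity
\[
 \dist(x,\mathbf1_C)=\lvert S\rvert+\lvert C\rvert-2\lvert S\cap C\rvert .
\]
The bound $\BNN(f)\leq\lvert\calF\rvert+\lvert\partial^-\calF\rvert$ then follows once correctness of the representation is established.

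For a positive point $x$, some $A\in\calF$ satisfies $A\subseteq S$, so $\dist(x,\mathbf1_A)=\lvert S\rvert-t$. For every $B\in\partial^-\calF$ we have $\lvert S\cap B\rvert\leq t-1$. This gives
\[
 \dist(x,\mathbf1_B)\geq\lvert S\rvert+(t-1)-2(t-1)=\lvert S\rvert-t+1 .
\]
Hence some positive prototype is strictly closer to $x$ than every negative prototype.

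For a negative point $x$, no member of $\calF$ is contained in $S$. Choose $A\in\calF$ maximising $k=\lvert S\cap A\rvert$; then $k\leq t-1$. Every positive prototype $\mathbf1_{A'}$ therefore satisfies $\lvert S\cap A'\rvert\leq k$, and so
\[
 \dist(x,\mathbf1_{A'})\geq\lvert S\rvert+t-2k .
\]
Since $k<t$, there is an element $a\in A\setminus S$. Take $B=A\setminus\{a\}\in\partial^-\calF$. Then $\lvert S\cap B\rvert=k$, so
\[
 \dist(x,\mathbf1_B)=\lvert S\rvert+t-1-2k ,
\]
which is strictly smaller than the distance to any positive prototype.

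The only step that needs care is this negative case. The point is to choose the shadow element by removing a coordinate of $A$ \emph{outside} $S$, so that the intersection with $S$ is not reduced. The maximality of $k$ over all of $\calF$ then controls every positive prototype simultaneously. No ties across labels occur in either case, so the conditions of the definition hold at every point of the cube.
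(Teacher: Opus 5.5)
Your proposal is correct and follows essentially the same route as the paper: the positive case uses $\lvert S\cap B\rvert\leq t-1$ for shadow sets. The negative case takes $A\in\calF$ maximising $\lvert S\cap A\rvert$, which makes $\mathbf1_A$ a closest positive prototype, and deletes an element of $A\setminus S$ to obtain a negative prototype exactly one unit closer. Your explicit check that $P$ and $N$ are disjoint and non-empty is a small detail the paper leaves implicit.
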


\begin{proof}
Identify cube points with their supports.

Suppose first that $S$ is positive. Then $S$ contains some $A\in\calF$.
Since $|A|=t$,
\[
 \dist(S,A)=|S|-t.
\]
Every negative prototype $B\in\partial^-\calF$ has size $t-1$. Since
$B$ can contain at most $t-1$ elements of $S$, at least $|S|-t+1$
elements of $S$ are absent from $B$. Hence
\[
 \dist(S,B)\geq|S|-t+1.
\]
Therefore $S$ is strictly closer to the positive prototype $A$ than to
every negative prototype, so $S$ is classified positively.

Now suppose that $S$ is negative. For every positive prototype
$A'\in\calF$,
\[
 \dist(S,A')=|S|+t-2|S\cap A'|,
\]
since $|A'|=t$. Choose $A\in\calF$ so that $|S\cap A|$ is as large as
possible. Then
\[
 \dist(S,A)=\min_{A'\in\calF}\dist(S,A'),
\]
so $A$ is a closest positive prototype to $S$.

Because $S$ is negative, it cannot contain $A$, since otherwise it would
contain a minimal positive support and would be positive by monotonicity.
Hence there is some $a\in A\setminus S$. The set
$A\setminus\{a\}$ belongs to $\partial^-\calF$, so it is a negative
prototype. Since $a\notin S$, removing $a$ from $A$ removes exactly one
disagreement with $S$. Thus
\[
 \dist\bigl(S,A\setminus\{a\}\bigr)=\dist(S,A)-1.
\]
Since $\dist(S,A)$ is the minimum distance from $S$ to any positive
prototype, the negative prototype $A\setminus\{a\}$ is strictly closer to
$S$ than every positive prototype. Hence $S$ is classified negatively.
\end{proof}

For the threshold function $TH_t^n$, Proposition~\ref{prop:shadow-representation}
uses the entire labelled boundary: all weight-$t$ positive prototypes and
all weight-$(t-1)$ negative prototypes. Its size is
\[
 \binom nt+\binom n{t-1}.
\]
By contrast, Theorem~\ref{thm:covering} gives the optimal Boolean
nearest-neighbour complexity. Writing
\[
 r=\min\{t,n-t+1\},
\]
the entropy estimate of Corollary~\ref{cor:entropy} yields
\[
 \frac{\binom nt+\binom n{t-1}}{\BNN(TH_t^n)}
 =2^{2r+O(\log_2 n)},
\]
with an implicit constant independent of $t$. Thus the boundary
construction can be exponentially larger than an optimal representation:
up to polynomial factors, it uses $2^{2r}$ times as many prototypes.

\section{Boolean complexity of a uniformly random function}
\label{sec:uniform-random-limit}

The counting arguments of Section~\ref{sec:capacity} give general almost-all
lower bounds, including for Boolean $k$-nearest-neighbour representations,
simultaneously for every $k$. In the ordinary Boolean nearest-neighbour case,
the component constraints of Section~\ref{sec:boolean-structure} allow a more
detailed description in the uniform random model. After a parity twist, the
argument uses the component structure of site percolation on $Q_n$ at retention
probability $1/2$ to show that the ratio $\BNN(F_n)/2^n$ is asymptotically close
either to $1/2$ or to $1$, with explicit limiting probabilities.

The remainder of this section concerns $\BNN$, as defined in the first
paragraph of Section~\ref{sec:definitions}. In particular, nearest prototypes
of the same label may be tied; the separation requirement is only between
opposite labels.

\begin{theorem}[Two-point limit]\label{thm:uniform-random-limit}
For each $n$, let $F_n$ be uniformly distributed among all Boolean functions
on $\{0,1\}^n$. Put
\[
 p=\bigl(1-e^{-1/2}\bigr)^2=0.154818\ldots.
\]
Then, as $n\to\infty$, for every $0<\varepsilon<1/2$,
\[
 \Pr\!\left(
  \left(\frac12-\varepsilon\right)2^n
  <\BNN(F_n)<
  \left(\frac12+\varepsilon\right)2^n
 \right)\longrightarrow 1-p
\]
and
\[
 \Pr\!\left(
  (1-\varepsilon)2^n<\BNN(F_n)\leq 2^n
 \right)\longrightarrow p.
\]
Moreover,
\[
 \lim_{n\to\infty}\frac{\mathbb E\BNN(F_n)}{2^n}
 =1-e^{-1/2}+\frac12e^{-1}=0.577409\ldots.
\]
\end{theorem}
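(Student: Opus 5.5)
The proof rests on a parity twist. Put $g_n=F_n\oplus\pi$, where $\pi(x)=\wt x \bmod 2$; then $g_n$ is again uniformly random. Across every cube edge the parity changes, so an edge $xy$ is an edge of $H_{F_n}$ exactly when $g_n(x)=g_n(y)$. Hence the bichromatic components of $F_n$ are exactly the monochromatic components of $g_n$, and these are the two colour classes of site percolation on $Q_n$ at retention probability $1/2$.

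By Proposition~\ref{prop:edge-closure}, every prototype set $S$ is a union of $g_n$-monochromatic components. I will quote the standard supercritical picture at $p=1/2$ (Bollob\'as--Kohayakawa--{\L}uczak type results). With high probability each colour class $g_n^{-1}(c)$ has a unique giant component $\Gamma_c$ with $(1-o(1))2^{n-1}$ vertices. All other components are small, with total size $o(2^n)$, and every small component is very small.

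\textbf{Upper bound.} For $c\in\{0,1\}$, let $E_c$ be the event that some vertex $x$ has $g_n(x)=1-c$ and all its neighbours in $g_n^{-1}(1-c)$, that is, $g_n$ is constant $1-c$ on the closed neighbourhood of $x$. I claim that on $\overline{E_c}$ the set $S=g_n^{-1}(c)$ is a representation. A point of $S$ is its own prototype. Take $x\notin S$, so $g_n(x)=1-c$. A neighbour $y\in S$ has $g_n(y)\neq g_n(x)$, hence $F_n(y)=F_n(x)$. Since $x\notin S$, every prototype is at distance at least $1$, so all prototypes at distance $1$ carry the correct label, and at least one exists exactly because $E_c$ fails. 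Since $\lvert g_n^{-1}(c)\rvert=2^{n-1}+O_P(2^{n/2})$, this gives $\BNN(F_n)\leq(1/2+\varepsilon)2^n$ on $\overline{E_0}\cup\overline{E_1}$.

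The number of vertices witnessing $E_c$ has mean $2^n\cdot2^{-(n+1)}=1/2$. For the joint law, I would use the Chen--Stein method, or the method of factorial moments, since witnesses interact only at distance at most $2$. This shows that the two counts are asymptotically independent $\mathrm{Poisson}(1/2)$. Therefore $\Pr(E_0\cap E_1)\to(1-e^{-1/2})^2=p$.

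\textbf{Lower bounds.} On $E_c$, a witness $x$ is an isolated vertex of $F_n$'s label class, so by Lemma~\ref{lem:monochromatic-shortest-paths} it is a prototype. Its closed neighbourhood has constant $g_n$-value, so whp $x$ lies in the giant $\Gamma_{1-c}$; if not, this is a negligible event that I would exclude by a first-moment count. Proposition~\ref{prop:edge-closure} then forces $\Gamma_{1-c}\subseteq S$. On $E_0\cap E_1$ both giants are forced, so $\BNN(F_n)\geq(1-\varepsilon)2^n$, with $2^n$ as the trivial upper bound.

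On the complement I must show $\BNN(F_n)\geq(1/2-\varepsilon)2^n$, which amounts to showing that every representation contains some giant. This is the main obstacle. Suppose $S$ avoids both giants, so that $S$ is a union of small components and $\lvert S\rvert=o(2^n)$. Then a typical vertex $x$ is far from $S$. By Lemma~\ref{lem:monochromatic-shortest-paths}, every shortest path from $x$ to a nearest prototype is $F_n$-monochromatic, so consecutive vertices on it alternate in $g_n$-colour. I would first try to show that, whp, from a typical $x$ the many geodesics towards any target at distance $d\geq2$ cannot all alternate in $g_n$. There are about $d!$ such paths, each alternating with probability $2^{-d}$, so a first-moment and union bound over $x$ and over the small components should produce a contradiction. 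This would force $\Gamma_0\subseteq S$ or $\Gamma_1\subseteq S$.

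Finally, $\BNN(F_n)/2^n$ is bounded, so bounded convergence turns the two-point limit into the expectation:
\[
 \tfrac12(1-p)+p=1-e^{-1/2}+\tfrac12e^{-1}.
\]
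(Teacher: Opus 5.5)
Most of your outline is sound. The parity twist, the Poisson limit for the two witness counts, the forcing of both giants on $E_0\cap E_1$, and the bounded-convergence step all match the paper. Your upper bound, which takes the whole colour class $g_n^{-1}(c)$ as prototypes on $\overline{E_c}$, is also correct. It is even simpler than the paper's Part~3: you omit the entire class $g_n^{-1}(1-c)$ rather than only its giant, so percolation is not needed for that direction.

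The gap is the step you flag yourself. Off $E_0\cap E_1$ you must show that every representation contains a giant, and the geodesic argument does not deliver this. First, $\lvert S\rvert=o(2^n)$ does not imply that a typical vertex is far from $S$; that needs something like $\lvert S\rvert\le 2^n/n^{C\log n}$. Second, the claim you want is false for small $d$. By Lemma~\ref{lem:monochromatic-shortest-paths}, the real constraint is that the entire $d$-dimensional subcube spanned by $x$ and $p$ is $F_n$-monochromatic. The $d!$ geodesics overlap heavily, so this event has probability $2^{1-2^d}$, not a product of $d!$ path probabilities. For $d=2$ a typical $x$ has about $n^2/16$ targets passing the test. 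A union bound only works once $d\geq\log_2 n+O(1)$.

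The missing ingredient is already in the paper. Corollary~\ref{cor:almost-all-BNN} with $c=1/5$, which is allowed since $H_2(1/5)+1/5<1$, gives $\BNN(F_n)>2^n/5$ with probability tending to one. The percolation theorem makes $R=V_n\setminus(C_0\cup C_1)$ smaller than $\eta 2^n$ with $\eta\leq 1/10$. Every prototype set is a union of bichromatic components, so a representation containing neither giant would lie inside $R$ and be too small. Hence it contains some $C_a$ and has more than $(1/2-\eta/2)2^n$ prototypes.

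Your geodesic idea can be rescued, but only with the sharper form of Sapozhenko's theorem: every non-giant component is a single vertex. Then $\lvert R\rvert$ is at most the number of vertices isolated in their $g_n$-class, which has mean $1$. With high probability there is also no $F_n$-monochromatic subcube of dimension $\lceil\log_2(3n)\rceil$. So every representation has covering radius $O(\log n)$ and hence at least $2^{n-O(\log^2 n)}$ prototypes. That same singleton statement is also what places each witness inside its giant on $E_0\cap E_1$, replacing your unspecified first-moment count.
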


Thus $\BNN(F_n)/2^n$ is close to $1/2$ with limiting probability
$0.845181\ldots$, and close to $1$ with limiting probability
$0.154818\ldots$. In particular, $1/2$ is the sharp constant in an almost-all
lower bound: for every $\varepsilon>0$, the probability that
$\BNN(F_n)\geq(1/2-\varepsilon)2^n$ tends to one, whereas no larger constant
has this property.

For Boolean voting rules with $k\geq2$,
Corollary~\ref{cor:almost-all-BNN} still provides an almost-all lower bound
simultaneously over all such $k$.

Throughout this section, put $M=2^n$ and write $V_n=\{0,1\}^n$ for the
vertex set of $Q_n$. Recall that two vertices of $Q_n$ are adjacent
precisely when they differ in one coordinate, and that
$H_f$ is the spanning subgraph whose edges join vertices at which $f$ takes
different values.

Let $\pi(x)=|x|\pmod2$ and put $g(x)=f(x)\oplus\pi(x)$, where $\oplus$
denotes addition modulo two. Along an edge of $Q_n$, $f$ changes precisely
when $g$ does not. It follows that the connected components of $H_f$ are
exactly the connected components of the two induced graphs
$Q_n[g^{-1}(0)]$ and $Q_n[g^{-1}(1)]$.

\begin{lemma}\label{lem:omit-alternating-component}
Let $C$ be a connected component of $H_f$.  If every vertex of $C$ has a
Hamming neighbour outside $C$, then $V_n\setminus C$, with each prototype
labelled by $f$ at its location, represents $f$.
\end{lemma}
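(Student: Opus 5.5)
We verify directly that the prototype set $S=V_n\setminus C$, labelled by $f$, classifies every point correctly. We never appeal to optimality. The verification splits according to whether the point lies outside or inside $C$. We also note that $S$ is non-empty, since any vertex of $C$ has a neighbour outside $C$.

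First, let $x\notin C$. Then $x\in S$, so $x$ is a prototype at distance $0$ from itself, carrying the label $f(x)$. Every other prototype lies at distance at least $1$. The sole closest prototype is therefore $x$, and $x$ is classified correctly.

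Second, let $x\in C$. By hypothesis $x$ has a Hamming neighbour $y\notin C$, and $y\in S$. So the minimum distance from $x$ to $S$ is exactly $1$, and the closest prototypes are precisely the neighbours of $x$ that lie outside $C$. Take any such neighbour $y$. If $f(y)\neq f(x)$, then $xy$ is an edge of $H_f$. This would put $y$ in the same component $C$ of $H_f$ as $x$, a contradiction. Hence every neighbour of $x$ outside $C$ has $f(y)=f(x)$. All closest prototypes therefore carry the label $f(x)$, so there is no cross-label tie, and the strict inequality holds because no prototype of the opposite label lies at distance $\le 1$. Any other prototype is at distance at least $2$ or has the correct label.

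The only delicate step is the observation that a neighbour outside a component of $H_f$ must share the function value. This is immediate from the definition of $H_f$ as the graph of bichromatic cube edges. Everything else is the distance-$0$/distance-$1$ bookkeeping above, together with the convention that ties among prototypes of the same label are permitted.
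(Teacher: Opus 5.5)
Your proposal is correct and follows essentially the same argument as the paper. Points outside $C$ classify themselves. For $x\in C$, a neighbour outside $C$ must share the label of $x$, since otherwise the edge would lie in $H_f$ and put it in $C$; the same reasoning places every opposite-label prototype at distance at least $2$.
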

\begin{proof}
Let $x\in C$. By hypothesis, $x$ has a Hamming neighbour $y\notin C$.
Since the proposed prototype set is $V_n\setminus C$, this vertex $y$ is a
prototype. We have $f(x)=f(y)$:
otherwise $xy$ would be an edge of $H_f$, which would put $x$ and $y$ in
the same connected component, contrary to $y\notin C$. Thus $x$ has a
same-label prototype at distance 1. If an opposite-label prototype $z$
were also at distance 1 from $x$, then $xz$ would be an edge of $H_f$,
so $z$ would belong to $C$, contradicting the choice of the prototype set.
Hence every opposite-label prototype is at distance at least 2 from $x$,
and $x$ is classified correctly. Every vertex outside $C$ is itself a
prototype and therefore classifies itself correctly.
\end{proof}

We use the following theorem of Sapozhenko \cite{Saposhenko75} on site percolation
on $Q_n$ at retention probability $1/2$, stated only in the form needed below.

\begin{theorem}[Sapozhenko]\label{thm:percolation-giant}
Let $U$ be obtained by retaining every vertex of $Q_n$ independently with
probability $1/2$, and let $C$ be a largest connected component of $Q_n[U]$
(empty if $U$ is empty). Then, for every $\varepsilon>0$, the probability
that
\[
 \left|\frac{|C|}{2^n}-\frac12\right|<\varepsilon
\]
and that every connected component of $Q_n[U]$ other than $C$ is a single
vertex tends to one as $n\to\infty$.
\end{theorem}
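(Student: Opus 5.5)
Sapozhenko's theorem is quoted from \cite{Saposhenko75}, and the paper uses it as a black box. If I had to supply a proof, I would follow the standard percolation route. There are two parts: a concentration statement for $|U|$, and a structural statement that all components of $Q_n[U]$ except one are isolated vertices. Chebyshev's inequality gives $|U|=(1/2+o(1))2^n$ with high probability, since $|U|$ is binomial with parameters $2^n$ and $1/2$. Each vertex is a retained isolated vertex with probability $2^{-(n+1)}$, so the expected number of isolated retained vertices is $1/2$. In particular, with high probability there are at most $\log n$ of them. Once the structural statement holds, $|C|=|U|-O(\log n)$, and the size assertion follows. The whole proof therefore reduces to showing that, with high probability, no component of $Q_n[U]$ other than the largest has at least two vertices.

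For this I would use a first-moment count over connected vertex sets. A set $S$ is the vertex set of a component exactly when $S\subseteq U$ and its outer vertex boundary $\partial S$ is disjoint from $U$. This event has probability $2^{-|S|-|\partial S|}$. The number of connected sets of size $s$ containing a fixed vertex is at most $(\e n)^{s}$. The expected number of components of size $s$ is therefore at most
\[
 2^n(\e n)^s\,2^{-s-\min_{|S|=s}|\partial S|}.
\]
For $2\leq s\leq\varepsilon n$, the elementary bound $|\partial S|\geq sn-\binom s2\cdot2-s$ controls this sum. Already at $s=2$ the bound is $O(n2^{-n})$, and the terms decay geometrically in $s$ throughout this range. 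For $\varepsilon n\leq s\leq 2^{n}/n^{C}$, I would use Harper's vertex-isoperimetric inequality. It gives $|\partial S|\geq c\,s\,n/\log(2^n/s)$, or at least $|\partial S|\geq s\sqrt n/\mathrm{polylog}$ in the middle range. These bounds dominate the entropy term $s\log_2(\e n)$ once $s$ is not tiny, except near the upper end of the range.

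The main obstacle is the range of components of linear or near-linear size. It has two parts: ruling out medium components, and ruling out a second large component coexisting with the giant one. Here the crude count $(\e n)^s$ of connected sets is far too weak compared with the isoperimetric gain, which is only of order $s/\sqrt n$. I would use Sapozhenko's graph-container method. Each candidate pair $(S,\partial S)$ is encoded by a much smaller ``container'' pair, determined by a small subset of $\partial S$ chosen greedily. This reduces the number of relevant sets to roughly $2^{O(|\partial S|\log n/\sqrt n)}$, which the factor $2^{-|\partial S|}$ then beats. The same bound excludes a split of $U$ into two large pieces with no edges between them, because the separating set would have to lie entirely in the complement of $U$ and be large. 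Combining the three ranges shows that, with high probability, every non-giant component is a single vertex, which completes the proof.
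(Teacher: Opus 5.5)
The paper gives no proof of this statement. It imports it from Sapozhenko \cite{Saposhenko75} (see also \cite{Weber83,BKL94}) as a black box, so your sketch has to stand on its own. Its easy parts are fine: concentration of $|U|$, mean $1/2$ for the isolated retained vertices, and the first-moment bound for $2\leq s\leq\varepsilon n$, where your elementary boundary bound is valid.

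The gap is everything above that, and it is much larger than you indicate. Your isoperimetric inputs are misstated. The bound $|\partial S|\geq c\,sn/\log(2^n/s)$ would exceed $2^n$ at $s=2^{n-1}$. A Hamming ball with $s\approx2^{n-2}$ has $|\partial S|=\Theta(s/\sqrt n)$, not $s\sqrt n/\mathrm{polylog}$. What Harper's inequality actually gives is a ratio $|\partial S|/s$ of about $(1-2\beta)/\beta$ at size $2^{H_2(\beta)n}$, and about $\sqrt{\log(2^n/s)/n}$ near the middle. The crude count $(\e n)^s$ is therefore beaten only while this ratio exceeds roughly $\log_2(\e n)$. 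That is, only for $s$ up to about $\binom{n}{n/\log_2 n}=2^{O(n\log\log n/\log n)}$; already at $s=2^{\delta n}$ the ratio is a constant. So the range needing a new idea is not ``linear or near-linear'' sizes. It is all $s$ from $2^{o(n)}$ to about $2^{n-2}$.

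For that range you only name the container method, and the accounting you describe does not close. Pairs $(S,\partial S)$ cannot be encoded injectively by $2^{O(\lvert\partial S\rvert\log n/\sqrt n)}$ containers. Take $B_r$ the Hamming ball of radius $r$ slightly below $n/2$ and $X$ any subset of layer $r+1$. The sets $S=B_r\cup X$ are $2^{\binom n{r+1}}$ distinct connected sets with $|\partial S|\leq\binom n{r+1}+\binom n{r+2}$. So after choosing a container you must still sum the probability $2^{-|S|-|\partial S|}=2^{-|N[S]|}$, where $N[S]=S\cup\partial S$, over all compatible $S$. That sum is where the proof lives, and your sketch does not address it.

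One clean formulation is as follows. Two distinct components cannot share the closed neighbourhood $T=N[S]$, and $S\subseteq T^\circ=\{v:N[v]\subseteq T\}$. Hence the probability that some component has closed neighbourhood $T$ is at most $2^{-|T\setminus T^\circ|}$. Under $x\mapsto(x,|x|\bmod 2)$, the bipartite graph on two copies of $\{0,1\}^n$ joining $x$ to each point of $N[x]$ is exactly $Q_{n+1}$. Connected sets $S$ become $2$-linked subsets of its even side, with neighbourhood $N[S]$ and closure $T^\circ$.

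What you need is thus a Korshunov--Sapozhenko type container estimate for independent sets in $Q_{n+1}$ at fugacity one, with the singletons supplying the mean $1/2$. It must cover closures up to about half the even side, and larger closures need a separate argument. Without such an estimate, your proposal, like the paper, rests on Sapozhenko's theorem rather than proving it.
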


\begin{remark}\label{rem:percolation-provenance}
The proof uses only the component-structure conclusion of this percolation
theorem. It does not require the accompanying limiting law for the number of
isolated vertices. See also Weber~\cite{Weber83}; Bollob\'as, Kohayakawa, and
\L uczak~\cite[Section~1]{BKL94} give an accessible English-language account.
\end{remark}

Write $B_1(x)$ for the closed Hamming ball of radius 1 about $x\in V_n$.

\begin{lemma}\label{lem:monochromatic-balls-poisson}
Let $G_n:V_n\to\{0,1\}$ be random, with the values $G_n(x)$, $x\in V_n$,
chosen independently and each equally likely to be $0$ or $1$. For
$a\in\{0,1\}$, let $Z_a$ be the number, possibly zero, of vertices
$x\in V_n$ for which $G_n(y)=a$ for every $y\in B_1(x)$.
Then, as $n\to\infty$, for every pair of non-negative integers $i,j$,
\[
 \Pr(Z_0=i,Z_1=j)\longrightarrow
 \frac{e^{-1}}{2^{i+j}i!j!}.
\]
In particular,
\[
 \Pr(Z_a=0)\longrightarrow e^{-1/2}\quad(a=0,1),
 \qquad
 \Pr(Z_0=0,Z_1=0)\longrightarrow e^{-1}.
\]
\end{lemma}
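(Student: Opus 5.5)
We will prove joint convergence of $(Z_0,Z_1)$ to independent Poisson variables with mean $1/2$ each by the method of factorial moments. It suffices to show that for all fixed $i,j\geq0$,
\[
 \mathbb E\bigl[(Z_0)_i(Z_1)_j\bigr]\longrightarrow (1/2)^{i+j},
\]
where $(Z)_i=Z(Z-1)\cdots(Z-i+1)$. The Poisson law is determined by its moments, so the standard multivariate moment criterion then gives the stated limit. The two displayed special cases follow by summing over the free index and by taking $i=j=0$.

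First compute the means. Each ball $B_1(x)$ has $n+1$ points, so
\[
 \Pr\bigl(G_n\equiv a\text{ on }B_1(x)\bigr)=2^{-(n+1)},
 \qquad
 \mathbb E Z_a=2^n2^{-(n+1)}=1/2 .
\]
Next, $(Z_0)_i(Z_1)_j$ counts ordered tuples of distinct centres $x_1,\dots,x_i,y_1,\dots,y_j$ with $G_n\equiv0$ on each $B_1(x_s)$ and $G_n\equiv1$ on each $B_1(y_s)$. Split these tuples into two classes.

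\emph{Separated tuples.} These are tuples whose balls are pairwise disjoint, that is, whose centres are pairwise at Hamming distance at least $3$. For such a tuple the events are independent, so its probability is $2^{-(n+1)(i+j)}$. The number of separated tuples is $(1-o(1))2^{n(i+j)}$, because for fixed $i+j$ only $O(n^2)$ points lie within distance $2$ of a given centre. Hence separated tuples contribute $(1+o(1))2^{-(i+j)}$.

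\emph{Overlapping tuples.} These are the main obstacle, and the point is that overlaps are costly. If the balls of a $0$-centre and a $1$-centre intersect, the event is impossible, since some point would need both values. Among same-colour centres, group the centres into clusters, joining two centres when they are within distance $2$. Two balls at distance $1$ or $2$ share at most $2$ points, so the union of a cluster of size $c$ has at least $c(n+1)-O(c^2)$ points. The event therefore has probability at most $2^{-c(n+1)+O(1)}$. On the other hand, the number of ways to place a cluster of size $c$ is at most $2^n\,(n^2)^{c-1}$, using a spanning tree of the cluster. A cluster with $c\geq2$ thus contributes at most
\[
 2^n n^{2(c-1)}2^{-c(n+1)+O(1)}=O\bigl(n^{2(c-1)}2^{-(c-1)n}\bigr)\longrightarrow0 .
\]
Multiplying over clusters and summing over the finitely many cluster partitions, all overlapping tuples contribute $o(1)$.

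For the union bound over clusters, it is enough to note that each point of the union $\bigcup B_1$ is forced to a single value. The union then has size at least $c(n+1)-\binom c2\cdot 2$, because pairwise ball intersections have size at most $2$ for distinct centres; in the cube, two distinct centres share at most two common neighbours, and a centre at distance $1$ shares exactly two points. This completes the factorial-moment computation and hence the lemma.
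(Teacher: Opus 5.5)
Your proposal is correct and follows essentially the same route as the paper: factorial moments with the multivariate Poisson criterion, a split into well-separated tuples (pairwise disjoint balls, $(1-o(1))M^{r}$ of them) and the rest, and the fact that two distinct radius-1 balls share at most two points. Your cluster and spanning-tree bookkeeping is valid but not needed. The paper bounds all non-separated tuples at once by $\binom r2 M^{r-1}\bigl(n+\binom n2\bigr)$ and uses the crude union size $r(n+1)-2\binom r2$ for every such tuple, which already gives a contribution of $O(n^2 2^{-n})$.
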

\begin{proof}
For a fixed $x\in V_n$ and $a\in\{0,1\}$, the ball $B_1(x)$
contains $n+1$ vertices. Since their values are chosen independently, the
probability that $G_n(y)=a$ for every $y\in B_1(x)$ is $2^{-(n+1)}$.
Summing over the $M$ possible centres gives
\[
 \mathbb E Z_a=M2^{-(n+1)}=\frac12.
\]

For a fixed realisation $g$ of $G_n$, let
\[
 X_a=\{x\in V_n:g(y)=a\text{ for every }y\in B_1(x)\},
\]
so that $Z_a=|X_a|$. Write $(z)_j=z(z-1)\cdots(z-j+1)$ for $j\geq1$,
and set $(z)_0=1$. We will prove that, for every pair of non-negative
integers $r_0,r_1$,
\[
 \mathbb E\bigl[(Z_0)_{r_0}(Z_1)_{r_1}\bigr]
 \longrightarrow (1/2)^{r_0+r_1}.
\]

Fix $r_0,r_1$ and put $r=r_0+r_1$. If $r=0$, then both sides equal $1$.
If $r=1$, the required limit follows from $\mathbb E Z_a=1/2$. We may
therefore suppose that $r\geq2$.

The product $(Z_0)_{r_0}(Z_1)_{r_1}$ counts ordered tuples
\[
 (x_1,\ldots,x_{r_0},y_1,\ldots,y_{r_1}),
\]
where the $x_i$ are distinct members of $X_0$ and the $y_j$ are distinct
members of $X_1$. Since $X_0\cap X_1=\varnothing$, all $r$ centres in the
tuple are distinct. The balls centred at the first $r_0$ entries are
prescribed value $0$, and those centred at the remaining $r_1$ entries are
prescribed value $1$. The expectation is obtained by summing, over all
possible ordered tuples of distinct centres, the probability that their
prescribed ball conditions hold.

Call such a tuple well-separated if every two of its centres have Hamming
distance at least 3, and let $A_n$ be the number of well-separated
tuples. There are
\[
 (M)_r=M(M-1)\cdots(M-r+1)
\]
ordered tuples of $r$ distinct centres altogether.

We now bound the number that are not well-separated. There are $\binom r2$
ways to choose two entries in the tuple that might form a close pair. For
each such pair, there are $M$ choices for the first centre,
\[
 n+\binom n2
\]
choices for the second centre at Hamming distance 1 or 2 from it, and at
most $M^{r-2}$ choices for the remaining centres. Consequently,
\[
 0\leq (M)_r-A_n
 \leq \binom r2 M^{r-1}\left(n+\binom n2\right).
\]
This is an upper bound because a tuple containing more than one close pair
may be counted more than once. Dividing by $M^r$ gives
\[
 0\leq \frac{(M)_r}{M^r}-\frac{A_n}{M^r}
 \leq \binom r2\frac{n+\binom n2}{M}.
\]
Since $r$ is fixed and $M=2^n$,
\[
 \frac{(M)_r}{M^r}
 =\prod_{j=0}^{r-1}\left(1-\frac{j}{M}\right)\longrightarrow1
\]
and
\[
 \binom r2\frac{n+\binom n2}{M}\longrightarrow0.
\]
It follows that
\[
 \frac{A_n}{M^r}\longrightarrow1.
\]

For every well-separated tuple, the $r$ radius-1 balls are disjoint. The
prescribed conditions therefore fix the values of $g$ at $r(n+1)$ distinct
vertices, so the probability that they all hold is $2^{-r(n+1)}$. The contribution of the well-separated tuples to
$\mathbb E[(Z_0)_{r_0}(Z_1)_{r_1}]$ is therefore
\[
 A_n2^{-r(n+1)}
 =\frac{A_n}{M^r}\,2^{-r}
 \longrightarrow2^{-r}.
\]

It remains to consider the tuples that are not well-separated. Each of the
$r$ balls contains $n+1$ vertices, so counting the balls separately gives
$r(n+1)$ vertices. Any two distinct radius-1 balls intersect in at most
two vertices, and there are $\binom r2$ pairs of balls. It follows that their
union contains at least
\[
 r(n+1)-2\binom r2
\]
vertices.

If two prescribed values conflict on an intersection, the probability
associated with the tuple is zero. Otherwise, the ball conditions prescribe
the values of $g$ throughout their union, so the probability is at most
\[
 2^{-r(n+1)+2\binom r2}.
\]
Using the preceding bound on the number of tuples that are not well-separated,
their contribution to the same expectation is at most
\[
 \binom r2 M^{r-1}\left(n+\binom n2\right)
 2^{-r(n+1)+2\binom r2}.
\]
Since $M=2^n$, this is
\[
 2^{2\binom r2-r}\binom r2\frac{n+\binom n2}{M},
\]
which tends to zero. Combining the contributions from the well-separated
and remaining tuples gives
\[
 \mathbb E\bigl[(Z_0)_{r_0}(Z_1)_{r_1}\bigr]
 \longrightarrow 2^{-r}=(1/2)^{r_0+r_1}.
\]

The multivariate factorial-moment criterion
\cite[Theorem~6.10]{JLR00}, applied with $\lambda_0=\lambda_1=1/2$, now
gives
\[
 \Pr(Z_0=i,Z_1=j)\longrightarrow
 \frac{e^{-1}}{2^{i+j}i!j!}.
\]
In particular, each marginal converges in distribution to a Poisson random
variable with mean $1/2$. Hence
\[
 \Pr(Z_a=0)\longrightarrow e^{-1/2}\qquad(a=0,1),
\]
while the joint limit at $(0,0)$ gives
\[
 \Pr(Z_0=0,Z_1=0)\longrightarrow e^{-1}.
\]
\end{proof}

\begin{lemma}[Deterministic component bounds]
\label{lem:deterministic-component-bounds}
Let $n\geq1$, let $f:V_n\to\{0,1\}$, and put $g=f\oplus\pi$.
For $a\in\{0,1\}$, let $C_a$ be a largest component of
$Q_n[g^{-1}(a)]$, taking $C_a=\varnothing$ if $g^{-1}(a)$ is empty, and put
\[
 R=V_n\setminus(C_0\cup C_1).
\]
Let $Z_a$ count the vertices $x$ for which $g=a$ throughout $B_1(x)$.
Then the following statements hold.
\begin{enumerate}
 \item If $\BNN(f)>|R|$, then $\BNN(f)\geq\min\{|C_0|,|C_1|\}$.
 \item Suppose that every component of $Q_n[g^{-1}(0)]$ or
 $Q_n[g^{-1}(1)]$ other than $C_0,C_1$ is a single vertex. If $Z_a>0$,
 every representation contains all of $C_a$, so $\BNN(f)\geq|C_a|$.
 In particular, if $Z_0>0$ and $Z_1>0$, then
 $\BNN(f)\geq|C_0|+|C_1|$.
 \item If $Z_a=0$, then $\BNN(f)\leq M-|C_a|$.
\end{enumerate}
\end{lemma}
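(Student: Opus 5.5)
The plan rests on the parity-twist correspondence noted before Lemma~\ref{lem:omit-alternating-component}: the connected components of $H_f$ are exactly the components of $Q_n[g^{-1}(0)]$ and $Q_n[g^{-1}(1)]$. In particular $C_0$ and $C_1$ are (disjoint) components of $H_f$, and every other component of $H_f$ lies in $R$. I would also use the translation that, for an edge $xy$, we have $g(x)=g(y)$ exactly when $f(x)\neq f(y)$. All three parts then follow by combining Proposition~\ref{prop:edge-closure}, Lemma~\ref{lem:monochromatic-shortest-paths} and Lemma~\ref{lem:omit-alternating-component}.

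\emph{Part 1.} Take an optimal representation and let $S$ be its prototype set. By Proposition~\ref{prop:edge-closure}, $S$ is a union of components of $H_f$. Suppose $S$ contained neither $C_0$ nor $C_1$. Since each is a whole component, $S$ would then be disjoint from both, so $S\subseteq R$ and $\BNN(f)=|S|\leq|R|$, contrary to hypothesis. Hence $S$ contains some $C_a$, which gives $\BNN(f)\geq|C_a|\geq\min\{|C_0|,|C_1|\}$. When one of the $C_a$ is empty the bound is trivial anyway.

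\emph{Part 2.} Let $x$ be a vertex with $g\equiv a$ on $B_1(x)$.
\begin{enumerate}
 \item The component of $Q_n[g^{-1}(a)]$ containing $x$ contains $B_1(x)$, so it has at least $n+1\geq2$ vertices.
 \item Under the hypothesis that every component other than $C_0,C_1$ is a singleton, this component is therefore $C_a$. (It lies in $g^{-1}(a)$, so it cannot be $C_{1-a}$.)
 \item Every neighbour of $x$ has the same $g$-value but opposite parity, hence the opposite $f$-value.
 \item Lemma~\ref{lem:monochromatic-shortest-paths} then forces $x$ to be a prototype.
 \item Proposition~\ref{prop:edge-closure} forces its whole $H_f$-component $C_a$ into every representation, so $\BNN(f)\geq|C_a|$.
\end{enumerate}
If $Z_0,Z_1>0$, both $C_0$ and $C_1$ are forced, and since they are disjoint, $\BNN(f)\geq|C_0|+|C_1|$.

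\emph{Part 3.} If $C_a=\varnothing$ the bound $\BNN(f)\leq M$ is trivial, since the whole cube is a representation. Otherwise I would verify the hypothesis of Lemma~\ref{lem:omit-alternating-component} for $C=C_a$. Suppose some $x\in C_a$ had every neighbour in $C_a$. Then $g=a$ throughout $B_1(x)$, so $Z_a\geq1$, a contradiction. Hence every vertex of $C_a$ has a neighbour outside $C_a$. The lemma then shows that $V_n\setminus C_a$ represents $f$, giving $\BNN(f)\leq M-|C_a|$.

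The only step needing care is the identification in Part 2 of the component containing the all-$a$ ball with $C_a$. This is where the singleton hypothesis, and the fact that the component has size at least $n+1>1$, are used. The rest is direct bookkeeping with the earlier results.
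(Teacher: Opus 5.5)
Your proposal is correct and follows the paper's proof essentially step for step. Part 1 uses the all-or-nothing closure of Proposition~\ref{prop:edge-closure} on the $H_f$-components $C_0,C_1$. Part 2 places the centre of an all-$a$ ball in the non-singleton component $C_a$ and forces it via Lemma~\ref{lem:monochromatic-shortest-paths}. Part 3 checks the hypothesis of Lemma~\ref{lem:omit-alternating-component} using $Z_a=0$, and your treatment of the empty-component cases matches the paper's as well.
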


\begin{proof}
For Part~1, if either $C_a$ is empty, the conclusion is immediate.
Otherwise $C_0,C_1$ are components of $H_f$, so
Proposition~\ref{prop:edge-closure} makes each of them either wholly
present or wholly absent in a representation. A representation containing
neither would have all its prototypes in $R$ and would therefore have size
at most $|R|$, contrary to $\BNN(f)>|R|$. This proves Part~1.

Under the hypothesis of Part~2, a vertex counted by $Z_a$ has all its
neighbours in $g^{-1}(a)$, so it is not isolated and therefore belongs to
$C_a$. All its neighbours have the opposite $f$-value.
Lemma~\ref{lem:monochromatic-shortest-paths} makes it a compulsory
prototype, and Proposition~\ref{prop:edge-closure} then forces all of
$C_a$. If both counts are positive, the two disjoint components are
compulsory, proving Part~2.

If $Z_a=0$, every vertex of $C_a$ has a cube neighbour outside $C_a$:
otherwise all vertices in its closed radius-1 ball would have $g$-value $a$.
When $C_a$ is non-empty, Lemma~\ref{lem:omit-alternating-component} gives
a representation with prototype set $V_n\setminus C_a$, proving Part~3.
If $C_a$ is empty, the same bound follows by taking all cube points as
prototypes.
\end{proof}

\begin{proof}[Proof of Theorem~\ref{thm:uniform-random-limit}]
Adding parity turns bichromatic components into monochromatic components.
Theorem~\ref{thm:percolation-giant} supplies the component structure and
size estimates needed to apply
Lemma~\ref{lem:deterministic-component-bounds}; the counts $Z_0,Z_1$
then determine the two alternatives.

Put $g=F_n\oplus\pi$. Since parity is deterministic, the values $g(x)$
are again independent and each is equally likely to be $0$ or $1$.
For each $a\in\{0,1\}$, let $C_a$ be a largest component of
$Q_n[g^{-1}(a)]$, taking $C_a=\varnothing$ if $g^{-1}(a)$ is empty, and put
$R=V_n\setminus(C_0\cup C_1)$. Define
\[
 A_n=\{Z_0>0,\ Z_1>0\},\qquad
 X_n=\frac{\BNN(F_n)}M,\qquad
 Y_n=\frac12+\frac12\mathbf1_{A_n}.
\]

Fix $0<\eta\leq1/10$. Apply
Theorem~\ref{thm:percolation-giant} to the two induced subgraphs
$Q_n[g^{-1}(0)]$ and $Q_n[g^{-1}(1)]$ and take a union bound. With
probability tending to one, every component of $Q_n[g^{-1}(a)]$ other than
$C_a$ is a single vertex for each $a\in\{0,1\}$, and
\[
 \left(\frac12-\frac\eta2\right)M<|C_a|<
 \left(\frac12+\frac\eta2\right)M\quad(a=0,1).
\]
Since $C_0$ and $C_1$ are disjoint, this gives
\[
 |C_0|+|C_1|>(1-\eta)M,
\]
and hence
\[
 |R|=M-|C_0|-|C_1|<\eta M.
\]
Corollary~\ref{cor:almost-all-BNN}, applied with $Q=M$ and $c=1/5$, gives
\[
 \Pr\bigl(\BNN(F_n)\leq M/5\bigr)
 \leq \lfloor M/5\rfloor\,2^{-M\delta(1/5)}
 \longrightarrow0,
\]
since $H_2(1/5)+1/5<1$. Let $\mathcal G_n(\eta)$ be the event that the
displayed bounds hold, that every component of $Q_n[g^{-1}(a)]$ other than
$C_a$ is a single vertex for each $a\in\{0,1\}$, and that $\BNN(F_n)>M/5$.
A union bound gives $\Pr(\mathcal G_n(\eta))\to1$.

Recall that $A_n=\{Z_0>0,\ Z_1>0\}$. On $\mathcal G_n(\eta)$, we have
$\BNN(F_n)>|R|$, so Lemma~\ref{lem:deterministic-component-bounds} applies.
If $A_n$ occurs,
Part~2 and the all-vertices representation give
\[
 1-\eta<X_n\leq1.
\]
If $A_n$ does not occur, Parts~1 and~3 give
\[
 \frac12-\frac\eta2<X_n<\frac12+\frac\eta2.
\]
Thus $|X_n-Y_n|<\eta$ on $\mathcal G_n(\eta)$. Choosing
$0<\eta<\min\{\delta,1/10\}$ proves that, for every $\delta>0$,
\begin{equation}\label{eq:uniform-two-point-comparison}
 \Pr(|X_n-Y_n|>\delta)\longrightarrow0.
\end{equation}
Equivalently,
\[
 X_n-Y_n\xrightarrow{\Pr}0,
\]
where $\xrightarrow{\Pr}$ denotes convergence in probability.

By Lemma~\ref{lem:monochromatic-balls-poisson} and inclusion--exclusion,
\[
\begin{aligned}
 \Pr(A_n)
 &=1-\Pr(Z_0=0)-\Pr(Z_1=0)+\Pr(Z_0=0,Z_1=0)\\
 &\longrightarrow1-2e^{-1/2}+e^{-1}
 =(1-e^{-1/2})^2=p.
\end{aligned}
\]
It remains to translate the comparison with $Y_n$ into the two intervals in
the theorem. Fix $0<\varepsilon<1/2$, and choose
\[
 \delta=\frac12\min\left\{\varepsilon,\frac12-\varepsilon\right\}.
\]
Then $\delta<\varepsilon$ and $\delta<1/2-\varepsilon$. Hence, on the event
$\{|X_n-Y_n|\leq\delta\}$, if $A_n$ occurs then $Y_n=1$, and so
$X_n>1-\varepsilon$; while if $A_n$ does not occur then $Y_n=1/2$, and so
\[
 \frac12-\varepsilon<X_n<\frac12+\varepsilon.
\]
Since $\delta<1/2-\varepsilon$, when $A_n$ occurs we also have
$X_n\geq1-\delta>1/2+\varepsilon$, while when $A_n$ fails we have
$X_n\leq1/2+\delta<1-\varepsilon$. Thus, on
$\{|X_n-Y_n|\leq\delta\}$, the interval near $1/2$ in the theorem occurs
precisely when $A_n$ fails, and the interval near $1$ occurs precisely when
$A_n$ occurs.
The probability of the complementary event $\{|X_n-Y_n|>\delta\}$ tends to
zero by \eqref{eq:uniform-two-point-comparison}, proving the two probability
limits.

Finally, since $0\leq X_n,Y_n\leq1$, for every $\delta>0$,
\[
 \left|\mathbb E X_n-\frac12-\frac12\Pr(A_n)\right|
 \leq\mathbb E|X_n-Y_n|
 \leq\delta+\Pr(|X_n-Y_n|>\delta).
\]
Letting $n\to\infty$ and then $\delta\downarrow0$ gives
\[
 \mathbb E X_n\longrightarrow\frac12+\frac p2
 =1-e^{-1/2}+\frac12e^{-1}.
\]
\end{proof}

\section{Symmetric functions: approximation and typical complexity}
\label{sec:random-symmetric}

For a uniformly random Boolean function, Section~\ref{sec:uniform-random-limit}
shows that $\BNN(F_n)/2^n$ is asymptotically close either to $1/2$ or to
$1$. For symmetric functions, the same component constraints lead to a
deterministic approximation by weighted vertex covers on paths.

We first establish this approximation for every symmetric function. We
then apply it to independent changes of layer label, obtaining
concentration at $11/20$ in the uniform symmetric model and a formula for
arbitrary fixed change probability.

\subsection{A deterministic approximation for Boolean prototypes}

By Lemma~\ref{lem:monochromatic-shortest-paths}, every monochromatic component
contains a prototype. A layer whose label differs from both adjacent
layers consists entirely of compulsory prototypes, since each point in
that layer is an isolated monochromatic component; see also \cite{CG25}
for the middle-layer case.

Figure~\ref{fig:symmetric-layer-structure} illustrates the definitions below
for the function considered in Example~\ref{ex:symmetric-eight}.

Let $f$ be a symmetric Boolean function, and write $\xi_i$ for its
common value on the $i$-th Hamming layer, $0\leq i\leq n$. Thus
$f(x)=\xi_{|x|}$. For $1\leq i\leq n$, define
\[
 D_i=\mathbf1_{\{\xi_{i-1}\ne\xi_i\}},
\]
and, for bookkeeping, put $D_0=D_{n+1}=1$. A \emph{transition} is an
index $i\in\{1,\ldots,n\}$ with $D_i=1$, indicating a change between
layers $i-1$ and $i$. A \emph{constant run} is a maximal interval
$[u,v]\subseteq\{0,\ldots,n\}$ such that layers $u,\ldots,v$ have the same
label. Layers $u$ and $v$ are its boundary layers, and any layers strictly
between them are its interior layers. Two constant runs separated by a
transition are called neighbouring runs. Consecutive transitions force
complete layers of prototypes. Two-layer constant runs create a different
either--or constraint, which is recorded by the graph introduced below. To
distinguish these constraints,
partition the transitions into
\[
 \mathcal T=\{i\in[n]:D_i=1,\ D_{i-1}+D_{i+1}\geq1\},
 \qquad
 \mathcal I=\{i\in[n]:D_i=1,\ D_{i-1}=D_{i+1}=0\}.
\]
The endpoint convention places any transition at $1$ or $n$ in
$\mathcal T$. For $2\leq i\leq n-1$, a transition $i$ belongs to
$\mathcal T$ when
another transition occurs immediately before or immediately after it. Thus
two consecutive transitions enclose a layer whose label differs from the
labels of the layers on both sides. The transitions in $\mathcal I$ are
called isolated; an isolated transition is not immediately adjacent to
another change. Define
\[
 \mathcal H=\bigcup_{i\in\mathcal T}\{i-1,i\},
 \qquad h_n(f)=\sum_{w\in\mathcal H}\binom nw.
\]
The graph defined next records the choices created by two-layer constant
runs. If $i$ and $i+2$ are isolated transitions, then layers $i$ and $i+1$
form a two-layer constant run. A representation must then choose one of the
two adjacent transition pairs: either layers $i-1,i$, or layers $i+1,i+2$,
are forced. We encode this either--or requirement by an edge joining $i$ and
$i+2$.

Let $G_f$ be the graph with vertex set $\mathcal I$ in which two indices are
adjacent precisely when their difference is $2$. Its components are paths,
since edges only join indices differing by $2$ within the odd and even index
chains. A
vertex cover of $G_f$ is a set $S\subseteq\mathcal I$ containing at least one
of the two vertices of every edge. Give vertex $i$ the weight
\[
 t_i=\binom n{i-1}+\binom ni,
\]
and define
\[
 \tau_{\mathrm{wt}}(G_f)
 =
 \min\left\{
   \sum_{i\in S}t_i:
   S\text{ is a vertex cover of }G_f
 \right\}.
\]
For an edgeless graph this minimum is zero. An edge joining $i$ and $i+2$
may be covered at cost $t_i$ by selecting layers $i-1,i$, or at cost
$t_{i+2}$ by selecting layers $i+1,i+2$. Set
\begin{equation}\label{eq:symmetric-W}
 W_n(f)=h_n(f)+\tau_{\mathrm{wt}}(G_f).
\end{equation}

\begin{figure}[htbp]
\centering
\begin{tikzpicture}[x=10mm,y=9mm,font=\small,line width=1pt]
  \node[anchor=east] at (-0.7,3.2) {$G_f$};
  \draw (1.5,3.2) -- (3.5,3.2);
  \draw[dashed] (1.5,2.83) -- (1.5,2.25);
  \draw[dashed] (3.5,2.83) -- (3.5,2.25);
  \node[circle,draw,fill=black!18,minimum size=6mm,inner sep=0pt]
    at (1.5,3.2) {$2$};
  \node[circle,draw,fill=white,minimum size=6mm,inner sep=0pt]
    at (3.5,3.2) {$4$};
  \node at (1.5,3.92) {$t_2=36$};
  \node at (3.5,3.92) {$t_4=126$};
  \node[anchor=west,align=left] at (5.15,3.2)
    {$\mathcal I=\{2,4\}$\\[2pt]$\mathcal T=\{7,8\}$};
  \node[anchor=east] at (-0.7,1.92) {boundary $i$};
  \node[anchor=east] at (-0.7,1.02) {change $D_i$};
  \foreach \i/\bit in {1/0,2/1,3/0,4/1,5/0,6/0,7/1,8/1} {
    \node at (\i-0.5,1.92) {$\i$};
    \node at (\i-0.5,1.02) {$\bit$};
  }
  \foreach \i in {2,4,7,8} {
    \draw[dashed] (\i-0.5,0.65) -- (\i-0.5,-0.36);
  }
  \node[anchor=east] at (-0.7,0.1) {label $\xi_w$};
  \foreach \w/\val in {0/0,1/0,2/1,3/1,4/0,5/0,6/0,7/1,8/0} {
    \draw[fill=white] (\w-0.35,-0.28) rectangle (\w+0.35,0.48);
  }
  \foreach \w in {1,2} {
    \draw[fill=black!18] (\w-0.35,-0.28) rectangle (\w+0.35,0.48);
  }
  \foreach \w in {6,7,8} {
    \draw[line width=1.8pt] (\w-0.35,-0.28) rectangle (\w+0.35,0.48);
  }
  \foreach \w/\val in {0/0,1/0,2/1,3/1,4/0,5/0,6/0,7/1,8/0} {
    \node at (\w,0.1) {$\val$};
    \node at (\w,-0.72) {$\w$};
  }
  \node[anchor=east] at (-0.7,-0.72) {layer $w$};
  \foreach \u/\v in {0/1,2/3,4/6,7/7,8/8} {
    \draw[decorate,decoration={brace,mirror,amplitude=3pt}]
      (\u-0.35,-1.09) -- (\v+0.35,-1.09);
  }
  \node[anchor=east] at (-0.7,-1.2) {constant runs};
  \draw[fill=black!18] (0,-2.08) rectangle (0.35,-1.7);
  \node[anchor=west] at (0.5,-1.89) {chosen by the cover};
  \draw[line width=1.8pt] (4.6,-2.08) rectangle (4.95,-1.7);
  \node[anchor=west] at (5.1,-1.89) {compulsory layer};
\end{tikzpicture}
\caption{Layer structure for the function in Example~\ref{ex:symmetric-eight}.
Boundary $i$ lies between layers $i-1$ and $i$; it is a transition when
$D_i=1$. The bookkeeping values $D_0=D_9=1$ are not shown.
The isolated transitions $2$ and $4$ form the single edge of $G_f$.
A minimum-weight vertex cover selects $2$ (shaded), choosing layers $1,2$.
The consecutive transitions $7,8$ force the complete layers
$\mathcal H=\{6,7,8\}$ (heavy outlines). Braces mark the constant runs.
These complete layers give the lower bound $W_8(f)=37+36=73$;
the example shows why another $20$ prototypes are needed.}
\label{fig:symmetric-layer-structure}
\end{figure}

\begin{lemma}\label{lem:symmetric-full-layers}
In every Boolean nearest-neighbour representation of $f$, the following
statements hold.
\begin{enumerate}
 \item If $i$ is a transition and some point in layer $i-1$ or layer $i$ is
 a prototype, then every point in both layers $i-1$ and $i$ is a prototype.
 \item For every $i\in\mathcal T$, every point in layers $i-1$ and $i$ is a
 prototype.
 \item For every edge $\{i,i+2\}$ of $G_f$, either every point in layers
 $i-1$ and $i$ is a prototype, or every point in layers $i+1$ and $i+2$ is
 a prototype. Equivalently, the set of $j\in\mathcal I$ for which every point
 in layers $j-1$ and $j$ is a prototype is a vertex cover of $G_f$.
\end{enumerate}
Moreover, $\BNN(f)\geq W_n(f)$.
\end{lemma}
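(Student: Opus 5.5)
Part~1 follows from Proposition~\ref{prop:edge-closure} once I check that, for a transition $i$, layers $i-1$ and $i$ lie in a single connected component of $H_f$. Since $\xi_{i-1}\ne\xi_i$, every cube edge between layers $i-1$ and $i$ is an edge of $H_f$. The bipartite graph of cube edges between two adjacent layers is connected whenever $1\leq i\leq n$; for example, any two points of layer $i$ are joined by swap moves through layer $i-1$, and every point of layer $i-1$ has an upper neighbour in layer $i$. A prototype in either layer therefore forces the whole union.

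For Part~2, let $i\in\mathcal T$. First suppose $D_{i+1}=1$ with $i+1\leq n$, or $D_{i-1}=1$ with $i-1\geq1$. Then there is a layer $w\in\{i-1,i\}$ whose label differs from both of its adjacent layers, and every point of layer $w$ forms an isolated monochromatic component. Lemma~\ref{lem:monochromatic-shortest-paths} makes it a prototype, and Part~1 finishes. Next suppose the reason is the endpoint convention, say $i=1$. Then layer $0$ is the single point $0^n$, with all neighbours in layer $1$ of opposite label, so it is an isolated monochromatic component and hence a prototype. The case $i=n$ is symmetric with $1^n$. Part~1 then forces both layers.

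For Part~3, take an edge $\{i,i+2\}$, so layers $i$ and $i+1$ form a run of two layers with label $\xi_i$, and the neighbouring layers $i-1$ and $i+2$ carry the opposite label. The union of layers $i$ and $i+1$ is a connected monochromatic set, since the edges between two adjacent layers are connected. It is a full monochromatic component, because the adjacent layers carry the other label. Lemma~\ref{lem:monochromatic-shortest-paths} gives a prototype in layer $i$ or layer $i+1$. If the prototype is in layer $i$, Part~1 for transition $i$ forces layers $i-1,i$. If it is in layer $i+1$, Part~1 for transition $i+2$ forces layers $i+1,i+2$. The vertex-cover reformulation is immediate.

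For the bound $\BNN(f)\geq W_n(f)$, let $S$ be the set of $j\in\mathcal I$ with layers $j-1,j$ fully selected; by Part~3 it is a vertex cover. The prototype set contains every layer in $\mathcal H\cup\bigcup_{j\in S}\{j-1,j\}$. The step I expect to be the main obstacle is showing that these layer pairs do not overlap, so that the binomial weights add. For $j\in\mathcal I$ we have $D_{j-1}=D_{j+1}=0$, so neither $j-1$ nor $j+1$ is a transition. Any $i\in\mathcal T$ or $i'\in S$ with $i\ne j$ then satisfies $|i-j|\geq2$, so $\{i-1,i\}$ is disjoint from $\{j-1,j\}$. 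The layer sets are therefore disjoint, the count is at least $h_n(f)+\sum_{j\in S}t_j\geq h_n(f)+\tau_{\mathrm{wt}}(G_f)$, and this equals $W_n(f)$.
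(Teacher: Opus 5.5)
Your proposal is correct and follows the paper's proof essentially step for step. Part~1 uses connectivity of the two-layer graph together with Proposition~\ref{prop:edge-closure}; Part~2 uses isolated layers and the endpoint points $0^n$, $1^n$ via Lemma~\ref{lem:monochromatic-shortest-paths}; Part~3 uses the two-layer monochromatic component; and the bound uses disjointness of the selected layer pairs from each other and from $\mathcal H$. Your explicit check that $|i-j|\geq2$ for distinct transitions $i$ and $j\in\mathcal I$ simply spells out the paper's remark that isolated transitions are not consecutive.
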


\begin{proof}
Identify each cube point with its support. The cube edges between layers
$i-1$ and $i$ join an $(i-1)$-subset $A$ to an $i$-subset $B$ precisely when
$A\subset B$. The graph formed by these two layers and the edges between
them is connected. Indeed, two $(i-1)$-subsets can be joined by a sequence
of single-element exchanges, with each exchange passing through an
$i$-subset, and every $i$-subset is adjacent to an $(i-1)$-subset. For
$i=1$, every singleton is joined to the empty set; for $i=n$, every
$(n-1)$-subset is joined to $[n]$.

If $i$ is a transition and some point in layer $i-1$ or layer $i$ is a
prototype, then all edges between these two layers are bichromatic. Since the
graph formed by the two layers is connected, Proposition~\ref{prop:edge-closure}
propagates prototype membership throughout both layers. Hence every point in
both layers is a prototype. This proves Part~1.

Suppose that $D_i=D_{i+1}=1$ for some $1\leq i\leq n-1$. The label of layer
$i$ then differs from the labels of layers $i-1$ and $i+1$. Every point in
layer $i$ therefore has only oppositely labelled neighbours and must be a
prototype by Lemma~\ref{lem:monochromatic-shortest-paths}. Applying Part~1 to the
transitions $i$ and $i+1$ shows that every point in layers $i-1$, $i$, and
$i+1$ is a prototype. Equivalently, applying Part~1 to both transition
indices gives the three layers $i-1,i,i+1$. The case $D_{i-1}=D_i=1$ is
the same argument with the index shifted by one.

The first and last layers require the same conclusion. If $D_1=1$, every
neighbour of $0^n$ lies in layer $1$ and has the opposite label.
Lemma~\ref{lem:monochromatic-shortest-paths} therefore shows that $0^n$ must be a
prototype, and Part~1 then forces every point in layer $1$ to be a prototype.
Similarly, if $D_n=1$, then $1^n$ must be a prototype and Part~1 forces every
point in layer $n-1$ to be a prototype.

This proves Part~2, including the cases represented by the bookkeeping values
$D_0=D_{n+1}=1$.

Now consider an edge of $G_f$ joining $i$ and $i+2$. Layers $i$ and $i+1$
have the same label, while each of layers $i-1$ and $i+2$ has the opposite
label. Layers $i$ and $i+1$ together form a monochromatic connected component
and therefore contain a prototype by
Lemma~\ref{lem:monochromatic-shortest-paths}. If such a prototype lies in
layer $i$, Part~1 at transition $i$ forces layers $i-1,i$; if it
lies in layer $i+1$, Part~1 at transition $i+2$ forces layers $i+1,i+2$.
We represent this
either--or requirement by an edge joining $i$ and $i+2$. This proves Part~3.

Let
\[
 S=\{i\in\mathcal I:
       \text{every point in layers $i-1$ and $i$ is a prototype}\}.
\]
Part~3 shows that $S$ contains at least one of the two vertices of every edge
of $G_f$, so $S$ is a vertex cover. Hence
\[
 \sum_{i\in S}t_i\geq\tau_{\mathrm{wt}}(G_f).
\]
Since isolated transitions are not consecutive, the pairs of layers
$i-1,i$, for $i\in S$, are mutually disjoint. They are also disjoint from
the layers whose indices belong to $\mathcal H$. Part~2 therefore shows that
the number of prototypes in the representation is at least
\[
 h_n(f)+\sum_{i\in S}t_i
 \geq h_n(f)+\tau_{\mathrm{wt}}(G_f)
 =W_n(f).
\]
Since the representation was arbitrary, $\BNN(f)\geq W_n(f)$.
\end{proof}

\begin{example}[A covering gap on eight variables]
\label{ex:symmetric-eight}
Let $f$ have layer-label sequence
\[
 (\xi_0,\ldots,\xi_8)=(0,0,1,1,0,0,0,1,0).
\]
Its change sequence is $(D_1,\ldots,D_8)=(0,1,0,1,0,0,1,1)$, and
\[
 \mathcal T=\{7,8\},\qquad
 \mathcal I=\{2,4\},\qquad
 \mathcal H=\{6,7,8\}.
\]
The graph $G_f$ is the single edge $\{2,4\}$. Its weights and the
compulsory-layer contribution are
\[
 t_2=\binom81+\binom82=36,\qquad
 t_4=\binom83+\binom84=126,\qquad
 h_8(f)=\binom86+\binom87+\binom88=37.
\]
Thus $W_8(f)=37+36=73$: a minimum-weight cover chooses transition $2$
and hence layers $1,2$, in addition to the compulsory layers $6,7,8$.
The alternative choice of transition $4$ gives total layer cost
$37+126=163$.

In fact, $\BNN(f)$ turns out to be $93$. The extra contribution beyond
$W_8(f)=73$ comes from a covering requirement in layer $5$. The complete
layers $1,2,6,7,8$ almost give a representation: points in the unselected
layers $0,3,5$ have same-labelled prototypes at distance $1$ in the selected
layers $1,2,6$, respectively. The remaining problem is layer $4$. A point in
layer $4$ has label $0$, and its same-labelled neighbours lie in layer $5$,
which has not been taken wholesale. Thus we need enough selected layer-5
points so that every layer-4 point has a same-labelled prototype at distance
$1$. Identifying layer-4 points with $4$-subsets of $[8]$ and layer-5 points
with $5$-subsets, this is precisely an $(8,5,4)$-covering requirement.

Using the established value $\Cover(8,5,4)=20$ \cite{GS07}, choose a
minimum $(8,5,4)$-covering. For the upper bound, take all points in layers
$1,2,6,7,8$, together with the twenty layer-5 points corresponding to the
blocks of this covering. Points in layers $0$, $3$, and $5$ have
same-labelled prototypes at distance $1$ in layers $1$, $2$, and $6$,
respectively. If $x$ lies in layer $4$, the covering property gives a
selected layer-5 superset of $x$, again at distance $1$ and with the same
label. From each point considered, every oppositely labelled prototype is
at distance at least $2$, and every selected point classifies itself.
Thus these $73+20=93$ points represent $f$.

For the lower bound, let $S$ be an optimal Boolean prototype set. The
construction above shows that $|S|\leq93$. By
Lemma~\ref{lem:symmetric-full-layers}, $S$ contains all points in layers
$6,7,8$, and, since $G_f$ has the edge $\{2,4\}$, it also contains either
all points in layers $1,2$ or all points in layers $3,4$. The latter
alternative would already contribute
\[
 \binom83+\binom84+\binom86+\binom87+\binom88=163
\]
prototypes, contradicting $|S|\leq93$. Hence $S$ contains all points in
layers $1,2,6,7,8$, giving $73$ prototypes. Moreover, no point in layer $3$
or $4$ is a prototype: by Part~1 of Lemma~\ref{lem:symmetric-full-layers},
any such prototype would force the complete layers $3$ and $4$, contradicting
$|S|\leq93$.

As in the upper-bound construction, these complete layers already account for
points in layers $0,3,5$: they have same-labelled prototypes at distance $1$
in layers $1,2,6$, respectively. The remaining issue is layer $4$. Every
layer-4 point has a positive layer-2 subset at distance $2$, so it needs a
negative prototype at distance $1$. Such a prototype must be a layer-5
superset. The selected layer-5 supports therefore cover every $4$-subset of
$[8]$, requiring at least $\Cover(8,5,4)=20$ additional points. Hence
\[
 \BNN(f)=73+\Cover(8,5,4)=93.
\]
\end{example}

The example shows that $W_n(f)$ need not equal $\BNN(f)$ exactly: after the
forced complete layers have been chosen, some residual partial-layer covering
may still be needed. The next result shows that this residual cost is
negligible compared with $2^n$. Thus the weighted vertex-cover problem on
$G_f$ gives the asymptotic Boolean complexity of every symmetric function.

\begin{theorem}[Weighted vertex-cover approximation]
\label{thm:symmetric-path-approximation}
There is an absolute constant $K$ such that, for every $n\geq2$ and every
symmetric Boolean function $f$ on $\{0,1\}^n$,
\begin{equation}\label{eq:symmetric-path-approximation}
 W_n(f)\leq\BNN(f)
 \leq W_n(f)+K\frac{2^n\log_2 n}{n}.
\end{equation}
In particular, $W_n(f)$ determines $\BNN(f)$ up to an additive $o(2^n)$
error, uniformly over symmetric $f$.
\end{theorem}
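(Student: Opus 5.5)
The lower bound $W_n(f)\leq\BNN(f)$ is exactly the final assertion of Lemma~\ref{lem:symmetric-full-layers}, so only the upper bound needs work. I would prove it by an explicit construction modelled on Example~\ref{ex:symmetric-eight}. Start from the \emph{skeleton} $S_0$ consisting of every point in the layers of $\mathcal H$, together with every point in layers $i-1,i$ for $i$ in a minimum-weight vertex cover $S^\ast$ of $G_f$. By definition $|S_0|=W_n(f)$. Then add a residual set $R$ of partial-layer points, and show that $S_0\cup R$ represents $f$ with $|R|\leq K2^n\log_2n/n$.

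The classification criterion I would aim for is the one used in the example and in Lemma~\ref{lem:omit-alternating-component}. Every non-prototype point should have a same-labelled prototype at distance $1$. No point should have an oppositely labelled prototype at distance $1$ unless it is itself a prototype. To secure the second condition, every partial selection must be made in a layer both of whose neighbouring layers carry the same label, or whose cross-transition neighbour layer is already complete in $S_0$. By Proposition~\ref{prop:edge-closure}, a prototype adjacent across a transition forces that whole transition pair anyway. So the construction never places partial points in boundary layers of runs whose adjacent transition pair is unselected. The vertex-cover condition guarantees that every two-layer run already contains a complete selected layer, and every one-layer run is compulsory. It therefore remains to treat the runs $[u,v]$ of length at least three, together with any boundary layer left unselected. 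For each such run I would proceed layer by layer from the unselected boundary layers inward. A layer $w$ not yet served receives partial prototypes in an adjacent interior layer $w\pm1$ of the same run. Choose $w+1$ when $w\leq n/2$ and $w-1$ otherwise, and take the supports to be a covering design $\Cover(n,w+1,w)$, respectively its complement dual. Lemma~\ref{lem:standard-cover-estimates} gives cost at most $\binom nw(1+\ln(w+1))/(w+1)$. An interior layer can be left entirely unselected when both neighbours contain enough points. Otherwise, use the same covering trick in alternating layers, so that each layer of the run pays at most one covering cost.

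For the accounting I would split layers by weight. For $n/3\leq w\leq 2n/3$ the covering cost is $O(\binom nw\log n/n)$, and summing over $w$ gives $O(2^n\log_2n/n)$. For $w<n/3$ or $w>2n/3$, covering by adjacent layers is inefficient. Instead I would simply take such layers, and if necessary their transition partners, wholesale. The total size of all layers outside $[n/3,2n/3]$ is at most $2\cdot2^{nH_2(1/3)}=o(2^n/n)$, so this is absorbed into the error term. One must check that wholesale inclusion near the ends does not break the distance-$1$ criterion. It does not, because complete transition pairs are legitimate prototype sets, and complete layers serve their neighbours.

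The main obstacle I expect is the combinatorial bookkeeping inside long runs and at the interface between the \emph{cheap} middle region and the \emph{wholesale} extreme region. The difficulty is ensuring that partial layers are never adjacent to oppositely labelled prototypes while every layer of a run is served at distance $1$ by a single covering. If the distance-$1$ scheme proves awkward for interior layers, I would fall back on a monochromatic-path argument. Choose one partial covering layer per run, and verify via the triangle-inequality argument of Lemma~\ref{lem:monochromatic-shortest-paths} that the nearest prototypes of interior points remain inside the run. This works because the opposite-label prototypes all lie beyond the run's boundary layers.
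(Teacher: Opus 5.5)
\textbf{There is a genuine gap: your distance-$1$ criterion cannot be met cheaply, and your scheme is undefined on three-layer runs.} Take a run $[u,u+2]$ in the central window whose two bordering transitions $u$ and $u+3$ are isolated and not in the chosen vertex cover.

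By Proposition~\ref{prop:edge-closure}, the boundary layers $u$ and $u+2$ can contain prototypes only if a whole transition pair is taken. So all your partial points go into layer $u+1$. A non-prototype point of layer $u+1$ then has no prototype at distance $1$ at all, and there is no ``adjacent interior layer'' from which to serve it. Your criterion therefore forces either all of layer $u+1$ or a full transition pair, at cost $\Theta(\binom nu)$ for each such run.

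For $f_n(x)=\lfloor\wt x/3\rfloor\bmod2$ from Proposition~\ref{prop:three-layer-gap}, every central run is of this type, while $W_n(f_n)\leq n+1$. Your construction therefore has size about $2^n/3$, not $W_n+O(2^n\log_2n/n)$. Such runs also carry a positive fraction of the binomial mass for a typical random symmetric function.

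Read literally, the alternating-layer variant has the same defect. When both layers adjacent to a partially selected layer are left empty, its unselected points get nothing at distance $1$.

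The fallback does not repair this. Lemma~\ref{lem:monochromatic-shortest-paths} is a necessary condition satisfied by every representation, so it cannot certify that a construction is one. Moreover, a single partial layer per run cannot serve both ends of a long run when each end lies within distance $2$ of an opposite-labelled complete layer.

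\textbf{The missing idea is that a point in an interior layer may be served at distance $2$.} Suppose $x$ lies in an interior layer $w$ and has no same-labelled prototype at distance $1$. Any opposite-labelled prototype within distance $2$ would lie in layer $w\pm2$, a boundary layer of the neighbouring run. Since partial prototypes sit only in interior layers, it would belong to a complete transition pair. Closure across transitions would then make layer $w\pm1$ complete, giving $x$ a same-labelled neighbour after all. Hence such an $x$ meets opposite labels only at distance at least $3$.

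With this relaxation your explicit-design route goes through:
\begin{itemize}
\item A family of $(u+1)$-sets covering every $u$-set also reaches every point of layer $u+1$ within distance $2$, since every $(u+1)$-set shares a $u$-subset with some block.
\item A second family of $(u+1)$-sets, such that every $(u+2)$-set contains one, serves layer $u+2$.
\item Both families cost $O(\binom nu\log_2 n/n)$ in the window $[n/3,2n/3]$, by Lemma~\ref{lem:standard-cover-estimates}.
\end{itemize}

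The paper proves exactly this separation, namely distance at least $2$ in general and at least $3$ for uncovered interior points. It does not use layer-by-layer designs. Instead, it gives each uncovered point a candidate set of radius $1$ on boundary layers or radius $2$ on interior layers, of size at least $\lfloor n/4\rfloor$. It then chooses all partial prototypes at once via Lov\'asz's fractional-versus-integral set-cover bound. Once repaired, your route replaces that step with explicit covering designs and a wider wholesale tail.
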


\begin{proof}
Lemma~\ref{lem:symmetric-full-layers} gives the lower bound. We prove the
upper bound first for sufficiently large $n$. The construction has four
steps. We select a set $B$ of base prototypes, formed from complete layers,
with $|B|\leq W_n(f)+o(2^n/n)$. We then supply each uncovered point with
at least $\lfloor n/4\rfloor$ same-labelled candidate prototypes nearby,
verify separation from all opposite labels, and use a set-covering argument
to choose a small collection of candidates.

We call a set $L$ of layer indices \emph{closed across transitions} if,
whenever $D_i=1$ and one of $i-1$ and $i$ belongs to $L$, both indices
belong to $L$.

\medskip\noindent
\emph{Claim 1.} There is a set $B$ of labelled cube points, formed from
complete layers whose indices are closed across transitions, such that
\[
 |B|\leq W_n(f)+o(2^n/n).
\]

Choose a minimum-weight vertex cover $S$ of $G_f$, and put
\[
 \mathcal L_0
 =\mathcal H\cup\bigcup_{i\in S}\{i-1,i\}.
\]
The complete layers indexed by $\mathcal L_0$ contain exactly $W_n(f)$
points. Put $a=\lfloor n/4\rfloor$. We include the tail layers wholesale
because their total size is negligible on the $2^n/n$ scale. This will later
ensure that every point still requiring an additional prototype lies in a
central layer, where linearly many nearby same-labelled candidates are
available. Let
$L_B\subseteq\{0,\ldots,n\}$ be
the smallest set containing $\mathcal L_0$ and all indices $w<a$ or
$w>n-a$ that is closed across transitions.
Define
\[
 B=\{x\in\{0,1\}^n:\wt{x}\in L_B\},
\]
with every point carrying its $f$-label.

The set $\mathcal L_0$ is already closed across transitions. Every layer in
$\mathcal L_0$ was selected as one of the two layers adjacent to some
transition $i\in\mathcal T\cup S$. Suppose that this layer is also adjacent
to a distinct transition $j$. Then $|i-j|=1$. This is impossible if
$i\in S$, since $S\subseteq\mathcal I$ and transitions in $\mathcal I$ are
isolated. If $i\in\mathcal T$, then $j\in\mathcal T$ as well, so both
layers adjacent to $j$ already belong to $\mathcal H$.

At the lower end, closing the selected tail layers across transitions can
add only layer $a$, through the transition between layers $a-1$ and $a$.
If the selection could then extend to layer $a+1$, the transitions $a$ and
$a+1$ would both occur. They would therefore belong to $\mathcal T$, and
layer $a+1$ would already be included through $\mathcal H$. Thus the
closure does not extend any further. The same argument at the upper end
shows that it can add only layer $n-a$. Hence
\[
 |B|\leq W_n(f)+T_n,
 \qquad
 T_n=\sum_{w\leq a\ \text{or}\ w\geq n-a}\binom nw.
\]
Since the binomial coefficients increase up to the middle layer,
\[
 T_n\leq2(a+1)\binom na
 =2^{H_2(1/4)n+O(\log_2 n)}.
\]
Put $\delta=1-H_2(1/4)>0$. Then
\[
 \frac{T_n}{2^n/n}
 \leq n2^{-\delta n+O(\log_2 n)}
 \longrightarrow0,
\]
since the $O(\log_2 n)$ term contributes only a polynomial factor. Hence
\begin{equation}\label{eq:symmetric-tail-cost}
 T_n=o(2^n/n).
\end{equation}
This proves Claim~1.

Let $V_{\mathrm{int}}$ be the set of points in interior layers whose indices
do not belong to $L_B$. We say that $x$ is \emph{uncovered by $B$} if
$x\notin B$ and no same-labelled point of $B$ is at distance 1 from $x$.

\medskip\noindent
\emph{Claim 2.} Every point $x$ uncovered by $B$ has a candidate set
$C_x\subseteq V_{\mathrm{int}}$ of size at least $a$, consisting of points
with the same label as $x$, at distance at most one when $x$ lies in a
boundary layer and at most two when it lies in an interior layer.

First note that every point uncovered by $B$ has weight in
$[a+1,n-a-1]$. At weight $a$, layer $a-1$ belongs to $L_B$. If layers
$a-1$ and $a$ have the
same label, it gives every point in layer $a$ a same-labelled point of $B$
at distance 1; if their labels differ, closure across transitions puts
layer $a$ in $L_B$. The argument at weight $n-a$ is symmetric, and all
layers outside $[a,n-a]$ belong to $L_B$.

For a point $x$ uncovered by $B$ and of weight $w$, put
\[
 r_x=
 \begin{cases}
  1,&\text{if layer $w$ is a boundary layer of its run},\\
  2,&\text{if layer $w$ is interior}.
 \end{cases}
\]
For each uncovered point $x$, define the candidate set
\[
 C_x=\{y\in V_{\mathrm{int}}:d_H(x,y)\leq r_x\}.
\]
We first show that every point of $C_x$ has the same label as $x$.

Recall that points of $V_{\mathrm{int}}$ lie only in interior layers of
constant runs. If layer $w$ is a boundary layer of its run and
$y\in V_{\mathrm{int}}$ has $d_H(x,y)\leq1$, then $y$ lies in layer $w-1$
or $w+1$. The adjacent layer across the transition is a boundary layer of
the neighbouring run, so it contains no point of $V_{\mathrm{int}}$. Hence
$y$ lies on the inward side of the same run as $x$, and $f(y)=f(x)$.

Suppose instead that layer $w$ is interior to its run, and let
$y\in V_{\mathrm{int}}$ satisfy $d_H(x,y)\leq2$. Then
$\lvert\wt{y}-w\rvert\leq2$. The layers $w-1,w,w+1$ lie in the same
constant run as $w$, and so have label $f(x)$. If layer $w-2$ or $w+2$ has
the opposite label, then it is a boundary layer of a neighbouring run, and
hence contains no point of $V_{\mathrm{int}}$. Thus $y$ cannot lie in an
oppositely labelled layer, and $f(y)=f(x)$.

We now show that $C_x$ is large. Suppose that $x$ is uncovered and that
layer $w$ is a boundary layer of its run. The run must have at least three
layers, as we now show. A one-layer run in the central range lies between two consecutive
transitions, so its sole layer belongs to $\mathcal H\subseteq L_B$. Now
suppose that the run containing $x$ consists of layers $u$ and $u+1$.
Since $a+1\leq w\leq n-a-1$ and $a\geq1$, both bordering transitions
$u$ and $u+2$ lie in $[n]$. If either transition belongs to
$\mathcal T$, then the adjacent layer of the run belongs to
$\mathcal H\subseteq L_B$. Otherwise $u$ and $u+2$ form an edge of $G_f$,
so the vertex cover $S$ contains at least one of them. If $u\in S$, then
layers $u-1$ and $u$ enter $L_B$; if $u+2\in S$, then layers $u+1$ and
$u+2$ enter $L_B$. Thus in either case one of the two layers in the run
belongs to $L_B$. Every point in a run of one or two layers therefore
either belongs to $B$ or has a same-labelled point of $B$ at distance 1,
and hence cannot be uncovered.

If layer $w$ is the first layer of its run, $C_x$ contains the $n-w$
neighbours of $x$ in layer $w+1$; if it is the last layer, $C_x$ contains
the $w$ neighbours in layer $w-1$. The adjacent layer is interior and does
not belong to $L_B$, since $x$ is uncovered by $B$. So we have
\[
 |C_x|\geq\min\{w,n-w\}\geq a+1.
\]
If layer $w$ is interior, we count only a subset of $C_x$ that is always
present: the point $x$ itself and all points of weight $w$ at distance 2
from it. Points in the neighbouring layers may also belong to $C_x$, but
they are not needed for the lower bound. So we have
\[
 |C_x|\geq1+w(n-w)\geq a.
\]

This proves Claim~2.

\medskip\noindent
\emph{Claim 3.} Let $A\subseteq V_{\mathrm{int}}$ be arbitrary, with every
point of $B\cup A$ carrying its $f$-label. If $x\notin B$ and
$y\in B\cup A$ have opposite labels, then $d_H(x,y)\geq2$. If, in
addition, $x$ is uncovered by $B$ and lies in an interior layer, then
$d_H(x,y)\geq3$.

Fix $x\notin B$, and suppose that $y\in B\cup A$ is at distance 1
from $x$. We claim that $y$ has the same label as $x$.

Suppose otherwise. Since adjacent cube points lie in consecutive layers,
there is a transition between the layer of $x$ and the layer of $y$. In
particular, the layer of $y$ is a boundary layer of its run. Therefore
$y\notin A$, because $A\subseteq V_{\mathrm{int}}$. On the other hand, if
$y\in B$, then the layer of $y$ belongs to $L_B$. Since $L_B$ is closed
across transitions, the layer of $x$ also belongs to $L_B$, which would
imply $x\in B$. Both alternatives are impossible. Hence $y$ has the same
label as $x$.

Now suppose that $x$ is uncovered by $B$ and lies in an interior layer.
Every point of $B\cup A$ at distance 2 from $x$ also has the same label
as $x$. Indeed, an oppositely labelled such point has weight $w-2$ or
$w+2$ and lies in a boundary layer of the neighbouring run, so it cannot
belong to $A$. If it belonged to $B$, closure across the intervening
transition would put layer $w-1$ or $w+1$ in $L_B$, giving $x$ a
same-labelled point of $B$ at distance 1, contrary to its being
uncovered.

This proves Claim~3. In particular, for every uncovered $x$, each point
in $C_x$ is strictly closer to $x$ than every oppositely labelled point of
$B\cup A$.

\medskip\noindent
\emph{Claim 4.} There is a set $A\subseteq V_{\mathrm{int}}$ of size
$O(2^n\log_2 n/n)$ such that $B\cup A$, with the inherited labels,
represents $f$.

Let $\mathcal U$ be the set of points uncovered by $B$. Each
$x\in\mathcal U$ needs at least one additional prototype chosen from
$C_x$. Choosing such a prototype separately for each $x$ could use as many
as $|\mathcal U|$ additional prototypes. Instead, we exploit the fact that
a single point $y\in V_{\mathrm{int}}$ may belong to $C_x$ for several
different points $x$ and, if selected, can classify all of them correctly
by Claims~2 and~3. We therefore seek a small set
$A\subseteq V_{\mathrm{int}}$ such that every $C_x$, for
$x\in\mathcal U$, contains at least one member of $A$.

We formulate this as a finite set-covering problem. For each
$y\in V_{\mathrm{int}}$, let
\[
 \mathcal S_y=\{x\in\mathcal U:y\in C_x\}.
\]
Thus $\mathcal S_y$ records the uncovered points that Claims~2 and~3
guarantee are classified correctly when $y$ is selected. Choosing the prototype set
$A$ corresponds to choosing the associated family
\[
 \{\mathcal S_y:y\in A\}
\]
and the requirement that every $x\in\mathcal U$ have a selected prototype
in $C_x$ is exactly the requirement that this family cover $\mathcal U$:
\[
 \mathcal U=\bigcup_{y\in A}\mathcal S_y.
\]

Recall that a fractional set cover assigns a non-negative weight to each
available set so that, for every point to be covered, the total weight of
the sets containing that point is at least one. Assign weight $1/a$ to
each $\mathcal S_y$. For $x\in\mathcal U$, the sets containing $x$ are
exactly those indexed by the points $y\in C_x$: equivalently,
$x\in\mathcal S_y$ if and only if $y\in C_x$. Claim~2 gives
$|C_x|\geq a$, and hence the total weight of the sets containing $x$ is
\[
 \sum_{y:\,x\in\mathcal S_y}\frac1a
 =\frac{|C_x|}{a}
 \geq1.
\]
These weights therefore form a fractional set cover. If
$\tau^*$ denotes the minimum total weight of a fractional cover of this
set-cover instance, then
\[
 \tau^*\leq\sum_{y\in V_{\mathrm{int}}}\frac1a
 =\frac{|V_{\mathrm{int}}|}{a}
 \leq\frac{2^n}{a}.
\]

We also need an upper bound on the number of points covered by any one
set. If $x\in\mathcal S_y$, then $y\in C_x$, so
$d_H(x,y)\leq r_x\leq2$. For fixed $y$, there is one cube point at
distance 0 from $y$, there are $n$ at distance 1, and there are
$\binom n2$ at distance 2. Consequently,
\[
 |\mathcal S_y|\leq\Delta,
 \qquad
 \Delta=1+n+\binom n2.
\]
In its set-cover form, the integral-versus-fractional bound of
Lov\'asz~\cite{Lovasz75} states that, if
every available set has cardinality at most $\Delta$, there is an integral
set cover of cardinality at most $(1+\ln\Delta)\tau^*$. Hence there is an
index set $A\subseteq V_{\mathrm{int}}$ such that
$\{\mathcal S_y:y\in A\}$ covers $\mathcal U$ and
\[
 |A|\leq\frac{2^n}{a}(1+\ln\Delta)
 =O(2^n\log_2 n/n).
\]

Give every point of $B\cup A$ its $f$-label. These points classify
themselves at distance 0. If $x\notin B\cup A$ is not uncovered by $B$,
it has a same-labelled prototype in $B$ at distance 1, while Claim~3
places every oppositely labelled prototype at distance at least 2. If
$x$ is uncovered by $B$, the choice of $A$ gives a same-labelled prototype
in $C_x$ at distance at most $r_x$. Claim~3 shows that every oppositely
labelled prototype is strictly further away. Hence $B\cup A$ represents
$f$, proving Claim~4.

Finally, Claims~1 and~4 give
\[
 |B\cup A|
 \leq W_n(f)+T_n+O(2^n\log_2 n/n)
 =W_n(f)+O(2^n\log_2 n/n),
\]
where the last equality follows from \eqref{eq:symmetric-tail-cost}. This
proves the upper bound, with some absolute constant $K$, for all sufficiently
large $n$. Enlarging $K$, if necessary, absorbs the finitely many remaining
cases $n\geq2$, since $0\leq\BNN(f)-W_n(f)\leq2^n$ and
$\log_2 n/n>0$.
\end{proof}

For Example~\ref{ex:symmetric-eight}, the prescribed base has
$L_B=\{0,1,2,6,7,8\}$ and 74 points. Only layer $4$ is uncovered,
with $r_x=1$ and layer-5 supersets as candidates, so a minimum completion
using the candidate sets has 94 prototypes. Removing the redundant tail
prototype $0^8$ gives the optimal 93-point representation.

The error term in Theorem~\ref{thm:symmetric-path-approximation} cannot
be replaced uniformly by $o(2^n/n)$. The following family shows this even
when $W_n$ is at most linear.

\begin{proposition}[A lower bound for the approximation gap]
\label{prop:three-layer-gap}
For $n\geq3$, define $f_n:\cube\to\{0,1\}$ by
$f_n(x)=\lfloor\wt{x}/3\rfloor\bmod2$. Then $W_n(f_n)=n+1$ when
$3\mid n$, and $W_n(f_n)=0$ otherwise. The approximation gap satisfies
\[
 \BNN(f_n)-W_n(f_n)
 \geq\left(\frac23-o(1)\right)\frac{2^n}{n}.
\]
\end{proposition}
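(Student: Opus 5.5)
The proof has two parts: computing $W_n(f_n)$ from the definitions, and then bounding $\BNN(f_n)$ below run by run using Lemma~\ref{lem:monochromatic-shortest-paths} and Part~1 of Lemma~\ref{lem:symmetric-full-layers}.

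\emph{Computing $W_n(f_n)$.} The label sequence consists of constant runs $\{3j,3j+1,3j+2\}$, with possibly a shorter final run. Hence $D_i=1$ exactly when $i\in[n]$ and $3\mid i$. Since consecutive transitions differ by $3$, no two transitions are adjacent, and $D_1=0$.
\begin{itemize}
\item If $3\mid n$, the bookkeeping value $D_{n+1}=1$ puts $n$ in $\mathcal T$, so $\mathcal H=\{n-1,n\}$ and $h_n=\binom n{n-1}+\binom nn=n+1$.
\item Otherwise every transition is isolated and $\mathcal T=\varnothing$, so $h_n=0$.
\end{itemize}
In both cases no two transitions differ by $2$, so $G_f$ is edgeless and $\tau_{\mathrm{wt}}=0$. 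This gives the stated values of $W_n(f_n)$. Since $W_n\leq n+1=o(2^n/n)$, it suffices to prove $\BNN(f_n)\geq(2/3-o(1))2^n/n$.

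\emph{Local dichotomy for one run.} Fix a representation $S$ and a run $R_j=\{3j,3j+1,3j+2\}$ with $3\leq 3j$ and $3j+3\leq n$, so that both bordering transitions $3j$ and $3j+3$ exist.
\begin{itemize}
\item[(a)] If $S$ meets layer $3j$ or layer $3j+2$, then Part~1 of Lemma~\ref{lem:symmetric-full-layers} forces that entire layer into $S$. This gives at least $\min\{\binom n{3j},\binom n{3j+2}\}$ prototypes inside $R_j$.
\item[(b)] Otherwise, take $y$ in layer $3j$ or $3j+2$ and a closest prototype $p$. By Lemma~\ref{lem:monochromatic-shortest-paths}, every shortest path from $y$ to $p$ is monochromatic. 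Both layers bordering $R_j$ carry the opposite label, so the path never leaves the layers of $R_j$. Thus $p$ lies in layer $3j+1$, and a shortest path from $y$ into layer $3j+1$ has length $1$, so $p$ is a neighbour of $y$. Hence $S\cap\text{layer }(3j+1)$ dominates layers $3j$ and $3j+2$. A point of layer $3j+1$ has $3j+1$ lower and $n-3j-1$ upper neighbours, so
\[
 |S\cap R_j|\geq\max\left\{\frac{\binom n{3j}}{3j+1},\frac{\binom n{3j+2}}{n-3j-1}\right\}.
\]
\end{itemize}
The sets $S\cap R_j$ are disjoint for distinct $j$, so these per-run bounds add.

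\emph{Summing over central runs.} Restrict to runs with $|3j-n/2|\leq n^{2/3}$.
\begin{itemize}
\item Case (a) yields $\binom n{3j}=2^{n}n^{-1/2}e^{O(n^{1/3})}$. This can be crude, since all that is needed is that it dominates the case-(b) bound of order $2^{n}/n$ for large $n$.
\item In case (b), on this window both denominators are $(1/2+o(1))n$, and $\binom n{3j}$, $\binom n{3j+1}$, $\binom n{3j+2}$ differ by factors $1+o(1)$. The bound is therefore $(2+o(1))\binom n{3j+1}/n$, uniformly in these $j$.
\end{itemize}
Summing over the window gives $(2+o(1))n^{-1}\sum\binom n{3j+1}$. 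The main technical point is to show that this window sum is $(1/3-o(1))2^n$. The total over all $j$ is $(1+o(1))2^n/3$, by the roots-of-unity filter or by local-limit equidistribution of weights modulo $3$. The tails outside the window are $o(2^n)$ by Chernoff. This yields $\BNN(f_n)\geq(2/3-o(1))2^n/n$, and subtracting $W_n\leq n+1$ gives the claimed gap bound.
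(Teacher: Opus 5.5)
Your proof is correct in substance and follows the paper's route: the same computation of $W_n(f_n)$, the same per-run dichotomy (a prototype in layer $3j$ or $3j+2$ forces that whole layer by Part~1 of Lemma~\ref{lem:symmetric-full-layers}; otherwise Lemma~\ref{lem:monochromatic-shortest-paths} makes the middle-layer prototypes dominate both outer layers), and summation over runs within $n^{2/3}$ of $n/2$, with your roots-of-unity count of $\sum_j\binom n{3j+1}$ replacing the paper's Chebyshev bound on whole runs combined with $\min\{A_u,B_u\}=(1/3+o(1))|R_u|$. Two justifications need repair, though neither affects the conclusion: $p$ is adjacent to $y$ because, if $\supp(y)\not\subseteq\supp(p)$ (respectively $\supp(p)\not\subseteq\supp(y)$), some shortest path to $p$ takes its first step into layer $3j-1$ (respectively $3j+3$), not because layer $3j+1$ is at distance $1$ from $y$; and case (a) does not dominate a quantity of order $2^n/n$ at the edge of the window, where $\binom n{3j}=2^n n^{-1/2}e^{-(2+o(1))n^{1/3}}=o(2^n/n)$, but it does dominate the per-run case-(b) bound, since $\min\{\binom n{3j},\binom n{3j+2}\}=(1+o(1))\binom n{3j+1}\geq(2+o(1))\binom n{3j+1}/n$, which is all that is needed (the paper obtains this comparison without asymptotics from $\min\{A,B,(A+B)/n\}\geq(2/n)\min\{A,B\}$).
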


\begin{proof}
The layer labels are $000111000111\cdots$, so $D_i=1$ exactly when
$3\mid i$, for $1\leq i\leq n$. No two transition indices differ by two,
so $G_{f_n}$ is edgeless and $\tau_{\mathrm{wt}}(G_{f_n})=0$.
No two transitions are consecutive, and $D_1=0$. Thus the
endpoint convention $D_{n+1}=1$ puts $n$ in $\mathcal T$ exactly when
$3\mid n$, and no other index belongs to $\mathcal T$. Consequently,
$\mathcal H=\{n-1,n\}$ when $3\mid n$, and $\mathcal H=\varnothing$
otherwise. Since $W_n(f_n)=h_n(f_n)$, this gives the stated values of $W_n$.

Let $S$ be any Boolean prototype set representing $f_n$, and write $L_w$
for the $w$th Hamming layer. Consider an internal complete run
\[
 R_u=L_u\cup L_{u+1}\cup L_{u+2},
 \qquad u\equiv0\pmod3,\qquad 3\leq u\leq n-3,
\]
and put $A_u=\binom nu$ and $B_u=\binom n{u+2}$. If $L_u$ or $L_{u+2}$
meets $S$, Part~1 of Lemma~\ref{lem:symmetric-full-layers}, applied at
transition $u$ or $u+3$, respectively, forces the corresponding boundary
layer to lie in $S$. Counting only prototypes inside $R_u$, not those
forced in a neighbouring run, gives $|S\cap R_u|\geq\min\{A_u,B_u\}$.

Suppose instead that neither boundary layer contains a prototype. Since
layers $u-1$ and $u+3$ have the opposite label, the monochromatic component
of a boundary point $x$ is contained in $R_u$. By
Lemma~\ref{lem:monochromatic-shortest-paths}, a closest prototype $p$ to $x$
lies in this component, and hence in $L_{u+1}$.
If $x\in L_u$ and $\supp(x)\not\subseteq\supp(p)$, a shortest path to $p$
can begin by deleting a coordinate in $\supp(x)\setminus\supp(p)$,
entering the oppositely labelled layer $u-1$. This contradicts the lemma's
assertion about every shortest path. Thus $\supp(x)\subseteq\supp(p)$,
and $d_H(x,p)=1$. For $x\in L_{u+2}$, if
$\supp(p)\not\subseteq\supp(x)$, choose a coordinate in
$\supp(p)\setminus\supp(x)$. There is a shortest path to $p$ that adds this
coordinate first, and hence enters layer $u+3$. Thus
$\supp(p)\subseteq\supp(x)$ and again $d_H(x,p)=1$.
The selected middle-layer prototypes must therefore cover all boundary
points by adjacency. Each has $(u+1)+(n-u-1)=n$ neighbours in the two
boundary layers, so $n|S\cap L_{u+1}|\geq A_u+B_u$.
Combining the two cases gives
\[
 |S\cap R_u|
 \geq\min\left\{A_u,B_u,\frac{A_u+B_u}{n}\right\}
 \geq\frac2n\min\{A_u,B_u\}.
\]

Put $h=n^{2/3}$. Then $n/h^2\to0$ and $h/n\to0$, which will give a
negligible binomial tail and uniform comparison of adjacent layer sizes,
respectively, as we now show.

Let $\mathcal U_n$ be the set of multiples $u$ of $3$ such that
$3\leq u\leq n-3$ and $|u-n/2|\leq h+2$. Each $u\in\mathcal U_n$ is the
starting index of the three-layer run
\[
 R_u=L_u\cup L_{u+1}\cup L_{u+2}.
\]
Every layer $L_w$ with $|w-n/2|\leq h$ lies in one of these runs. Indeed, let
$u=3\lfloor w/3\rfloor$. Then $L_w\subseteq R_u$ and
$u\in\{w,w-1,w-2\}$, so
\[
 |u-n/2|\leq |u-w|+|w-n/2|\leq h+2.
\]
For all sufficiently large $n$, this run is internal because $h=o(n)$, and
hence $u\in\mathcal U_n$.

Let $X\sim\operatorname{Bin}(n,1/2)$. Since
\[
 \Pr(X=w)=2^{-n}\binom nw,\qquad
 \mathbb EX=\frac n2,\qquad
 \operatorname{Var}(X)=\frac n4,
\]
Chebyshev's inequality gives
\[
 \sum_{|w-n/2|\leq h}\binom nw
 =2^n\Pr\!\left(\left|X-\frac n2\right|\leq h\right)
 \geq\left(1-\frac{n}{4h^2}\right)2^n.
\]
Since every layer in this central window lies in one of the disjoint runs
indexed by $\mathcal U_n$,
\[
 \sum_{u\in\mathcal U_n}|R_u|
 \geq\sum_{|w-n/2|\leq h}\binom nw
 \geq\left(1-\frac{n}{4h^2}\right)2^n
 =\left(1-\frac{1}{4n^{1/3}}\right)2^n
 =(1-o(1))2^n.
\]

All layer indices in these runs satisfy $|w-n/2|\leq h+4$. Write
$w=n/2+\Delta$, so $|\Delta|\leq n^{2/3}+4$. Then, uniformly throughout
this range,
\[
 \frac{\binom n{w+1}}{\binom nw}
 =\frac{n-w}{w+1}
 =\frac{n/2-\Delta}{n/2+\Delta+1}
 =1+O(n^{-1/3}).
\]
Hence, uniformly for $u\in\mathcal U_n$,
\[
 \binom n{u+1}=(1+o(1))\binom nu,
 \qquad
 \binom n{u+2}=(1+o(1))\binom nu.
\]
Thus the three layers in $R_u$ have asymptotically equal sizes, and
\[
 |R_u|
 =\binom nu+\binom n{u+1}+\binom n{u+2}
 =(3+o(1))\binom nu.
\]
Since $A_u=\binom nu$ and $B_u=\binom n{u+2}=(1+o(1))\binom nu$,
\[
 \min\{A_u,B_u\}=\left(\frac13+o(1)\right)|R_u|
\]
uniformly over $u\in\mathcal U_n$. Summing over the disjoint runs therefore
yields
\[
\begin{aligned}
 |S|&\geq\frac2n\sum_{u\in\mathcal U_n}\min\{A_u,B_u\}\\
 &\geq\frac2n\left(\frac13-o(1)\right)
       \sum_{u\in\mathcal U_n}|R_u|\\
 &\geq\left(\frac23-o(1)\right)\frac{2^n}{n}.
\end{aligned}
\]
Since this holds for every Boolean prototype set $S$ representing $f_n$,
\[
 \BNN(f_n)\geq\left(\frac23-o(1)\right)\frac{2^n}{n}.
\]
Subtracting $W_n(f_n)\leq n+1=o(2^n/n)$ proves the proposition.
\end{proof}

\subsection{The limiting Boolean complexity}

In the uniform symmetric model, $F(x)=\xi_{|x|}$, where
$\xi_0,\ldots,\xi_n$ are independent uniform bits. The change bits
$D_1,\ldots,D_n$ are also independent uniform bits, since the map from
$(\xi_0,\ldots,\xi_n)$ to $(\xi_0,D_1,\ldots,D_n)$ is a bijection.
We first present the real-prototype bound of Kilic, Sima, and Bruck used in
this comparison.

Let $I(F)$ be the number of maximal intervals of consecutive layers on which
$F$ is constant. Equivalently,
\[
 I(F)
 =
 1+\sum_{i=1}^n D_i.
\]
The construction of Kilic, Sima, and Bruck gives
$\NN(F)\leq I(F)$ for every symmetric function \cite{KSB23}. Since
$I(F)-1$ is the number of changes between consecutive layers and every
possible pattern of changes is equally likely,
\[
 I(F)-1\ \sim\ \operatorname{Bin}(n,1/2).
\]
By a standard Chernoff bound \cite{Chernoff52}, with probability tending to
one as $n\to\infty$,
$I(F)=(1/2+o(1))n$. Therefore
\begin{equation}\label{eq:random-symmetric-NN}
 \NN(F)\leq(1/2+o(1))n.
\end{equation}
Their construction uses rational coordinates requiring only
$O(\log_2 n)$ bits each \cite{KSB23}, so the resulting $O(n)$-prototype
representation has polynomial description length. The Boolean complexity has a different scale, as the following theorem shows.

\begin{theorem}[Exact typical Boolean complexity]
\label{thm:random-symmetric-eleven-twentieths}
For each $n$, let $F_n(x)=\xi_{|x|}$, where
$\xi_0,\ldots,\xi_n$ are independent uniform bits. Then
\[
 \frac{\BNN(F_n)}{2^n}\xrightarrow{\Pr}\frac{11}{20},
\]
and
\[
 \frac{\mathbb E\BNN(F_n)}{2^n}\longrightarrow\frac{11}{20}.
\]
Consequently, with probability tending to one,
\[
 \frac{\BNN(F_n)}{\NN(F_n)}
 \geq\left(\frac{11}{10}-o(1)\right)\frac{2^n}{n}.
\]
\end{theorem}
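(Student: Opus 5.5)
By Theorem~\ref{thm:symmetric-path-approximation}, $\BNN(F_n)=W_n(F_n)+O(2^n\log_2 n/n)$ deterministically. It therefore suffices to show that $W_n(F_n)/2^n\to 11/20$ in probability. Convergence of the expectation then follows by bounded convergence, since $0\le \BNN(F_n)/2^n\le 1$. The final ratio statement follows by dividing by \eqref{eq:random-symmetric-NN}, because $(11/20)/(1/2)=11/10$. So the whole task is a law of large numbers for the layer-weighted quantity $W_n$.

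\emph{Localisation.} Take a window $|w-n/2|\le h$ with $h=n^{2/3}$. As in the proof of Proposition~\ref{prop:three-layer-gap}, Chebyshev's inequality shows that the window carries $(1-o(1))2^n$ points, so layers outside it contribute $o(2^n)$ to $W_n$. Inside the window, adjacent binomial coefficients have ratio $1+O(n^{-1/3})$. The same holds over any $O(\log n)$ consecutive layers.

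The components of $G_f$ are paths of isolated transitions at spacing $2$. With probability $1-o(1)$, every such path meeting the window has length $O(\log n)$, because long alternating patterns of $D$ are exponentially unlikely. On this event, each component's weighted minimum vertex cover agrees, up to a factor $1+o(1)$, with $\binom nw$ times its unweighted minimum vertex cover. That unweighted value is $\lfloor k/2\rfloor$ for a path on $k$ vertices, and each chosen vertex costs two layers. Thus $W_n/2^n$ equals $(1+o(1))$ times a weighted average, with weights $\binom nw/2^n$, of local indicators $\chi_w\in\{0,1\}$. Here $\chi_w$ records whether layer $w$ is selected by $\mathcal H$ or by a minimum cover. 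I fix a canonical choice of cover on each path, say choosing every second vertex starting from the second. Then $\chi_w$ depends only on the $D_j$ with $|j-w|=O(\log n)$.

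\emph{Density computation.} Because the $D_i$ are i.i.d.\ fair bits, $\chi_w$ is a stationary local functional. Its mean has two parts. For $\mathcal H$: layer $w$ lies in $\mathcal H$ iff $D_wD_{w+1}=1$, $D_{w-1}D_w=1$, or $D_{w+1}D_{w+2}=1$, and checking the $16$ patterns gives probability $1/2$. The remaining mean should be $1/20$, and computing this constant is the main obstacle.

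To get the $1/20$, I would work on the stationary bi-infinite sequence of fair bits. I would compute the expected number of $\mathcal I$-vertices per unit length in each parity class, and the distribution of maximal runs $i,i+2,\dots,i+2(k-1)$ of isolated transitions. A run requires the pattern $\dots1\,0\,1\,0\,1\dots$ with zeros at the flanking positions; the probability weights of its endpoints give a geometric-type law. Summing $2\lfloor k/2\rfloor$ times the run intensity then gives the selected-layer density. I expect a short generating-function or transfer-matrix computation to produce $1/20$. I would double-check the value against an exact finite enumeration over windows of the $D$-sequence, to confirm $1/2+1/20=11/20$.

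\emph{Concentration.} The indicators $\chi_w$ are $m$-dependent with $m=O(\log n)$ after truncation, and each weight is $O(2^n/\sqrt n)$. Hence the variance of the weighted sum is $O(4^n\log n/\sqrt n)=o(4^n)$. Chebyshev's inequality then gives convergence in probability to $11/20$. Adding the $o(2^n)$ truncation and tail errors completes the argument.
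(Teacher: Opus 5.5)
\paragraph{Verdict.} Your approach is the paper's, and it is sound, but one step is left undone, and it is the step that produces the number in the statement.

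\paragraph{What matches the paper.} Theorem~\ref{thm:symmetric-path-approximation} reduces everything to $W_n$. The canonical even-rank cover is compared with the weighted optimum on components of length $O(\log n)$ that meet the window $|w-n/2|\le n^{2/3}$; this is Lemma~\ref{lem:canonical-weight-comparison}. The binomially weighted sum of truncated, $O(\log n)$-dependent indicators is then concentrated by the same variance bound, as in Lemma~\ref{lem:canonical-concentration}. Your $\mathcal H$-density of $1/2$ is also correct.

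\paragraph{The gap.} You never compute the cover contribution. You only assert that it ``should be'' $1/20$ and say you ``expect'' a computation to confirm it. The statement is precisely that the limit equals $11/20$. As written, your argument therefore only shows convergence to some unidentified constant $1/2+c$.

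\paragraph{How to close it.} Your own plan completes the argument in a few lines. Work in the bi-infinite model with fair bits.
\begin{enumerate}
\item \emph{Starts.} An index $i$ is the least vertex of a component exactly when $(D_{i-1},D_i,D_{i+1})=(0,1,0)$ and $(D_{i-3},D_{i-2})\neq(0,1)$. This has probability $\tfrac18\cdot\tfrac34=\tfrac3{32}$.
\item \emph{Extensions.} Each further extension $t\to t+2$ requires $(D_{t+2},D_{t+3})=(1,0)$ on fresh bits. This has probability $u=\tfrac14$, independently of the start and of earlier extensions.
\item \emph{Component size.} The size $k$ is therefore geometric with $\Pr(k\ge j)=u^{j-1}$. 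Hence
$\mathbb E\lfloor k/2\rfloor=\sum_{r\ge1}u^{2r-1}=u/(1-u^2)=\tfrac4{15}$.
\item \emph{Selected transitions.} A fixed index is an even-ranked, hence selected, vertex exactly when a component begins at distance $2(2r-1)$ to its left and reaches size at least $2r$, for some $r\ge1$. By translation invariance this probability is $\tfrac3{32}\cdot\tfrac4{15}=\tfrac1{40}$.
\item \emph{Selected layers.} The layer pairs $\{i-1,i\}$ of distinct isolated transitions are disjoint from each other and from $\mathcal H$. So a fixed layer is cover-selected, through transition $w$ or $w+1$, with probability $2\cdot\tfrac1{40}=\tfrac1{20}$.
\end{enumerate}
This is the paper's Lemma~\ref{lem:canonical-density}, organised per component rather than per vertex rank. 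With this step inserted, your proof is complete.
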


We prove the theorem through three lemmas. Recall that the components of
$G_f$ are paths. For any symmetric function $f$, list the vertices of each
component as
\[
 i_1<i_2<\cdots<i_\ell,
\]
and select $i_2,i_4,\ldots$. This covers every edge and uses
$\lfloor\ell/2\rfloor$ vertices. No cover uses fewer, since the
$\lfloor\ell/2\rfloor$ disjoint edges
$\{i_1,i_2\},\{i_3,i_4\},\ldots$ must each be met. Making this choice in
every component gives a canonical minimum-cardinality vertex cover
$S_{\mathrm{can}}(f)$. Define
\[
 \mathcal L(f)=\mathcal H\cup
   \bigcup_{i\in S_{\mathrm{can}}(f)}\{i-1,i\},
 \qquad
 C_n(f)=\sum_{w\in\mathcal L(f)}\binom nw.
\]
A layer is \emph{selected} when its index belongs to $\mathcal L(f)$.
This rule minimises the number of selected vertices of $G_f$, rather than
their total weight.

For the probabilistic analysis, it is convenient to introduce an auxiliary
two-sided infinite model. Fix $0<\theta<1$, and let
$(D_i)_{i\in\mathbb Z}$ be i.i.d. Bernoulli$(\theta)$ change bits. Call
$i\in\mathbb Z$ an isolated transition if
$D_i=1$ and $D_{i-1}=D_{i+1}=0$. Form a graph on the isolated transition
indices, joining $i$ and $j$ by an edge whenever $|i-j|=2$.

Starting from an isolated transition $i$, the only possible next vertex of
its connected component to the right is $i+2$. Since $i$ is isolated,
$D_{i+1}=0$. Thus $i+2$ is also isolated exactly when $D_{i+2}=1$ and
$D_{i+3}=0$, which has probability $\theta(1-\theta)$. The same argument
applies at each subsequent step, and similarly to the left. Since each
extension has probability at most $1/4$, every component is finite almost
surely.

For each component, choose its second, fourth, and subsequent even-ranked
vertices, exactly as in the canonical cover above. We then declare a layer
position selected if it is adjacent either to one of these chosen vertices
or to a non-isolated transition. This is the two-sided infinite analogue of
the finite selected-layer set $\mathcal L(f)$. By translation invariance,
the probability that $w$ is selected is the same for every
$w\in\mathbb Z$; denote this common probability by $\mu(\theta)$.

\begin{lemma}[Infinite-model density]\label{lem:canonical-density}
In the infinite model with $0<\theta<1$, for every $w\in\mathbb Z$ the
probability that $w$ is selected is
\[
 \mu(\theta)
 =\frac{\theta^2(\theta^2-5\theta+5)}{1+\theta-\theta^2}.
\]
In particular, $\mu(1/2)=11/20$.
\end{lemma}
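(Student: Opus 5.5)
We will compute $\mu(\theta)$ by fixing a layer position $w$ and splitting on the values of the two change bits adjacent to it. Boundary $i$ lies between layers $i-1$ and $i$, so layer $w$ is adjacent exactly to the transitions $w$ and $w+1$. Layer $w$ is selected if and only if one of these indices is a transition that is either non-isolated or isolated and chosen.

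\emph{Case split on $(D_w,D_{w+1})$.}
\begin{enumerate}
 \item If $(D_w,D_{w+1})=(1,1)$, both are non-isolated, so $w$ is selected. This has probability $\theta^2$.
 \item If $(D_w,D_{w+1})=(0,0)$, neither index is a transition, so $w$ is not selected.
 \item If $(D_w,D_{w+1})=(1,0)$, then $w+1$ is not a transition, and there are two subcases.
 \begin{itemize}
  \item If $D_{w-1}=1$, then $w$ is non-isolated and $w$ is selected. This has probability $\theta^2(1-\theta)$.
  \item If $D_{w-1}=0$, then $w$ is isolated, and $w$ is selected exactly when the isolated transition $w$ is chosen, that is, has even rank in its path component.
 \end{itemize}
 \item The case $(D_w,D_{w+1})=(0,1)$ is the mirror image, with the roles of $w$ and $w+1$ exchanged.
\end{enumerate}
The events in these cases are disjoint, since they fix different patterns of $(D_w,D_{w+1})$. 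By translation invariance, the isolated-and-chosen probability is the same for $w$ and for $w+1$. Writing $\pi_{\mathrm{even}}$ for the probability that $0$ is an isolated transition of even rank, we therefore get
\[
 \mu(\theta)=\theta^2+2\theta^2(1-\theta)+2\pi_{\mathrm{even}}.
\]

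\emph{Computing $\pi_{\mathrm{even}}$.} This is the main step. The event that $0$ is isolated, $\{D_{-1}=0,D_0=1,D_1=0\}$, has probability $\theta(1-\theta)^2$. Ranks are counted from the left, so the rank of $0$ is $1+K$, where $K$ is the number of component vertices to the left of $0$. As described before the lemma, given that $i$ is isolated (so $D_{i-1}=0$), the vertex $i-2$ is isolated exactly when $D_{i-2}=1$ and $D_{i-3}=0$. These conditions involve the bits $D_{-2},D_{-3},\ldots$ in disjoint successive pairs, none of which enter the isolation event of $0$. Hence, conditionally on $0$ being isolated, $K$ is geometric:
\[
 \Pr(K=k)=s^k(1-s),\qquad s=\theta(1-\theta).
\]
The rank is even exactly when $K$ is odd, which has probability $\sum_{k\text{ odd}}s^k(1-s)=s/(1+s)$. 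Therefore
\[
 \pi_{\mathrm{even}}=\theta(1-\theta)^2\cdot\frac{\theta(1-\theta)}{1+\theta-\theta^2}.
\]
The one point to check carefully is this independence: the leftward extension events use only $D_{-2},D_{-3},\ldots$, which are disjoint from $D_{-1},D_0,D_1$. Finiteness of components (noted before the lemma) guarantees that ranks are well defined.

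\emph{Assembling.} Substituting gives
\[
 \mu(\theta)=\theta^2+2\theta^2(1-\theta)+\frac{2\theta^2(1-\theta)^3}{1+\theta-\theta^2}.
\]
Putting this over the common denominator $1+\theta-\theta^2$ should simplify to the numerator $\theta^2(\theta^2-5\theta+5)$; this is routine algebra. As a check at $\theta=1/2$, the three terms are $1/4$, $1/4$ and $(1/16)/(5/4)=1/20$, so $\mu(1/2)=11/20$.
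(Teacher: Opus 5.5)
Your proof is correct and takes essentially the same route as the paper. You obtain the non-isolated contribution $3\theta^2-2\theta^3$ by a case split on $(D_w,D_{w+1})$ where the paper uses inclusion--exclusion, and you get the factor $\theta^2(1-\theta)^3/(1+\theta-\theta^2)$ by conditioning on the vertex being isolated and counting the vertices to its left where the paper sums over the component's starting point; both are the same geometric-series computation with ratio $\theta(1-\theta)$.
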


\begin{proof}
In the two-sided model, let $\mathcal H$ be the set of layer positions
adjacent to a non-isolated transition. The transitions immediately below and above layer
$w$ have indices $w$ and $w+1$, respectively. Thus layer $w$ belongs to
$\mathcal H$ exactly when at least one of these two transitions has another
transition immediately beside it. Equivalently, at least one of
\[
 D_{w-1}D_w,\qquad D_wD_{w+1},\qquad D_{w+1}D_{w+2}
\]
equals one. Each of these three events has probability $\theta^2$.
The intersections of consecutive events have probability $\theta^3$;
the intersection of the first and third, and that of all three, both
have probability $\theta^4$. Inclusion--exclusion gives
\[
 \Pr(w\in\mathcal H)
 =3\theta^2-(2\theta^3+\theta^4)+\theta^4
 =3\theta^2-2\theta^3.
\]

Fix an index $i$. Because two isolated transitions are adjacent when their
indices differ by two, $i$ is the least-indexed vertex of a component
exactly when $i$ is an isolated transition and $i-2$ is not an isolated
transition. The probability that $i$ is isolated is $\theta(1-\theta)^2$,
since this requires
\[
 (D_{i-1},D_i,D_{i+1})=(0,1,0).
\]
Put $u=\theta(1-\theta)$. Conditional on this event, $i-2$ is isolated
precisely when $(D_{i-3},D_{i-2})=(0,1)$, which has probability $u$.
Hence
\[
 \Pr\bigl(i\text{ is the least-indexed vertex of a component}\bigr)
 =\theta(1-\theta)^2(1-u).
\]

If a component has reached an isolated transition $t$, then $D_{t+1}=0$,
and its next possible vertex $t+2$ is isolated exactly when
\[
 (D_{t+2},D_{t+3})=(1,0).
\]
This has probability $u$. Successive extensions involve disjoint pairs
of change bits, also disjoint from the bits determining the beginning of
the component. Thus the extensions are independent of one another and of
the event that the component begins.

A fixed position $j$ is selected as an isolated transition exactly when
it has rank $2r$ in its component for some $r\geq1$. The component must
then begin at $j-2(2r-1)$ and extend $2r-1$ times. These events, as $r$
varies, are mutually exclusive, so
\[
\begin{aligned}
 \Pr(j\text{ is selected as an isolated transition})
 &=\theta(1-\theta)^2(1-u)\sum_{r\geq1}u^{2r-1}\\
 &=\theta(1-\theta)^2(1-u)\frac{u}{1-u^2}
 =\frac{\theta^2(1-\theta)^3}{1+\theta-\theta^2}.
\end{aligned}
\]

The two layer positions associated with a selected isolated transition
$i$ are $i-1$ and $i$. Because $D_{i-1}=D_{i+1}=0$, neither belongs to
$\mathcal H$. Moreover, isolated transition indices cannot be consecutive,
so the layer pairs associated with distinct isolated transitions are
disjoint. A fixed layer position $w$ lies in the pair associated with
transition $w$ or with transition $w+1$. Combining these two disjoint
possibilities with the compulsory-layer probability gives
\begin{equation}\label{eq:canonical-density-formula}
 \mu(\theta)
 =3\theta^2-2\theta^3+
   \frac{2\theta^2(1-\theta)^3}{1+\theta-\theta^2}
 =\frac{\theta^2(\theta^2-5\theta+5)}{1+\theta-\theta^2}.
\end{equation}
At $\theta=1/2$, this is $1/2+2\cdot(1/40)=11/20$.
\end{proof}

\begin{lemma}[Concentration of the canonical layer weight]
\label{lem:canonical-concentration}
Fix $0<\theta<1$. Let $F$ be a symmetric Boolean function generated by
i.i.d. Bernoulli$(\theta)$ change bits $D_1,\ldots,D_n$, with arbitrary
initial label $\xi_0$. Then
\begin{equation}\label{eq:canonical-limit}
 \frac{C_n(F)}{2^n}\xrightarrow{\Pr}\mu(\theta).
\end{equation}
Moreover, if $\mathcal E_n$ is the event that every component of $G_F$
has fewer than $R_n=\lceil3\log_2n\rceil$ vertices, then
$\Pr(\mathcal E_n)\to1$.
\end{lemma}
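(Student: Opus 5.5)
I would first prove the statement about $\mathcal E_n$, since it also drives the comparison with the infinite model. A component of $G_F$ with at least $R_n$ vertices contains a leftmost vertex $i$ and the chain $i,i+2,\ldots,i+2(R_n-1)$ of isolated transitions. For fixed $i$ this requires $D_{j}=1$ and $D_{j+1}=0$ at the prescribed disjoint positions $j=i+2s$. The probability is therefore at most $(\theta(1-\theta))^{R_n-1}\le 4^{-(R_n-1)}$. A union bound over at most $n$ starting indices gives $\Pr(\mathcal E_n^c)\le 4n\cdot n^{-6}\to0$.

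For the convergence in probability, I would write $C_n(F)=\sum_{w=0}^n\binom nw\mathbf 1\{w\in\mathcal L(F)\}$ and compare each indicator with a local event. Embed $D_1,\ldots,D_n$ in a two-sided i.i.d.\ sequence. The finite model uses $D_0=D_{n+1}=1$ and truncation, so it differs from the infinite model only near the ends. On $\mathcal E_n$, and in the infinite model on the analogous event of all components within distance $2R_n$ of $w$ being short, whether layer $w$ is selected is determined by the bits $D_j$ with $|j-w|\le 4R_n+3$. Membership of $w$ in $\mathcal H$ needs bits in $[w-1,w+2]$. Membership via the canonical cover needs the rank parity of transition $w$ or $w+1$ in its component, and the whole component lies within distance $2R_n$. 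Hence for $4R_n+3\le w\le n-4R_n-3$ the finite and infinite selection indicators agree outside an event of probability $o(1)$, uniformly in $w$. The end layers carry weight $2^{-n}\sum_{w<5R_n}\binom nw\to0$.

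This gives $\mathbb E\,C_n(F)/2^n\to\mu(\theta)$ via Lemma~\ref{lem:canonical-density}, since $\sum_w\binom nw 2^{-n}=1$. For concentration I would bound the variance: $2^{-2n}\sum_{w,w'}\binom nw\binom n{w'}\operatorname{Cov}(\cdot,\cdot)$. Using the local versions of the indicators, which are independent when $|w-w'|>8R_n+6$, the covariance is $o(1)$ for far pairs. For near pairs, $\sum_{|w-w'|\le 8R_n+6}\binom nw\binom n{w'}\le (16R_n+13)\max_w\binom nw\cdot 2^n=O(2^{2n}\log n/\sqrt n)$. Thus $\operatorname{Var}(C_n/2^n)\to0$, and Chebyshev gives \eqref{eq:canonical-limit}.

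The main obstacle is making the locality argument rigorous. Two things need care: the canonical rank-parity selection depends on where a component begins, and the boundary convention $D_0=D_{n+1}=1$ is not present in the infinite model. I would handle both by conditioning on the short-component events and confining the discrepancy to the $O(\log n)$ end layers, whose binomial mass is negligible.
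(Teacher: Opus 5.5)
Your proposal is correct and follows essentially the paper's route. You truncate the rank-parity selection rule at logarithmic depth, which you do via the short-component event and the paper does via a look-back of depth $L=\lceil\log_2 n\rceil$ with default value zero and a $4^{-L}$ disagreement bound. You then use the resulting finite-range dependence and $\max_w 2^{-n}\binom nw=O(n^{-1/2})$ to get variance $O(\log n/\sqrt n)$, discard the $O(\log n)$ end layers when comparing with the infinite model, and bound component lengths by the same union bound. The only difference is organisational: you bound the variance of the finite sum directly and use the infinite model only to identify the mean, whereas the paper proves concentration for the infinite-model truncated sum and then transfers it to the finite model.
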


\begin{proof}
Write $C_n=C_n(F)$ and $\mathcal L=\mathcal L(F)$. By definition,
\[
 \frac{C_n}{2^n}
 =\sum_{w=0}^n\frac{\binom nw}{2^n}\mathbf1_{\{w\in\mathcal L\}}.
\]
The indicator that a layer is selected is not determined by a fixed local
pattern of change bits. Indeed, to decide whether an isolated transition
$j$ is selected by the canonical rule, we must know whether its rank in its
path component is even. Since the canonical cover selects the second,
fourth, and subsequent even-ranked vertices of each component, this may
require looking arbitrarily far to the left to locate the beginning of the
component. We therefore replace the selection rule by a truncated version
that looks back only a fixed number of steps. For this truncated rule,
indicators attached to sufficiently distant layer positions depend on
disjoint sets of change bits and are therefore independent, which allows
the corresponding weighted sum to be shown to concentrate. We then let the
truncation depth grow with $n$, compare the truncated infinite and finite
models away from the endpoints, and finally show that the truncation and
endpoint errors are negligible.

More explicitly, $j$ has rank $1$ if $j-2$ is not isolated, rank $2$ if
$j-2$ is isolated but $j-4$ is not, rank $3$ if $j-2$ and $j-4$ are
isolated but $j-6$ is not, and so on. Thus the rank of $j$ is one plus the
number of consecutive isolated transitions encountered in the sequence
$j-2,j-4,\ldots$ before the first gap.

Fix $L\geq1$ and define an auxiliary truncated decision by examining only
the indices $j-2,j-4,\ldots,j-2L$. If an index that is not an isolated
transition is encountered among them, the rank of $j$ is known and the
truncated decision agrees with the true one. If all $L$ indices are
isolated, assign zero to the truncated decision. This
approximation can differ from the true decision only when the component
extends through all $L$ preceding possible vertices. Conditional on $j$
being isolated, each extension has probability $\theta(1-\theta)$ and
uses a disjoint pair of change bits. The probability of all $L$ extensions
is therefore $[\theta(1-\theta)]^L\leq4^{-L}$.

For each layer position $w$ in the infinite model, let $X_w$ denote the
indicator that layer $w$ is selected under the untruncated selection rule,
and let $X_w^{(L)}$ be the corresponding truncated indicator. By definition,
$\mathbb E X_w=\mu(\theta)$. Determining the truncated decision at a
transition $j$ requires only the change bits with indices from $j-2L-1$
to $j+1$. Under the truncated rule, layer $w$ is selected if either
$w\in\mathcal H$, transition $w$ is selected by the truncated canonical
rule, or transition $w+1$ is selected by that rule. Thus $X_w^{(L)}$
depends only on the change bits with indices from $w-2L-1$ to $w+2$, and
$X_w^{(L)}$ and $X_v^{(L)}$ are independent whenever $|w-v|>2L+3$.
Indeed, if $v>w$ and $v-w>2L+3$, then $w+2<v-2L-1$, so the two indicators
depend on disjoint sets of change bits. Put
\[
 Z_{n,L}=\sum_{w=0}^n b_{n,w}X_w^{(L)},
 \qquad \text{where } b_{n,w}=2^{-n}\binom nw.
\]
This is the truncated infinite-model analogue of $C_n/2^n$. The local
dependence gives
\begin{align*}
 \operatorname{Var}(Z_{n,L})
 &=\sum_{w,v=0}^n b_{n,w}b_{n,v}
   \operatorname{Cov}(X_w^{(L)},X_v^{(L)})\\
 &\leq \sum_{w=0}^n b_{n,w}
   \sum_{|v-w|\leq 2L+3} b_{n,v}\\
 &\leq (4L+7)\max_v b_{n,v}\sum_{w=0}^n b_{n,w}
 =O\!\left(\frac{L}{\sqrt n}\right).
\end{align*}
Here we used $|\operatorname{Cov}(X_w^{(L)},X_v^{(L)})|\leq1$,
$\sum_w b_{n,w}=1$, and
\[
 \max_w b_{n,w}=2^{-n}\binom n{\lfloor n/2\rfloor}
 =O(n^{-1/2}).
\]

The preceding truncation estimate gives
$\Pr(X_w^{(L)}\neq X_w)=O(4^{-L})$, uniformly in $w$, and therefore
\[
 \mathbb E\sum_{w=0}^n b_{n,w}|X_w^{(L)}-X_w|
 =O(4^{-L}).
\]
The quantity inside the expectation is the total binomial weight of the
layer positions on which the truncated and untruncated selection rules
disagree. The constants in this estimate and in the variance bound are
independent of $L$ and $w$.

Take $L=\lceil\log_2 n\rceil$. The variance bound tends to zero, while the
expectation of the weighted disagreement above is $O(n^{-2})$. Chebyshev's
inequality therefore gives
\[
 Z_{n,L}-\mathbb E Z_{n,L}\xrightarrow{\Pr}0.
\]
Since $X_w^{(L)}$ and $X_w$ are $0$--$1$ variables,
\[
 \bigl|\mathbb E X_w^{(L)}-\mu(\theta)\bigr|
 \leq \Pr(X_w^{(L)}\neq X_w)
 =O(4^{-L}),
\]
uniformly in $w$. As $\sum_w b_{n,w}=1$, it follows that
\[
 \mathbb E Z_{n,L}
 =\sum_w b_{n,w}\mathbb E X_w^{(L)}
 =\mu(\theta)+O(4^{-L})
 =\mu(\theta)+o(1).
\]
Finally, Markov's inequality, applied to the weighted disagreement above,
shows that the difference between the truncated and untruncated weighted
sums converges to $0$ in probability. Hence the binomially weighted sum of
the untruncated indicators converges in probability to $\mu(\theta)$.

It remains to transfer this conclusion from the auxiliary infinite model
to the original finite sequence. Let
\[
 \widehat X_w=\mathbf 1_{\{w\in\mathcal L\}}
\]
be the indicator that layer $w$ is selected in the original finite model.
Thus
\[
 \frac{C_n}{2^n}=\sum_{w=0}^n b_{n,w}\widehat X_w.
\]
Let $\widehat X_w^{(L)}$ be the corresponding finite-sequence truncated
indicator. To compare this with the infinite model, extend
$D_1,\ldots,D_n$ to a sequence indexed by all the integers by adjoining
independent Bernoulli$(\theta)$ bits outside the range $1,\ldots,n$. This extended
sequence has the distribution of the auxiliary infinite model. Whenever
the interval from $w-2L-1$ to $w+2$ lies within $1,\ldots,n$, the finite
and infinite truncated indicators use exactly the same change bits, and so
$\widehat X_w^{(L)}=X_w^{(L)}$. A discrepancy can therefore occur only if
\[
 w-2L-1<1\qquad\text{or}\qquad w+2>n.
\]
Since $w$ is an integer, these conditions are equivalent to
$w\leq2L+1$ and $w\geq n-1$, respectively. Thus a discrepancy is possible
only for
\[
 w\in\{0,1,\ldots,2L+1\}\cup\{n-1,n\}.
\]
Since $L=O(\log n)$, the total binomial weight of these layer positions is
\[
 \sum_{w=0}^{2L+1}b_{n,w}
 +\sum_{w=n-1}^n b_{n,w}=o(1).
\]
Thus the finite and infinite truncated weighted sums differ by $o(1)$.

The same truncation estimate shows that the original finite weighted sum
and its truncated version differ by $o(1)$ in probability; the endpoint
layers again contribute only $o(1)$. We have already shown that the
infinite truncated weighted sum converges in probability to $\mu(\theta)$.
Combining these three comparisons gives
\[
 \frac{C_n}{2^n}\longrightarrow\mu(\theta)
 \qquad\text{in probability},
\]
which proves \eqref{eq:canonical-limit}.

We also need a bound on component lengths. For a fixed index $i$, a
component beginning at $i$ and containing at least $R$ vertices requires
$R-1$ successive extensions. The new pairs of change bits are independent
of the event that the component begins, so the probability is at most
$[\theta(1-\theta)]^{R-1}\leq4^{-(R-1)}$. The fixed endpoint bits can
prevent an extension and do not increase this bound. There are at most
$n$ possible least indices. Thus
\[
 \Pr(\text{some component of $G_F$ has at least $R$ vertices})
 \leq n4^{-(R-1)}.
\]
Taking $R=R_n=\lceil3\log_2n\rceil$ gives
$\Pr(\mathcal E_n)\to1$.
\end{proof}

\begin{lemma}[Uniform comparison of canonical and weighted covers]
\label{lem:canonical-weight-comparison}
For every fixed $K_0>0$, there is a sequence $\varepsilon_n(K_0)$ such
that $\varepsilon_n(K_0)\to0$ as $n\to\infty$, and every symmetric Boolean
function $f$ whose graph $G_f$ has all components of size at most
$K_0\log_2 n$ satisfies
\[
 0\leq C_n(f)-W_n(f)\leq\varepsilon_n(K_0)2^n.
\]
\end{lemma}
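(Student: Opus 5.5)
The plan is to compare, component by component of $G_f$, the canonical cover $S_{\mathrm{can}}(f)$ with a minimum-weight vertex cover $S^\ast$. I would exploit two facts. First, every vertex cover of a path component with $\ell$ vertices has at least $\lfloor\ell/2\rfloor$ vertices, while the canonical cover has exactly $\lfloor\ell/2\rfloor$. Second, within one short component near the middle of the cube all the weights $t_i$ are nearly equal.

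\emph{Step 1: reduction to vertex covers.} The layer pairs $\{i-1,i\}$ for $i\in\mathcal I$ are pairwise disjoint and disjoint from $\mathcal H$, as already noted in the proof of Lemma~\ref{lem:symmetric-full-layers}. Hence
\[
 C_n(f)=h_n(f)+\sum_{i\in S_{\mathrm{can}}(f)}t_i,
 \qquad
 W_n(f)=h_n(f)+\sum_{i\in S^\ast}t_i.
\]
Since $S_{\mathrm{can}}(f)$ is a vertex cover, $C_n(f)\geq W_n(f)$, which is the lower bound. Both sums split over the components of $G_f$, so it suffices to bound
\[
 \sum_{\text{components }K}\Bigl(\,\sum_{i\in S_{\mathrm{can}}\cap K}t_i-\sum_{i\in S^\ast\cap K}t_i\Bigr).
\]

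\emph{Step 2: splitting the components.} Put $h=n^{2/3}$. A component with $\ell\leq K_0\log_2n$ vertices spans an index interval of length at most $2K_0\log_2n=o(h)$. So for large $n$ each component satisfies one of the following.
\begin{enumerate}
 \item Every index satisfies $|i-n/2|\leq 2h$ (a \emph{central} component).
 \item Every index satisfies $|i-n/2|>h$ (a \emph{peripheral} component).
\end{enumerate}
For a peripheral component, bound its contribution to the difference crudely by its canonical weight. The canonical layer pairs are disjoint, so the total over all peripheral components is at most the number of points in layers with $|w-n/2|>h-1$. By Chebyshev's inequality for $\operatorname{Bin}(n,1/2)$, exactly as in Proposition~\ref{prop:three-layer-gap}, this is $O(n^{-1/3})2^n$.

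\emph{Step 3: central components.} For $|i-n/2|\leq 2h+2$,
\[
 \frac{\binom n{i+1}}{\binom ni}=1+O(n^{-1/3}),
 \qquad\text{hence}\qquad
 \frac{t_{i+2}}{t_i}=1+O(n^{-1/3})
\]
uniformly. Within a central component the indices differ by at most $2K_0\log_2 n$, so
\[
 \max_{i\in K}t_i\leq(1+\delta_n)\min_{i\in K}t_i,
 \qquad
 \delta_n=\exp\bigl(O(K_0\log_2n\cdot n^{-1/3})\bigr)-1\to0.
\]
It follows that
\[
 \sum_{i\in S_{\mathrm{can}}\cap K}t_i
 \leq\lfloor\ell/2\rfloor\max_Kt_i
 \leq(1+\delta_n)\lfloor\ell/2\rfloor\min_Kt_i
 \leq(1+\delta_n)\sum_{i\in S^\ast\cap K}t_i,
\]
using $|S^\ast\cap K|\geq\lfloor\ell/2\rfloor$. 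Summing over central components, the difference is at most $\delta_n\sum_{i\in S^\ast}t_i\leq\delta_n2^n$, again by disjointness of the layer pairs. Taking $\varepsilon_n(K_0)=\delta_n+O(n^{-1/3})$ completes the argument.

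The main obstacle is Step 3: making the near-equality of weights uniform. This requires confining attention to the window $|i-n/2|=O(n^{2/3})$, where adjacent binomial ratios are $1+o(1/\log n)$. Outside that window the weights can vary by constant factors along a component, which is why those components are discarded into the negligible tail mass in Step 2.
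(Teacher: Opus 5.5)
Your proposal is correct and follows essentially the same route as the paper: you compare the canonical cover with a minimum-weight cover component by component, use that the canonical cover has the minimum cardinality $\lfloor\ell/2\rfloor$ while the weights $t_i$ within a short component near $n/2$ agree up to a factor $1+O(K_0\log_2 n\cdot n^{-1/3})$, and charge components away from the centre to the binomial tail. The only differences are cosmetic. You bound the tail by Chebyshev rather than Hoeffding, which gives $O(n^{-1/3})2^n$ instead of an exponentially small term but still suffices. You also measure the central error against the optimal weight rather than the canonical weight.
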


\begin{proof}
Write $C_n=C_n(f)$ and put $R=\lceil K_0\log_2n\rceil+1$. Since component
sizes are integers, the hypothesis implies that every component has fewer
than $R$ vertices. The canonical selection is a vertex cover of $G_f$.
Since $W_n(f)$ uses a minimum-weight vertex cover, while $C_n(f)$ uses the
weight of this particular canonical cover together with the same
compulsory-layer contribution $h_n(f)$, we have $W_n(f)\leq C_n(f)$. We
bound the difference by comparing the weights within each component in the
central range.

Recall that
\[
 t_i=\binom n{i-1}+\binom ni,
\]
the total size of the two Hamming layers associated with vertex $i$ of
$G_f$. Let $\mathcal C=\{i_1<\cdots<i_\ell\}\subseteq\mathcal I$ be the
vertex set of a component of $G_f$ that meets
\[
 J_n=\{i:|i-n/2|\leq n^{2/3}\}.
\]
Its indices form an arithmetic progression with common difference $2$,
so $i_\ell-i_1=2(\ell-1)<2R$. Since $\mathcal C$ meets $J_n$ and $t_i$
involves layers $i-1$ and $i$, every layer index $w$ occurring in these
weights satisfies
\[
 \left|w-\frac n2\right|
 \leq n^{2/3}+2R
 =n^{2/3}+O(\log n).
\]
Write $w=n/2+s$. Then $|s|\leq n^{2/3}+O(\log n)$, and throughout this
range
\[
 \frac{\binom n{w+1}}{\binom nw}
 =\frac{n-w}{w+1}
 =\frac{n/2-s}{n/2+s+1}
 =1-\frac{2s+1}{n/2+s+1}
 =1+O(n^{-1/3}),
\]
uniformly. The layer indices occurring in the weights indexed by
$\mathcal C$ span at most $2R+1$ consecutive positions. Comparing
successive binomial coefficients across this range therefore gives
\[
 \frac{\max_{i\in\mathcal C}t_i}{\min_{i\in\mathcal C}t_i}
 =1+O(Rn^{-1/3})=1+o(1).
\]

As observed above, the canonical cover $\{i_2,i_4,\ldots\}$ has minimum
cardinality $\lfloor\ell/2\rfloor$.
Define
\[
 T_{\mathrm{can}}(\mathcal C)
 =\sum_{r=1}^{\lfloor\ell/2\rfloor}t_{i_{2r}},
\]
the total vertex weight of the canonical cover, and let
$T_{\mathrm{opt}}(\mathcal C)$ be the minimum total weight of a vertex cover
of this component. If $t_{\min}$ and $t_{\max}$ are the minimum and maximum
vertex weights in the component, then
\[
 T_{\mathrm{opt}}(\mathcal C)
 \geq\lfloor\ell/2\rfloor t_{\min},
 \qquad
 T_{\mathrm{can}}(\mathcal C)
 \leq\lfloor\ell/2\rfloor t_{\max}.
\]
The preceding comparison gives
\[
 T_{\mathrm{can}}(\mathcal C)
 \leq \frac{t_{\max}}{t_{\min}}\,T_{\mathrm{opt}}(\mathcal C)
 =\bigl(1+O(Rn^{-1/3})\bigr)T_{\mathrm{opt}}(\mathcal C).
\]
Hence there is a quantity $\eta_n=O(Rn^{-1/3})$, independent of the
component and tending to zero, such that
\[
 (1-\eta_n)T_{\mathrm{can}}(\mathcal C)
 \leq T_{\mathrm{opt}}(\mathcal C)
 \leq T_{\mathrm{can}}(\mathcal C)
\]
for every component meeting $J_n$. The upper bound holds because the
canonical cover is itself an available vertex cover.

We now sum over the components. A component not meeting $J_n$ has all its
vertex indices outside that central interval. The layers occurring in its
canonical weight therefore satisfy
\[
 \left|w-\frac n2\right|>n^{2/3}-1.
\]
Hoeffding's inequality \cite{Hoeffding63},
applied to both tails of $\operatorname{Bin}(n,1/2)$, gives
\[
 \sum_{|w-n/2|>n^{2/3}-1}\binom nw
 \leq
 2^{n+1}\exp\!\left(-\frac{2(n^{2/3}-1)^2}{n}\right)
 =o(2^n).
\]
Consequently, the sum of the canonical weights of all components not
meeting $J_n$ is $o(2^n)$.

The compulsory-layer contribution $h_n(f)$ is the same in $C_n(f)$ and
$W_n(f)$, so it remains only to compare the vertex-cover contributions.
For each component $\mathcal C$ meeting $J_n$,
\[
 T_{\mathrm{can}}(\mathcal C)-T_{\mathrm{opt}}(\mathcal C)
 \leq \eta_n T_{\mathrm{can}}(\mathcal C).
\]
Summing over all such components gives
\[
 \sum_{\mathcal C\cap J_n\neq\varnothing}
 \bigl(T_{\mathrm{can}}(\mathcal C)-T_{\mathrm{opt}}(\mathcal C)\bigr)
 \leq
 \eta_n\sum_{\mathcal C\cap J_n\neq\varnothing}
 T_{\mathrm{can}}(\mathcal C)
 \leq \eta_n C_n(f)
 \leq \eta_n2^n.
\]

For components not meeting $J_n$, the difference between the canonical and
optimal contributions is at most their total canonical weight. By the
preceding Hoeffding bound, this is at most
\[
 2^{n+1}\exp\!\left(-\frac{2(n^{2/3}-1)^2}{n}\right).
\]
Consequently,
\[
 0\leq C_n(f)-W_n(f)
 \leq \eta_n2^n+
 2^{n+1}\exp\!\left(-\frac{2(n^{2/3}-1)^2}{n}\right).
\]
Dividing by $2^n$, and using
$\eta_n=O(K_0\log_2n/n^{1/3})$, gives
\[
 \frac{C_n(f)-W_n(f)}{2^n}
 \leq O\!\left(\frac{K_0\log_2n}{n^{1/3}}\right)
 +2\exp\!\left(-\frac{2(n^{2/3}-1)^2}{n}\right).
\]
Thus
\[
 \frac{C_n(f)-W_n(f)}{2^n}\leq\varepsilon_n(K_0),
\]
where $\varepsilon_n(K_0)\to0$ as $n\to\infty$, uniformly over all $f$
satisfying the hypothesis. This proves the lemma.

\end{proof}

\begin{proof}[Proof of Theorem~\ref{thm:random-symmetric-eleven-twentieths}]
Lemmas~\ref{lem:canonical-density} and~\ref{lem:canonical-concentration}
give $C_n(F_n)/2^n\xrightarrow{\Pr}11/20$ and
$\Pr(\mathcal E_n)\to1$. On $\mathcal E_n$, every component has fewer than
$3\log_2n$ vertices, since its integer size is less than
$\lceil3\log_2n\rceil$. Thus
Lemma~\ref{lem:canonical-weight-comparison}, with $K_0=3$, gives
$W_n(F_n)=C_n(F_n)+o(2^n)$. The uniform approximation in
Theorem~\ref{thm:symmetric-path-approximation} therefore yields
$\BNN(F_n)/2^n\xrightarrow{\Pr}11/20$. Since this ratio lies in $[0,1]$,
convergence in probability to the constant $11/20$ also implies convergence
of its expectation to $11/20$. Finally,
\eqref{eq:random-symmetric-NN} and the Boolean complexity limit give,
with probability tending to one,
\[
 \frac{\BNN(F_n)}{\NN(F_n)}
 \geq\frac{(11/20-o(1))2^n}{(1/2+o(1))n}
 =\left(\frac{11}{10}-o(1)\right)\frac{2^n}{n}.
\]
\end{proof}

\begin{remark}
Although $F_n$ is symmetric, the definition of $\BNN(F_n)$ imposes no
symmetry condition on the prototype set. The lower-bound argument forces
the complete layers encoded by $\mathcal H$ and by a vertex cover of
$G_{F_n}$, while Theorem~\ref{thm:symmetric-path-approximation} shows that
any additional prototypes needed beyond these forced layers contribute only
$o(2^n)$. These additional prototypes need not themselves form complete
layers. Thus the limit $11/20$ is the asymptotic value of the unrestricted
Boolean nearest-neighbour complexity, not merely of a symmetry-restricted
version of the problem.
\end{remark}

The preceding theorem corresponds to change probability $\theta=1/2$. The same
argument gives a one-parameter extension, showing how the limiting Boolean
complexity varies with the probability of a change between consecutive
Hamming layers.

\begin{corollary}[Independent changes with arbitrary probability]
\label{cor:markov-symmetric-complexity}
Fix $\theta\in[0,1]$. For each $n$, choose $\xi_0$ arbitrarily or at random,
and generate the remaining layer labels by independent change bits
$D_1,\ldots,D_n$ with $\Pr(D_i=1)=\theta$. For the resulting symmetric
function $F$, as $n\to\infty$, for every $\varepsilon>0$,
\[
 \Pr\!\left(\left|\frac{\BNN(F)}{2^n}-c(\theta)\right|
 <\varepsilon\right)\longrightarrow1,
\]
and
\[
 \frac{\mathbb E\BNN(F)}{2^n}\longrightarrow c(\theta),
\]
where
\[
c(\theta)=\frac{\theta^2(\theta^2-5\theta+5)}{1+\theta-\theta^2}.
\]
\end{corollary}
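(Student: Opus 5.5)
The plan is to rerun the proof of Theorem~\ref{thm:random-symmetric-eleven-twentieths}, separating the interior range $0<\theta<1$ from the two degenerate endpoints. For $0<\theta<1$, Lemma~\ref{lem:canonical-concentration} already holds for Bernoulli$(\theta)$ change bits with arbitrary initial label $\xi_0$. It gives
\[
 C_n(F)/2^n\xrightarrow{\Pr}\mu(\theta),
\]
together with $\Pr(\mathcal E_n)\to1$, where $\mathcal E_n$ is the event that every component of $G_F$ has fewer than $\lceil3\log_2n\rceil$ vertices. If $\xi_0$ is random, I will condition on it. The bounds in that lemma do not depend on $\xi_0$, since $G_F$, $\mathcal H$ and $C_n(F)$ are functions of $D_1,\ldots,D_n$ alone, so the same limits hold unconditionally.

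On $\mathcal E_n$, Lemma~\ref{lem:canonical-weight-comparison} with $K_0=3$ gives
\[
 0\leq C_n(F)-W_n(F)\leq\varepsilon_n(3)2^n,
\]
and Theorem~\ref{thm:symmetric-path-approximation} gives
\[
 0\leq\BNN(F)-W_n(F)\leq K2^n\log_2n/n.
\]
Combining these yields $\BNN(F)/2^n-C_n(F)/2^n\to0$ in probability. Lemma~\ref{lem:canonical-density} identifies the limit as $\mu(\theta)=c(\theta)$. Because the ratio $\BNN(F)/2^n$ lies in $[0,1]$, convergence in probability to a constant implies convergence of the expectations by bounded convergence.

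The endpoints are deterministic. If $\theta=0$, then every $D_i=0$ almost surely and $F$ is constant, so $\BNN(F)=1$ and $\BNN(F)/2^n\to0=c(0)$. If $\theta=1$, then every $D_i=1$ and $F$ is parity or its complement. The observation after Proposition~\ref{prop:edge-closure} gives $\BNN(F)=2^n$, and $c(1)=1\cdot1/1=1$.

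I expect the main obstacle to be checking that the formula for $\mu$ in Lemma~\ref{lem:canonical-density} and the probabilistic lemmas really cover every $\theta\in(0,1)$, not just $\theta=1/2$. The places to check are these.
\begin{itemize}
\item[(i)] Each component extension has probability $\theta(1-\theta)\leq1/4$, so components are finite almost surely and the tail bound $n4^{-(R-1)}$ holds uniformly in $\theta$.
\item[(ii)] The inclusion--exclusion for $\mathcal H$ and the geometric series $\sum_r u^{2r-1}$ were derived for general $\theta$. The algebraic simplification to $\theta^2(\theta^2-5\theta+5)/(1+\theta-\theta^2)$ also needs checking; the denominator is positive on $[0,1]$.
\item[(iii)] Lemma~\ref{lem:canonical-weight-comparison} is deterministic, so its applicability does not depend on $\theta$ at all.
\end{itemize}
Once these points are confirmed, the argument carries over verbatim.
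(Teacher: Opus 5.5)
Your proposal is correct and is essentially the paper's own proof. For $0<\theta<1$ you chain Lemmas~\ref{lem:canonical-density}, \ref{lem:canonical-concentration} and~\ref{lem:canonical-weight-comparison} (with $K_0=3$) with Theorem~\ref{thm:symmetric-path-approximation} and bounded convergence, and you treat $\theta\in\{0,1\}$ directly via constancy and the parity observation after Proposition~\ref{prop:edge-closure}. Your remark about a random $\xi_0$ is harmless, since $G_F$, $W_n(F)$, $C_n(F)$ and, by complementation symmetry, $\BNN(F)$ depend only on $D_1,\ldots,D_n$.
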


\begin{proof}
For $0<\theta<1$, Lemma~\ref{lem:canonical-density} gives
$\mu(\theta)=c(\theta)$. Lemma~\ref{lem:canonical-concentration} then gives
$C_n(F)/2^n\xrightarrow{\Pr}c(\theta)$ and
$\Pr(\mathcal E_n)\to1$. On $\mathcal E_n$,
Lemma~\ref{lem:canonical-weight-comparison}, with $K_0=3$, gives
$W_n(F)=C_n(F)+o(2^n)$. Theorem~\ref{thm:symmetric-path-approximation}
therefore proves convergence in probability of $\BNN(F)/2^n$, and
boundedness gives convergence of its expectation. At $\theta=0$, the
function is constant, so $\BNN(F)/2^n=2^{-n}\to0=c(0)$. At $\theta=1$,
it is parity or its complement, whose Boolean complexity is $2^n$ by
Proposition~\ref{prop:edge-closure}. Hence $\BNN(F)/2^n=1=c(1)$.
\end{proof}

\section{Open problems}
\label{sec:open-problems}

\paragraph{Counting and $k$-nearest-neighbour representations.}
Between the almost-all lower bound obtained by real-prototype counting
and the universal upper bound of \cite{HLT22} there remains a factor
$O(n)$, and it would be interesting to reduce it. Corollary~\ref{cor:almost-all-BNN}
gives an almost-all lower bound, with constant $c_\ast$, that holds
simultaneously for Boolean $k$-nearest-neighbour representations for every
$k$. The largest constant for which such a simultaneous bound holds is not
known. For $k>1$, the minimum size of a Boolean $k$-nearest-neighbour
representation is also unknown for many particular functions, including
$TH_t^n$.

\paragraph{Covering and monotone extensions.}
For symmetric threshold functions, outside the exact cases of
Section~\ref{sec:thresholds}, the problem is that of determining
$\Cover(n,2r-1,r)$. Even with $q=n-2r+1\geq4$ fixed,
Corollary~\ref{cor:fixed-q} establishes only that
$\BNN(TH_t^n)=\Theta(n)$; its lower and upper bounds give different constant
multiples of $n$, so they do not yield an asymptotic formula.

For a non-constant monotone function $f$,
Theorem~\ref{thm:general-disjunctive-lift} shows that if
\[
 \BNN(f)=\max\{1,c(f)\}+1,
\]
then
\[
 \BNN(z\vee f)=\BNN(f)+1.
\]
This sufficient condition is not necessary, as the example following
Corollary~\ref{cor:orand-parity} shows. It remains to characterise all
non-constant monotone functions $f$ for which
$\BNN(z\vee f)=\BNN(f)+1$.

Theorem~\ref{thm:one-negative} determines the minimum size of a
representation with exactly one negative prototype, whenever such a
representation exists. It does not determine whether this minimum equals
$\BNN(f)$, or whether allowing additional negative prototypes reduces the
total size.

\paragraph{Symmetric functions.}
Let $\Delta_n$ be the maximum of $\BNN(f)-W_n(f)$ over symmetric Boolean
functions on $\cube$. Theorem~\ref{thm:symmetric-path-approximation} and
Proposition~\ref{prop:three-layer-gap} give
\[
 \left(\frac23-o(1)\right)\frac{2^n}{n}
 \leq\Delta_n\leq K\frac{2^n\log_2 n}{n}.
\]
The lower bound is witnessed by the explicit three-layer family in
Proposition~\ref{prop:three-layer-gap}. Thus the uniform approximation
error is optimal up to a logarithmic factor. It remains to determine the
exact order of $\Delta_n$, in particular whether the logarithmic factor
in the upper bound can be removed, and to characterise the symmetric
functions for which $\BNN(f)=W_n(f)$.

On the real-prototype side, Kilic, Sima, and Bruck conjecture that
$\NN(f)=I(f)$ for every symmetric Boolean function $f$, where $I(f)$ is the
number of maximal constant runs of Hamming layers \cite{KSB23}. For a
uniformly random symmetric function $F$, this would imply that $\NN(F)-1$
has the binomial distribution $\operatorname{Bin}(n,1/2)$, and hence that
$\NN(F)=(1/2+o(1))n$ with probability tending to one. Proving even this
typical asymptotic statement would make the real-prototype side of the
comparison in Theorem~\ref{thm:random-symmetric-eleven-twentieths}
asymptotically exact.

\section*{Acknowledgements}

The author is grateful to Gy\"orgy Tur\'an for his feedback.
During the preparation of this work, the author used OpenAI ChatGPT and
Anthropic Claude to assist with literature searches, the
development and checking of some mathematical arguments, and improvements to the
clarity and readability of the manuscript. The author assumes full
responsibility for all content.


\begin{thebibliography}{99}

\bibitem{BKL94}
B. Bollob\'as, Y. Kohayakawa, and T. \L uczak,
\emph{On the Evolution of Random Boolean Functions},
in P. Frankl, Z. F\"uredi, G. Katona, and D. Mikl\'os (eds.),
\emph{Extremal Problems for Finite Sets},
Bolyai Society Mathematical Studies 3,
J\'anos Bolyai Mathematical Society, Budapest, 1994, 137--156.

\bibitem{CG25}
O. \v{C}epek and J. Gli\v{s}i\'c,
\emph{Boolean Nearest Neighbor Language in the Knowledge Compilation Map},
Proceedings of the 22nd International Conference on Principles of Knowledge
Representation and Reasoning, 2025, 240--249.

\bibitem{Chernoff52}
H. Chernoff,
\emph{A measure of asymptotic efficiency for tests of a hypothesis based on
the sum of observations},
The Annals of Mathematical Statistics 23 (1952), 493--507.

\bibitem{CM92}
V. Chv\'atal and C. McDiarmid,
\emph{Small transversals in hypergraphs},
Combinatorica 12 (1992), 19--26.

\bibitem{CMK19}
M. Csik\'os, N. H. Mustafa, and A. Kupavskii,
\emph{Tight Lower Bounds on the VC-dimension of Geometric Set Systems},
Journal of Machine Learning Research 20 (2019), no.~81, 1--8.

\bibitem{DPR25}
M. DiCicco, V. Podolskii, and D. Reichman,
\emph{Nearest Neighbor Complexity and Boolean Circuits},
16th Innovations in Theoretical Computer Science Conference,
LIPIcs 325 (2025), 42:1--42:23.

\bibitem{ES74}
P. Erd\H os and J. Spencer,
\emph{Probabilistic Methods in Combinatorics},
Academic Press, New York, 1974.

\bibitem{FH58}
M. K. Fort, Jr.\ and G. A. Hedlund,
\emph{Minimal coverings of pairs by triples},
Pacific Journal of Mathematics 8 (1958), 709--719.

\bibitem{GJ95}
P. W. Goldberg and M. R. Jerrum,
\emph{Bounding the Vapnik--Chervonenkis Dimension of Concept Classes
Parameterized by Real Numbers},
Machine Learning 18 (1995), 131--148.

\bibitem{GK19}
I. A. D. Gunn and L. I. Kuncheva,
\emph{Bounds for the VC Dimension of 1NN Prototype Sets},
arXiv preprint, 2019,
arXiv:1902.02660.

\bibitem{GKP95}
D. M. Gordon, O. Patashnik, and G. Kuperberg,
\emph{New constructions for covering designs},
Journal of Combinatorial Designs 3 (1995), 269--284.

\bibitem{GS07}
D. M. Gordon and D. R. Stinson,
\emph{Coverings},
in C. J. Colbourn and J. H. Dinitz (eds.),
\emph{Handbook of Combinatorial Designs}, second edition,
Chapman \& Hall/CRC, 2007, 365--372.

\bibitem{HLT06}
P. Hajnal, Z. Liu, and G. Tur\'an,
\emph{Nearest neighbor representations of Boolean functions},
9th International Symposium on Artificial Intelligence and Mathematics,
Article P44, 8 pp., AAAI Press, 2006.

\bibitem{HLT22}
P. Hajnal, Z. Liu, and G. Tur\'an,
\emph{Nearest neighbor representations of Boolean functions},
Information and Computation 285 (2022), 104879.

\bibitem{Hoeffding63}
W. Hoeffding,
\emph{Probability inequalities for sums of bounded random variables},
Journal of the American Statistical Association 58(301) (1963), 13--30.

\bibitem{IKI94}
M. Inaba, N. Katoh, and H. Imai,
\emph{Applications of weighted Voronoi diagrams and randomization to
variance-based $k$-clustering},
Proceedings of the Tenth Annual Symposium on Computational Geometry,
332--339, Association for Computing Machinery, 1994.

\bibitem{JLR00}
S. Janson, T. \L uczak, and A. Ruci\'nski,
\emph{Random Graphs},
Wiley-Interscience Series in Discrete Mathematics and Optimization,
Wiley, New York, 2000.

\bibitem{Keevash14}
P. Keevash,
\emph{The existence of designs},
preprint,
arXiv:1401.3665
(2014; latest revision 2024).

\bibitem{KSB23}
K. M. Kilic, J. Sima, and J. Bruck,
\emph{On the Information Capacity of Nearest Neighbor Representations},
2023 IEEE International Symposium on Information Theory,
1663--1668, 2023.

\bibitem{KSB24C}
K. M. Kilic, J. Sima, and J. Bruck,
\emph{Nearest Neighbor Representations of Neural Circuits},
2024 IEEE International Symposium on Information Theory,
3077--3082, 2024.

\bibitem{KSB24N}
K. M. Kilic, J. Sima, and J. Bruck,
\emph{Nearest Neighbor Representations of Neurons},
arXiv preprint, 2024,
arXiv:2402.08748.

\bibitem{Lovasz75}
L. Lov\'asz,
\emph{On the ratio of optimal integral and fractional covers},
Discrete Mathematics 13 (1975), 383--390.

\bibitem{PW25}
Y. Polyanskiy and Y. Wu,
\emph{Information Theory: From Coding to Learning},
Cambridge University Press, 2025.

\bibitem{Rodl85}
V. R\"odl,
\emph{On a packing and covering problem},
European Journal of Combinatorics 6 (1985), 69--78.

\bibitem{Saposhenko75}
A. A. Sapozhenko,
\emph{Geometric structure of almost all Boolean functions},
Problemy Kibernetiki 30 (1975), 227--261 (in Russian).

\bibitem{Warren68}
H. E. Warren,
\emph{Lower bounds for approximation by nonlinear manifolds},
Transactions of the American Mathematical Society 133(1) (1968), 167--178.

\bibitem{Weber83}
K. Weber,
\emph{Subcubes of random Boolean functions},
Elektronische Informationsverarbeitung und Kybernetik
19 (1983), 365--374.

\end{thebibliography}
\end{document}